\def\01{\{0,1\}}
\newcommand{\eps}{\varepsilon}
\newcommand{\ket}[1]{|#1\rangle}
\newcommand{\norm}[1]{\mbox{$\parallel{#1}\parallel$}}
\newcommand{\Cc}{{\mathcal C}} %
\newcommand{\Exp}{\mathbb{E}}
\newcommand{\A}{\ensuremath{\mathcal{A}}}
\newcommand{\F}{\ensuremath{\mathcal{F}}}
\newcommand{\RF}{\ensuremath{\mathcal{RF}}}
\newcommand{\De}{\ensuremath{\mathcal{D}}}
\newcommand{\R}{\ensuremath{\mathbb{R}}}
\newcommand{\Mod}[1]{\ (\mathrm{mod}\ #1)}
\newcommand{\mbf}[1]{\ensuremath{\mathbf{#1}}}
\newcommand{\biaslearning}{\beta}
\newcommand{\query}{\mathsf{q}}
\DeclareMathOperator{\Kpub}{K_{pub}}
\DeclareMathOperator{\Kpriv}{K_{priv}}
\DeclareMathOperator{\ideal}{ideal}
\DeclareMathOperator{\real}{real}
\newcommand{\Hi}{\ensuremath{\mathcal{H}}}
\DeclareMathOperator{\poly}{poly}
\DeclareMathOperator{\polylog}{polylog}
\DeclareMathOperator{\superpoly}{superpoly}
\newcommand{\Fe}{\ensuremath{\mathcal{F}}}
\newtheorem{result}{Result}
\newtheorem{theorem}{Theorem}[section]
\newtheorem{definition}[theorem]{Definition}
\newtheorem{fact}[theorem]{Fact}
\newtheorem{lemma}[theorem]{Lemma}
\newtheorem{corollary}[theorem]{Corollary}
\newtheorem{claim}[theorem]{Claim}
\newtheorem{assumption}{Assumption}
  \newcommand{\pmset}[1]{\{-1,1\}^{#1}} %
\def\01{\{0,1\}}
\newcommand{\pr}{\operatorname{Pr}}
\def\ints{\mathbb{Z}}
\newcommand{\supp}{\mathrm{supp}}
\DeclareMathOperator{\negl}{negl}
\newcommand{\NCz}{\ensuremath{\mathrm{NC}^0}}
\newcommand{\ACz}{\ensuremath{\mathsf{AC}^0}}
\newcommand{\TCz}{\ensuremath{\mathsf{TC}^0}}
\newcommand{\LWE}{\ensuremath{\mathsf{LWE}}}
\newcommand{\LWR}{\ensuremath{\mathsf{LWR}}}
\newcommand{\RLWE}{\ensuremath{\mathsf{RLWE}}}
\newcommand{\RLWR}{\ensuremath{\mathsf{RLWR}}}
\newcommand{\AND}{\ensuremath{\mathsf{AND}}}
\newcommand{\OR}{\ensuremath{\mathsf{OR}}}
\newcommand{\NOT}{\ensuremath{\mathsf{NOT}}}
\newcommand{\LWEPKE}{\ensuremath{\mathsf{LWE-PKE}}}
\newcommand{\RLWEPRF}{\ensuremath{\mathsf{RLWE\!-\! PRF}}}
\newcommand{\oBRPRF}{\ensuremath{\mathsf{1B\!-\! RPRF}}}
\newcommand{\RLWRPRF}{\oBRPRF}
\newcommand{\indic}[1]{\left\llbracket #1 \right\rrbracket}
\newcommand{\Adv}{\mathsf{Adv}}
\DeclareMathOperator{\msb}{msb}
\def\complex{\mathbb{C}}
\def\natural{\mathbb{N}}
\def\ints{\mathbb{Z}}
\newcolumntype{M}[1]{>{\centering\arraybackslash}m{#1}}
\newenvironment{proofof}[1]{\par\noindent{\bf Proof of #1.}\quad}{\hfill  $\Box$}
\newenvironment{sproof}{\par\noindent\textbf{Proof Sketch.}\quad}{\hfill $\Box$} %
\newenvironment{proof}
{\noindent {\bf Proof. }}
{{\hfill $\Box$}\\
\smallskip}
\newcommand{\U}{\ensuremath{\mathcal{U}}}
\newcommand{\DNF}{\ensuremath{\mathsf{DNF}}}
\newcommand{\EX}{\mbox{\rm EX}}
\newcommand{\QEX}{\mbox{\rm QEX}}
\newcommand{\MQ}{\mbox{\rm MQ}}
\newcommand{\QMQ}{\mbox{\rm QMQ}}
\newcommand{\MAJ}{\mbox{\rm MAJ}}
\newcommand{\enc}{\mbox{\rm Enc}}
\newcommand{\dec}{\mbox{\rm Dec}}
\renewcommand\thmcontinues[1]{Formal Statement}
\newcommand{\torp}[2]{\texorpdfstring{#1}{#2}}
 \definecolor{greenn}{rgb}{0.2,0.6,0.2}
\definecolor{bluue}{rgb}{0.7,0,0.3}
\begin{document}

\setlength{\abovedisplayskip}{1pt}
\setlength{\belowdisplayskip}{1pt}
\interfootnotelinepenalty=10000

\title{Quantum hardness of learning shallow classical circuits}
\author{
	Srinivasan Arunachalam\thanks{Center for Theoretical Physics, MIT. Funded by the MIT-IBM Watson AI Lab under the project {\it Machine Learning in Hilbert space}. arunacha@mit.edu} 
	\and
	Alex B. Grilo~\thanks{CWI and QuSoft, Amsterdam, The Netherlands. Supported by ERC Consolidator Grant 615307-QPROGRESS. alexg@cwi.nl}
	\and
	Aarthi Sundaram~\thanks{Joint Center for Quantum Information and Computer Science, University of Maryland, USA. Supported by the Department of Defense. aarthims@umd.edu}
}
\date{}
\maketitle

\begin{abstract}
{In this paper, we study the quantum learnability of constant-depth
classical circuits under the uniform distribution and in the
distribution-independent framework of PAC learning. 
In order to attain our results, we 
establish connections between quantum learning and quantum-secure
cryptosystems. We then achieve the following
results. \medskip

	\begin{enumerate}
    \item \textbf{Hardness of learning $\ACz$ and $\TCz$ under the uniform
      distribution.} Our first result concerns the concept class $\TCz$ (resp.~$\ACz$), the class of
      constant-depth and polynomial-sized circuits with unbounded fan-in 
      majority gates (resp.~$\mathsf{\AND}, \mathsf{OR}, \mathsf{NOT}$ gates). We show the following:  
           \vspace{5pt}
      \begin{itemize}
       \item if there exists no quantum (quasi-)polynomial-time algorithm to solve the
         Ring-Learning with Errors ($\RLWE$) problem, then there exists no
         (quasi-)polynomial-time
      quantum learning algorithm for $\TCz$; and
      \vspace{5pt}
      
      	\item if there exists no $2^{O(d^{1/\eta})}$-time quantum algorithm to solve
          $\RLWE$ with dimension $d  = O(\polylog n)$ (for every constant $\eta > 2$), then there exists no $O(n^{ \log^{\nu} n} )$-time quantum learning algorithm for $\poly(n)$-sized $\ACz$ circuits (for a constant $\nu>0$), matching the \emph{classical} upper bound of Linial et al.~\cite{LMN93}
      \end{itemize}
  \vspace{0.8em}
  
     where the learning algorithms are under the uniform distribution (even with access to
      quantum membership queries). The main technique in these results uses an
      explicit family of pseudo-random functions that are believed to be quantum-secure to
      construct concept classes that are hard to learn quantumly under the
      uniform~distribution.
		  \medskip
		
	  \item \textbf{Hardness of learning $\TCz_2$ in the PAC setting.} Our second
      result shows that if there exists no quantum polynomial-time algorithm for
      the $\LWE$ problem, then there exists no polynomial-time quantum-PAC learning
      algorithm  for the class $\TCz_2$, i.e., depth-$2$ $\TCz$ circuits.
      The main technique in this result is to establish a connection between the quantum security of public-key cryptosystems and the learnability of a concept class that consists of decryption functions of the~cryptosystem.
	\end{enumerate}
	\medskip

 Our results show that quantum resources do not give an \emph{exponential}  improvement to learning constant-depth polynomial-sized neural networks. This also gives a strong (conditional) negative answer to one of the
``Ten Semi-Grand Challenges for Quantum Computing Theory" raised by Aaronson~\cite{Aaronson-blog}.	
}
\end{abstract}

\pagenumbering{gobble}
\clearpage
\pagenumbering{arabic}
\setcounter{page}{1}

\setcounter{tocdepth}{2}
\tableofcontents

\section{Introduction}
Machine learning is a diverse field of research with many real-world
applications and has received tremendous attention in the past decade.  From a
theoretical perspective, since the seminal paper of
Valiant~\cite{valiant:paclearning}, there has been a lot of theoretical effort
in considering different learning models that formalize what we mean by {\em
learning} and understanding which
problems can (or cannot) be \emph{efficiently} learned within  these models.

In recent years, due to the considerable development of quantum computation (both on the
theoretical and experimental fronts), there has been an increased focus on understanding the tasks for which quantum computers can offer
speedups. Machine learning tasks have emerged as a candidate in this respect. To this end, results in \emph{quantum learning theory} aim to  identify learning problems for which 
quantum computers \emph{provably}  provide a (significant) advantage.

More concretely, in learning theory, the goal is to devise a learning algorithm
(or learner) for a set of functions which is called a \emph{concept
class}. The functions in the concept class $\Cc$ are referred to as \emph{concepts}. In this paper, we will consider (without loss of generality) concepts that are Boolean functions $c:\01^n\rightarrow \01$.  The learner is provided with \emph{examples} of the form $(x, c(x))$, where $c$ is an \emph{unknown} concept lying in $\Cc$ and $x$
is picked uniformly at random from $\01^n$ and the goal of the learner is to
\emph{learn $c$}, i.e., it should output a hypothesis~$h$ that is close to
$c$.\footnote{More formally, this is referred to as the uniform-distribution
learning setting. In learning theory there are several variations of this model
that we skip for the sake of simplicity.  See \Cref{sec:intro-learning}
for a brief introduction and \Cref{sec:learningmodels} for more details.}
We say that a \emph{learner $\A$ learns $\Cc$}, if for every $c\in \Cc$, $\A$ learns $c$.  In learning theory, the intent is to devise \emph{efficient} learning algorithms for an interesting concept class $\Cc$, i.e., the learner should use few examples and not too much time in order to learn $\Cc$. 

In {\em quantum} learning theory, the goal is still to learn concept classes
$\Cc$, but now with the access to quantum resources. Bshouty and Jackson~\cite{bshouty:quantumpac} introduced a quantum learning model wherein the learner is a \emph{quantum} algorithm and is provided with \emph{quantum examples} of the~form
$$
\frac{1}{\sqrt{2^n}}\sum_{x \in\01^n} \ket{x,c(x)}
$$ 
for some concept $c\in \Cc$. In order to see why quantum examples generalize
classical examples, observe that a quantum learner could choose to measure the
quantum example in the computational basis, which results in a classical example
$(x,c(x))$ for a uniformly random $x\in \01^n$.  The advantage of quantum learning  usually comes from the fact that 
one can perform arbitrary unitary operations on these quantum examples, 
enabling one to improve sample or time complexity for learning the concept class $\Cc$.

The first example of a quantum advantage for a learning problem was showed by
Bernstein and Vazirani~\cite{bernstein&vazirani:qcomplexity}. They showed how to learn the concept class of linear functions $\Cc=\{c(x)=\sum_i s_i x_i \mod 2: s\in \01^n\}$ with a constant
number of quantum examples. Classically, in order to learn this concept class
efficiently, it is necessary and sufficient to obtain $\widetilde{\Theta}(n)$
examples. With these many examples a classical learner can use Gaussian elimination and learn
the unknown target concept in polynomial time.

Subsequently, Bshouty and Jackson~\cite{bshouty:quantumpac} showed that the
class of Boolean functions that can be represented as polynomial-sized $\DNF$ formulas can be learned in quantum polynomial time.\footnote{A $\DNF$ formula is a disjunction of conjunctions of variables and their negations.} A crucial ingredient for their quantum learning algorithm was the ability to perform Fourier sampling using quantum examples (which we discuss in \Cref{sec:poweroffouriersampling}). Classically, Verbeurgt~\cite{verbeurgt:learningdnf} showed that $\DNF$s can be learned in quasi-polynomial time using classical examples and this has remained the state-of-the art for the last thirty years! This emphasizes the power of quantum examples for learning. There have been many other instances (which we discuss in \Cref{sec:poweroffouriersampling}) where quantum examples give an advantage for quantum learning algorithms. 
A natural follow-up question to the quantum-efficient learnability of $\DNF$ formulas, i.e., depth~-$2$ circuits, is:
\begin{quote} 
	Can the concept class of \emph{shallow, i.e., constant-depth, classical circuits} be learned more \emph{efficiently} using \emph{quantum resources} as compared to classical resources?
\end{quote}
In particular, are there efficient (i.e., polynomial-time) quantum learning
algorithms for all of the Boolean functions that can be represented by $\ACz$ circuits, i.e., constant-depth circuits with {\em unbounded} fan-in $\AND, \OR, \NOT$ gates? More ambitiously, can we quantum-efficiently learn $\TCz$, i.e., constant-depth circuits with majority gates.\footnote{A majority gate on $n$ bits outputs $1$ if $\lfloor n/2\rfloor $+1 bits evaluate to $1$ and $0$ otherwise.} This question was raised by Aaronson~\cite{Aaronson-blog} as one of the \emph{``Ten Semi-Grand Challenges for Quantum Computing Theory"}.

We notice that understanding the
learnability of $\TCz$,
apart from being theoretically important, also sheds light on the question: can
quantum computers help learn the weights of neural networks faster? It is
well-known~\cite{maas:sigmoidboolean,goldmann:thresholdneural} that
constant-depth polynomial-sized feed-forward neural networks, where the weights of neurons are bounded by a polynomial in the input size, can be implemented by $\TCz$ circuits.

In this work we address this question and give evidence that efficient quantum
algorithms \emph{do not exist} for learning $\TCz$. More
concretely, under the assumption that the Ring-Learning with Errors problem
($\RLWE$)~\cite{LPR10}  cannot be solved in (quasi-)polynomial time by quantum
computers, the class of $\TCz$ functions cannot be weak-learned in quantum
(quasi-)polynomial
time. Also, under the less-standard assumption that $\RLWE$ cannot be solved in \emph{strongly sub-exponential time}, by quantum computers, we show that there exists a constant $\nu > 0$ such that there is no $n^{O(\log^\nu n)}$-time quantum learning algorithm for $\ACz$. Here, the ``strongly sub-exponential" time hardness for $\RLWE$ is defined as follows: for every constant $\eta > 2$, the $d$-dimensional $\RLWE$ problem cannot be solved in $2^{O(d^{1/\eta})}$-time by quantum computers.
$\RLWE$ is one of the leading candidates for efficient post-quantum cryptography and it is believed to be hard to solve,
even for quantum computers. For instance, the current best-known {quantum algorithms for it}~\cite{Sta19} {do not, asymptotically, perform any better than the lattice reduction techniques used for the Learning with Errors problem}~\cite{CN11,LLL82}.
\vspace{-0.8em}

\subsection{Learning models}
\label{sec:intro-learning}
In this section, we first recall the definitions of the classical and quantum learning models. For a detailed introduction to these models, we refer the reader to \Cref{sec:learningmodels}.

\vspace{-1em}
\paragraph{Classical learning model.} In the Probably Approximately Correct
(PAC) model of learning proposed by Valiant~\cite{valiant:paclearning}, a \emph{concept class} $\Cc$ is a subset of Boolean functions, i.e., $\Cc\subseteq \{c:\01^n\rightarrow \01\}$ and every
element $c:\01^n\rightarrow \01$ in $\Cc$ is referred to as a \emph{concept} . The goal of
the learning algorithm $\A$ is to learn an unknown \emph{target concept} $c\in
\Cc$ given \emph{labeled examples} of the form $(x,c(x))$ where $x$ is drawn
from an \emph{unknown} distribution $D:\01^n\rightarrow [0,1]$.  We say  that a learning
algorithm \emph{$\A$ learns $\Cc$}  if it satisfies the following: for every $c\in \Cc$
and every distribution $D:\01^n\rightarrow [0,1]$, with probability at least~$2/3$, $\A$ outputs a hypothesis $h:\01^n\rightarrow \01$ that satisfies
$\Pr_{x\sim D}[h(x)= c(x)]\geq
1-\eps$. 
{The advantage of the learner over a random guess is
$\beta := \frac{1}{2} - \frac{\eps}{2}$
and  $\beta$ is called the \emph{bias} of the learner. The learner
$\A$ \emph{properly} learns the concept class $\Cc$ if its output hypothesis is
always in the concept class, i.e., $h \in \Cc$. Otherwise, $\A$ is an improper
learner for $\Cc$.} Similarly, $\A$ is a \emph{weak learner} if $\beta=n^{-c}$ for some constant $c>0$ {and $\A$ is a \emph{strong learner} if $\beta \geq 1/6$}. In this paper all our lower bounds will be for \emph{improper
learners} {and, unless explicitly stated, for weak learners}. 

The sample complexity of a learning algorithm $\A$ is the worst-case number of
labeled examples it uses and the time complexity of $\A$ is the worst-case
running time (where the worst-case is taken with respect to the hardest concept
$c\in \Cc$ and distribution $D$). The sample/time complexity of a concept class $ \Cc$ is the
sample/time complexity  of the most efficient learning algorithm for $\Cc$. 

 In this work, we also consider learning models that relax the PAC learning framework in two ways. First we allow the learner to make \emph{membership queries} to a target concept $c$, i.e., $\A$ is allowed to ask ``what is $c(x)$" for an arbitrary $x$ of its choice. Second, instead of the learner succeeding under \emph{every} distribution $D$,  we consider the learnability of $\Cc$ when $D$ is fixed and known to the learner. We say that $\A$ PAC-\emph{learns} $\Cc$ under $D$ with membership queries if: for every $c\in \Cc$, with probability~$\geq 2/3$,~$\A$, takes labeled examples and makes membership queries, and outputs a hypothesis $h$ such that $\Pr_{x\sim D}[h(x)\neq c(x)]\leq\eps$. 
 \vspace{-1em}
\paragraph{Quantum learning model.} Bshouty and Jackson~\cite{bshouty:quantumpac} introduced the model of quantum-PAC  learning, which naturally generalizes the classical PAC model. Here, a {\em quantum} learning algorithm $\A$ has to learn the unknown concept $c\in
\Cc$ given \emph{quantum} examples of the form 
$$
\sum_x \sqrt{D(x)}\ket{x,c(x)},
$$ 
where $D :\01^n\rightarrow [0,1]$ is an unknown distribution. The goal of
the quantum learner and the notion of sample and time complexities are analogous to
the classical model.

As described in the classical setting, we also consider the model where the quantum learning algorithm is given access to an oracle	$O_c:\ket{x,b}\rightarrow \ket{x,b\oplus c(x)}$ that allows it to make \emph{quantum membership queries}. Additionally, instead of requiring the learner to succeed for \emph{all} distributions~$D$, we also consider quantum learners that learn $\Cc$ under a fixed distribution $D$. For quantum learning algorithms, a natural choice of $D$ is the uniform distribution over $\01^n$, and in this case a quantum example is given by $\frac{1}{\sqrt{2^n}}\sum_x\ket{x,c(x)}$.  We discuss the utility of such examples in the next~section. We say a learner $\A$ \emph{uniform quantum-PAC learns $\Cc$} with membership queries if: for every $c\in \Cc$, with probability $\geq 2/3$, $\A$ uses uniform quantum examples and quantum membership queries to output a hypothesis $h$ such that $\Pr_{x\sim D}[h(x)\neq c(x)]\leq\eps$.  
\vspace{-0.8em}

\subsection{Strengths of quantum examples under the uniform distribution}
\label{sec:poweroffouriersampling}

One of the main tools in quantum learning theory, when considering learning
under the uniform distribution, is the ability to efficiently perform {\em Fourier sampling}. In order to explain it, we first introduce the following. For a Boolean function $c:\01^n\rightarrow \pmset{}$, the Fourier coefficients of $c$ are given by $\widehat{c}(S)=\frac{1}{2^n}\sum_S c(x)
(-1)^{x\cdot S}$. The \emph{Fourier distribution} $\{\widehat{c}(S)^2:S\in \01^n\}$ is given by the squared Fourier coefficients of $c$ and they satisfy $\sum_S \widehat{c}(S)^2=1$. Fourier sampling refers to the task of \emph{sampling} from the Fourier distribution $\{\widehat{c}(S)^2\}_S$.

An advantage of having uniform quantum examples is that, using standard ideas from quantum information theory, a quantum learner can efficiently perform the operation
$$
\frac{1}{\sqrt{2^n}}\sum_x\ket{x,c(x)}\rightarrow \sum_S \widehat{c}(S)\ket{S}
$$
given $O(1)$ copies of $\frac{1}{\sqrt{2^n}}\sum_x\ket{x,c(x)}$. 
Measuring the resulting state allows a quantum learner to obtain a sample from the Fourier distribution.  Hence, using uniform quantum examples, one can sample from the Fourier distribution $\{\widehat{c}(S)^2\}_S$.
Classically, we do not know how to perform this sampling process (or even approximately sample) efficiently, since the Fourier coefficients $\widehat{c}(S)$ depends on $2^n$ values of~$x$.	Therefore, one avenue to obtain quantum speedups with uniform quantum examples arises from the use of Fourier sampling.

Indeed, quantum Fourier sampling has been profitably used in many applications.
This idea was first used in the aforementioned paper of Bernstein and Vazirani~\cite{bernstein&vazirani:qcomplexity}, where they observe that the Fourier support of linear functions is concentrated on a single point. Therefore, unknown linear functions can be learned using just one uniform quantum example. Fourier sampling was later used by Bshouty and Jackson~\cite{bshouty:quantumpac} to show that $\DNF$s can be learned quantum-{efficiently}. A classical analogue of this question is a long-standing open question.\footnote{With 
classical membership queries, $\DNF$
formulas can be learned in classical polynomial time~\cite{jackson:dnf}.} Kanade et al.~\cite{kanade:dnfs} extended the result of Bshouty and Jackson and showed how to learn $\DNF$s quantum-efficiently even
under product~distributions. 

In fact, a notorious bottleneck in classical learning theory is obtaining a polynomial time learning algorithm for the class of  $O(\log n)$-juntas\footnote{A $k$-junta is a Boolean function on $n$ bits, whose output only depends on $k$ out of the $n$ input bits.} which are a subset of polynomial-sized $\DNF$s. By contrast, At\i c\i\ and
Servedio~\cite{atici&servedio:testing} showed that $O(\log n)$-juntas can be learned quantum-efficiently under the uniform distribution. In fact the time-efficient
learnability of $O(\log n)$-juntas can be used to time-efficiently quantum learn the concept class $\NCz$ under the uniform distribution (for a proof of this, see \Cref{sec:nczeasy}).\footnote{$\NCz$ is the concept class of constant-depth circuits consisting of fan-in $2$ $\AND, \OR, \NOT$ gates.} Subsequently, Arunachalam et al.~\cite{arunachalam:ksparse} showed that the concept class of $k$-Fourier-sparse Boolean functions (which includes both $(\log k)$-juntas and linear functions) can be learned query-efficiently given quantum examples. 

One thing common to all these results was the application of the Fourier sampling to learn a concept class under the uniform distribution, which was key for the large quantum speedup. 
\vspace{-0.8em}

\paragraph{Related work.} In the context of learning, apart from the uniform distribution setting, some works have focused on understanding the power of quantum learning under \emph{arbitrary distributions}~\cite{servedio&gortler:equivalencequantumclassical,aticiservedio:pac,arunachalam:paclearning}. In particular, \cite{arunachalam:paclearning} showed that quantum examples do not provide an advantage over classical examples for PAC learning. 

Recently, both shallow circuits and the Learning with Errors problem have been used in different contexts to understand the capabilities of quantum computation. Grilo et al.~\cite{grilo:LWEeasy} showed that polynomially many \emph{quantum examples} suffice to solve the $\LWE$ problem in quantum polynomial time, while this remains a hard problem when given just \emph{classical} examples.  Bravyi et al.~\cite{bravyi2018quantum} exhibited a problem which can be solved using shallow quantum circuits but requires logarithmic depth classical circuits (with bounded fan-in) to solve. Bene Watts et al.~\cite{WKST19} improved their result by removing the ``bounded fan-in" assumption. In another context, under the assumption that the Learning with Errors problem is hard for quantum computers (given just classical examples),  Mahadev~\cite{mahadev:classicalverification} demonstrated a classical protocol to classically verify the result of an efficient quantum computation. 

\subsection{Our results}
In this paper we address two natural questions that arise from the work of Bshouty and Jackson~\cite{bshouty:quantumpac}, which showed the quantum-efficient learnability of depth-$2$ circuits.

The first question is, can the work of~\cite{bshouty:quantumpac} be extended 
to learn depth-$3$ circuits, or more generally, constant-depth polynomial-sized circuits (i.e., $\TCz$ and $\ACz$) in quantum polynomial time? Classically, in a seminal result, Linial, Mansour and Nisan~\cite{LMN93} constructed an $n^{O(\polylog n)}$-time learning algorithm for $\ACz$ by approximately learning the Fourier spectrum of an unknown $\ACz$ circuit. Subsequently,
Kharitonov~\cite{Kharitonov93} showed that their learning algorithm is optimal assuming that
factoring cannot be solved in quasi-polynomial time. Since factoring can be
solved in \emph{quantum} polynomial time with Shor's algorithm~\cite{Shor}, the lower
bound of Kharitonov doesn't apply to quantum learners. Moreover, since Fourier
sampling is easy quantumly (under the uniform distribution), it seems
plausible that one could efficiently learn important properties of the Fourier spectrum
of $\ACz$ circuits (similar to the work of Linial et al.~\cite{LMN93}). This
could possibly result in efficient quantum learning algorithms for $\ACz$ under
the uniform distribution (similar to many results discussed in the previous
section).
  
The second question is, can the work of~\cite{bshouty:quantumpac} be extended to
the class of depth-$2$ \emph{threshold} circuits (known as $\TCz_2$), i.e., can
$\TCz_2$ be learned quantum-efficiently? We notice that $\TCz$ circuits are {\em practically} very relevant  since
constant-depth polynomial-sized feed-forward neural networks with weights (of the neurons) bounded by some polynomial in the input size, can be implemented as circuits in $\TCz$~\cite{maas:sigmoidboolean,goldmann:thresholdneural}. If there were efficient quantum algorithms for learning $\TCz$, then it is plausible that quantum computers could give an enormous advantage in approximately learning the weights for neural networks.

In this paper, we give a conditional negative answer to both questions. In particular, we show that under the assumptions that support the security of current post-quantum
cryptography: $\TCz$ and $\ACz$ cannot be learned efficiently
by quantum computers under the uniform distribution (under different
assumptions); and 
$\TCz_2$ cannot be
PAC learned efficiently by quantum computers. We summarize these results in the table below.\footnote{$\mathsf{BQP}$, or \emph{bounded-error quantum polynomial time}, is the class of decision-problems that be solved in polynomial time on a quantum computer with bounded-error.}\\ %
\renewcommand\arraystretch{2}
\begin{table}[!ht]
	\scriptsize
	\centering
	\begin{tabular}{|>{\centering\arraybackslash}m{1.4in}|>{\centering\arraybackslash}b{1.1in}|>{\centering\arraybackslash}m{0.8in}|>{\centering\arraybackslash}m{1.6in}|@{}m{0pt}@{}}
		\hline
		\Centerstack{No learner in this model} & {Running in} & \Centerstack{For the \\ complexity class} & Assuming    \\\hline
		\multirow{1}{*}[-2em]{ \Centerstack{Uniform-distribution \\PAC\\ (with
    membership queries)} }&\Centerstack{$n^{O(\log^\nu n)}$ time for a\\ constant $\nu > 0$}&\Centerstack{$\ACz$}&\Centerstack{No strongly sub-exponential time \\ algorithm for $\RLWE$} \\
    	\cline{2-4}& {$2^{O(\polylog n)}$ time} &$\TCz$&\Centerstack{No quasi-polynomial time \\ %
    algorithm for $\RLWE$} \\
		\cline{2-4}& $\poly(n)$ time &$\TCz$&$\RLWE\notin \mathsf{BQP}$\\\hline %
		Distribution-free PAC& {$\poly(n)$ time} & $\TCz_2$  &  $\LWE\notin \mathsf{BQP}$  \\\hline %
	\end{tabular}
	\caption{\small Hardness of learning results for $\poly(n)$-sized circuits.}
	\label{tab:res}
\end{table}
These results give a strong negative answer to a question of
Aaronson~\cite{Aaronson-blog}, under cryptographic assumptions. Aaronson asked if, $\TCz$ and $\ACz$ can
	be \emph{quantum-PAC learned} in polynomial time. Our first result already
  gives a (conditional)
	negative answer even  when we fix the \emph{uniform distribution} and
  allow the learner to make quantum membership queries. Our second result gives a
	conditional refutation to the PAC learnability of $\TCz$ even when restricted to
	depth-$2$ threshold~circuits.

In order to achieve our results, we follow a strategy proposed by
Valiant~\cite{valiant:paclearning}, who showed the hardness of \emph{proper} learning  
the class of polynomial-sized circuits based on the security of
cryptographic objects. This strategy was subsequently improved upon to give
conditional \emph{improper} learning lower bounds and used to prove the classical hardness of various concept classes~\cite{Kharitonov:AC1,Kharitonov93,KearnsV94,KlivansS09}.
These results have the following structure: assuming there exists an efficient learning
algorithm for some concept class $\Cc$, there exists an adversary that is able
to break some cryptographic construction using the learning algorithm as a subroutine. Here, the adversary provides the resources to the learning algorithm based on the cryptographic primitive it is trying to break. This implies that if the cryptographic construction is secure, then such a  learning algorithm cannot exist.

In this paper, we quantize these well-studied classical proof-of-hardness
techniques. The difficulties in quantizing such results are three-fold. First, many of the classical hardness of learning results rely on cryptographic
primitives whose security is based on the hardness of factoring. As stated
previously, this would not hold in the quantum regime due to Shor's quantum
polynomial-time factoring algorithm~\cite{Shor}. {Second, the fact that adversaries can
efficiently create classical examples from some distribution $D$ does not imply
that {\em quantum examples} can be created according to the same distribution.
An important issue we run into in this case is solving the index-erasure problem, which is known to be hard to solve on quantum computers~\cite{ambainis:indexerasure,rosmansis:index}.
Finally, some of the hardness results implicitly use the fact that the learning
algorithm they are considering is classical and the proof techniques do not
follow through in the quantum setting. For example,
Kharitonov~\cite{Kharitonov93} uses collision arguments to bound the amount of
information retrieved by the learner, but this approach does not work quantumly. We discuss these issues in further detail in the next section.}

In subsequent sections, we delineate the connections between the hardness of
quantum learning and the security of certain cryptographic
primitives -- specifically, quantum-secure {family of pseudo-random
functions} and
public-key encryption schemes. Next, we sketch how to use these connections to
show hardness of quantum learning for some interesting concept classes. 

\vspace{-1em}

\subsubsection{Pseudo-random functions vs. quantum learning}
A {family of pseudo-random functions (PRF) are cryptographic objects}
that ``mimic" random functions.
More concretely, a PRF is a keyed-family of functions $\Fe = \{f_{\mathbf{k}} : \01^n \to
\01^{\ell} \text{ where } {\mathbf{k}} \in \01^M \}$ such that
no efficient adversary that is given query access to $f$ can distinguish
if $f$ was uniformly picked from $\Fe$ or a truly random function with non-negligible advantage
over a random guess.\footnote{By \emph{negligible advantage}, we mean: for every
$c>0$ the advantage is at most $1/n^c$.} 
If even polynomial time \emph{quantum} adversaries cannot distinguish such
cases with quantum query access to the function, then we say that the PRF is \emph{quantum-secure}.
We informally state our first result below (see
\Cref{cor:quantumsafePRF} for a full statement).

\begin{result}
  \label{result:prg}
  If there is a quantum-secure PRF $\Fe$,  then $\Fe$ does not have an efficient uniform quantum-PAC learner with membership~queries.
\end{result}

Kharitonov~\cite{Kharitonov93} established the connection between Pseudo-random
generators (PRGs) and classical learning by constructing a circuit class such
that the PRG is computed by the circuit class. He proceeded to show that if
a learning algorithm for such a concept class existed, then would be possible to
break the PRG. {This approach (implicitly) requires that while the
complexity of the PRG scales polynomially with the size of the seed, it achieves a super-polynomial stretch}
Such a requirement is satisfied, for instance, by the BBS PRG~\cite{BBSPRG}, whose security relies on hardness of
factoring.

Unfortunately, no post-quantum PRGs are known with this property.
Inspired by the ideas used in~\cite{Kharitonov93}, we overcome this by
considering the  connection between quantum-secure PRFs and we
prove Result~\ref{result:prg}.
We stress that the proof in~\cite{Kharitonov93} does not trivially quantize.
For instance, 
the crux in Kharitonov's proof is
determining the probability of collision in the classical examples, and this does
not make sense in the quantum regime. Clearly, each quantum example contains
information about every input. However, we can show that efficiently accessing this information
simultaneously for most $x$ is information-theoretically impossible. 

We now sketch a proof of \Cref{result:prg}. Let $\Fe = \{f_{\mathbf{k}} : \01^n \to \01
\text{ where } {\mathbf{k}} \in \01^M\}$.
By contradiction, let us assume there exists an efficient uniform-PAC quantum
learning algorithm for $\Fe$. Using this efficient learner, we show how to construct a
quantum distinguisher for the PRF $\Fe$. As $\Fe$ is assumed to be
quantum-secure, we obtain a contradiction and prove the result. More
concretely, let $f$ be a a function and suppose a distinguisher with access to
$f$ has to determine if it comes from $\Fe$ or it is a truly random function.
Consider a distinguisher that simulates a quantum learning algorithm $\A$ by
using his oracle to answer the learner's queries. %
$\A$ is supposed to output a hypothesis $h$ that approximates $f$. The
distinguisher then picks a random challenge $x^* \in \01^n$ and outputs $1$ if
and only if $h(x^*) = f(x^*)$. One technical aspect that we show here is that, if the
learning algorithm has bias $\biaslearning(n)$, then the distinguisher has an advantage $\geq \frac{\biaslearning(n)}{2}$ in 
distinguishing if $f$ comes from the pseudo-random function family or is a random
function. In particular, if
there is a polynomial time weak quantum learning algorithm for $\Fe$ under the uniform distribution, then there exists a polynomial-time quantum
distinguisher $\De$ and some constant $c>0$ that satisfies
\begin{align}
\label{eq:distinguisherinintro}
  \Big \vert \Pr_{f \in \Fe} [\De^{\ket{f}}(\cdot)]- \Pr_{f \in
  \mathcal{U}}[\De^{\ket{f}}(\cdot)] \Big\vert \geq \frac{1}{n^c},
\end{align}
where, in the first case, $f$ is picked uniformly at random from $\F$ and, in the
second case, $f$ is a uniformly random function. The key point in order to prove that a quantum learner with advantage
$\biaslearning(n)$ implies a distinguisher with advantage
$\frac{\biaslearning(n)}{2}$ is to bound the success probability of learning a
random function.

\subsubsection{Public-key encryption schemes vs. quantum learning}

A public-key encryption scheme consists of a triple of algorithms
(Key-generator, $\enc$, $\dec$). Key-generator is a randomized
algorithm that on input $1^n$ (where $n$ is a security parameter)
outputs a tuple $(\Kpub, \Kpriv)$, where $\Kpub$ is a publicly known key used to encrypt messages and $\Kpriv$ is a private key used to decrypt messages.
$\enc$ is a deterministic  algorithm that receives as input the public key $\Kpub$, some randomness $r$ and
a message $b\in \01$ and outputs $\enc(\Kpub, r,b)$, which we denote by
$\enc_{\Kpub}(r,b)$ for simplicity. $\dec$ receives as input the private key
$\Kpriv$ and $\enc_{\Kpub}(r^*,b^*)$ and outputs $c\in \01$ (we write $\dec_{\Kpriv}$  in order be explicit about the dependence of $\dec$ on $\Kpriv$). The
public-key encryption scheme is said to be correct if: for uniformly random values $r^*$ and~$b^*$
$\dec_{\Kpriv}\left(\enc_{\Kpub}(r^*,
b^*)\right) \ne b^*$ with negligible probability. 
An encryption scheme is (quantum) secure if,
given $\Kpub$ and $\enc_{\Kpub}(r^*,b^*)$, a (quantum) polynomial time adversary
can output $b^*$ with at most a negligible advantage over a random
guess.

We connect quantum-secure public-key encryption schemes to the
hardness of learning as follows (see \Cref{thm:cryptotolearn} for a full
statement). %
\begin{result}
  \label{result:encryption}
Let  $\mathsf{S}$ be a quantum-secure public-key cryptosystem. If $\Cc_{\mathsf{S}}$ is the concept class containing the decryption functions of the cryptosystem $\mathsf{S}$, then there is no efficient weak quantum-PAC learner for~$\Cc_{\mathsf{S}}$.%
\end{result}
The works of Kearns and Valiant~\cite{KearnsV94} and Klivans and
Sherstov~\cite{KlivansS09} provide a connection between public-key encryption schemes and
learning functions. They showed that if there exists a PAC learning algorithm
for a concept class that contains the decryption function $\dec_{\Kpriv}$, then it is
possible to predict $b^*$ from $\Kpub$ and $\enc_{\Kpub}(r^*,b^*)$ with a 
$1/\poly(n)$ bias. They prove this by simulating the learning
algorithm as follows: the distinguisher prepares examples of the form $(r,\enc_{\Kpub}(r,b))$ for (uniformly) random $r$ and $b$. Using the guarantees of the classical PAC learning algorithm and the correctness of the encryption scheme, they show that the hypothesis~$h$ output by the learner satisfies 
$$
\Pr_{r^*, b^*}\Big[h\Big(\enc_{\Kpub}(r^*,b^*)\Big)=b^*\Big] \geq
\frac{1}{2}+\frac{1}{n^c},
$$
for some $c>0$. 
In this paper, we quantize their argument, but the situation is much more intricate than in the classical case. Classically, $r$ and $b$ can be picked uniformly at random
at each step in order to create a new training example $(\enc_{\Kpub}(r,b),b)$.
Quantumly, however, we do not know of an efficient way to create a quantum example
$\frac{1}{\sqrt{2|\mathcal{R}|}}\sum_{r,b} \ket{\enc_{\Kpub}(r,b)}\ket{b} $, where
$\mathcal{R}$ is the space of the possible randomness. Notice that a
straightforward way of preparing this state involves solving the index-erasure
problem~\cite{ambainis:indexerasure,rosmansis:index}, which is
conjectured to be a hard problem to solve on a quantum computer. See \Cref{sec:quantum-safelearning} for more
details. 

{
	Instead, we first define a distribution $D$ as follows: pick $\poly(n)$-many
  uniformly random $(r,b)$ and let $D$ be the uniform distribution over
  $\{\enc_{\Kpub}(r,b)\}$ where the set ranges over the $\poly(n)$-many observed
  $(r,b)$. Our hope is to run the quantum learner on this distribution $D$
  (which we are allowed to since we assumed that it is a quantum-PAC learner).
  However, we run into an issue which~\cite{KlivansS09} need not worry about.
  Let $\enc_{\Kpub}(r^*,b^*)$ be the challenge string that the distinguisher
  needs to correctly decrypt to $b^*$. Observe that $\enc_{\Kpub}(r^*,b^*)$ need
  not even lie in the support of $D$, so running a quantum learner on the
  distribution $D$ might not even help the distinguisher in predicting $b^*$. In
  contrast, in the simulation argument of Klivans and Sherstov~\cite{KlivansS09}
  the pair $(\enc_{\Kpub}(r^*,b^*),b^*)$ is always in the support of the distribution, since $r$ and~$b$ are picked uniformly at random to create the classical example  $(\enc_{\Kpub}(r,b),b)$.

	 Ideally, we would like to use our PAC learner on a distribution $D'$ for which
$(\enc_{\Kpub}(r^*,b^*),b^*)$ is the support of $D'$. This would enable the distinguisher to use the guarantees of a quantum learner when run on $D'$.  The challenge here is that the distinguisher would need to find such a~$D'$ without prior knowledge of $\enc_{\Kpub}(r^*,b^*)$ and $b^*$!
}%
We circumvent this issue with the following observation: if two distributions are sufficiently close to each other, then the learner should perform ``essentially equivalently" on both distributions. In particular, we use a training distribution $D$ that is close enough to the testing distribution $D'$ containing $(\enc_{\Kpub}(r^*,b^*),b^*)$ so that the learning algorithm is unable to distinguish them. 

We now provide more details here. Suppose we have a quantum-secure public-key cryptosystem
with a (randomized) encryption function $\mathsf{Enc}_{\Kpub,r}:\01\rightarrow
\01^n$ and decryption function $\mathsf{Dec}_{\Kpriv}: \mathcal{E}\rightarrow
\01^n$, where $\mathcal{E}$ is the set of all valid encryptions and
$(\Kpub,\Kpriv)$ is the output of the Key-generation algorithm. Assume that the challenge string is  $\enc_{\Kpub}(r^*,b^*)$ for a uniformly random $r^*,b^*$ and an adversary for this cryptosystem has to correctly guess $b^*$.

In order to construct such an adversary, first define the concept class $\Cc=\{\mathsf{Dec}_{\Kpriv}:\Kpriv\}$, the set of
all decryption functions (one for each private key). Furthermore, assume that
there is a weak quantum-PAC learner for $\Cc$ that uses $L$ examples, i.e., a polynomial-time quantum learning algorithm $\A$ which receives
$L$ quantum examples $\sum_{x}\sqrt{D(x)}\ket{x,c(x)}$ (for an unknown
distribution $D$ and concept $c\in \Cc$) and outputs a hypothesis $h$ that is close
to the concept $c$.

As discussed previously, we now define a meaningful distribution $D$ on which the distinguisher runs the quantum learner: consider a set $S$ with $L^3$ tuples $(r_i,b_i)$ that are chosen uniformly at random from their respective domains; let $D$ be the uniform distribution over  $\{\enc_{\Kpub}(r,b):(r,b)\in S\}$. The adversary behaves as follows: run the quantum-PAC learner $\A$ under the distribution $D$ and provide it $L$ quantum examples of the form
$$ \ket{\psi}=\frac{1}{\sqrt{L^3}} \sum_{(r,b) \in S} \ket{\enc_{\Kpub}(r,b)}\ket{b}. $$
When $\A$ outputs the hypothesis $h$, the adversary outputs $h\big(\enc_{\Kpub}(r^*,b^*)\big)$ as its guess for $b^*$.
Notice that with overwhelming probability, $S$ does not contain the tuple $(r^*,
b^*)$ corresponding to the challenge. So there should be no guarantee on  the value of $h\big(\enc_{\Kpub}(r^*,b^*)\big)$. In order to overcome this, consider a quantum example state $$ 
\ket{\psi'}=\frac{1}{\sqrt{L^3 + 1}} \left(\ket{\enc_{\Kpub}(r^*,b^*)}\ket{b^*} + \sum_{(r,b)
\in S} \ket{\enc_{\Kpub}(r,b)}\ket{b}\right).
$$
We show that as the learner uses only $L$ quantum examples, $\ket{\psi}^{\otimes L}$, the output statistics of every quantum learning algorithm, when run on $\ket{\psi}$ and $\ket{\psi'}$, is very similar. In fact, we show that the distribution on the hypothesis set is almost the same in each case.  Now, using the performance guarantees of a quantum-PAC learning algorithm and the closeness of the distributions between the hypothesis sets, we conclude that $h\big(\enc_{\Kpub}(r^*,b^*)\big)$ equals $b^*$ with probability at least $\frac{1}{2}+\frac{1}{\poly(n)}$. This contradicts the quantum-secure assumption on the cryptosystem. 
\vspace{-0.8em}

\subsubsection{Conditional hardness of learning \texorpdfstring{$\TCz$}{TC0} and \texorpdfstring{$\ACz$}{AC0}} 

\paragraph{Hardness of $\TCz$.} 
The first consequence of~\Cref{result:prg} is to
give a strong \emph{negative answer} to the question of
Aaronson~\cite{Aaronson-blog} regarding the quantum learnability of $\TCz$
circuits, under cryptographic assumptions. 
\begin{result}
  \label{result:tc0}
  If the Ring-Learning with Errors problem cannot be solved in quantum polynomial time, then there is no polynomial-time uniform weak quantum-PAC learner for $\TCz$ with membership~queries.
\end{result}

As previously mentioned, the Ring-Learning with Errors problem -- i.e., the Learning with Errors problem defined over polynomial rings instead of matrices -- is a
cornerstone of state of the art efficient post-quantum cryptosystems and is
widely believed to be hard for quantum computers. %
The starting point for proving \Cref{result:tc0} is a ring variant of the pseudo-random function
family
presented by Banerjee, Peikert and Rosen~\cite{BPR12}, which was proven to be
quantum secure by Zhandry~\cite{zhandry:qprf}. {We show that our variant of the PRF 
is quantum secure under the assumption that the \emph{Ring-Learning with Errors} ($\RLWE$)
problem cannot be solved by quantum computers efficiently. We do not define the
$\RLWE$ problem here (see \Cref{sec:rlwe} for details), but point out that
$\RLWE$ is believed to be as hard as $\LWE$.
Our variant of the PRF, denoted $\RF$, satisfies two crucial 
properties: }
(a) every $f \in \RF$ can be computed by a $\TCz$ circuit; and
(b) suppose there exists a
\emph{distinguisher} $\De$ for $\RF$ -- i.e., there exists some constant $c > 0$ and a polynomial-time algorithm
$\De$ that satisfies
\begin{align}
  \Big \vert \Pr_{f\in \RF} [\De^{\ket{f}}(\cdot)=1]-
  \Pr_{f\in\mathcal{U}}[\De^{\ket{f}}(\cdot)=1] \Big\vert \geq \frac{1}{n^c} 
\end{align}

\noindent where $f\in\mathcal{U}$ denotes a uniformly random function -- 
then there exists a polynomial-time algorithm that solves $\RLWE$.
We can then use our connection between
PRFs and the hardness of quantum learning (\Cref{result:prg}) to prove \Cref{result:tc0}.

We remark that this work contains the first {explicit} proof that such a PRF
can indeed be implemented in $\TCz$. Building on results laid out
in~\cite{LPR13, HAB02} for ring arithmetic, this is shown by carefully
considering the cost of efficiently representing ring elements and performing
operations such as iterated multiplication over these ring elements.
Also, note that it is crucial to consider the $\RLWE$ version of the PRF instead of
the standard $\LWE$ version as the matrix arithmetic operations needed to implement the
latter would require log-depth circuits. 

Given the translation between feed-forward neural networks and $\TCz$ circuits~\cite{maas:sigmoidboolean}, our
negative result on learning $\TCz$ implies that, quantum
resources do {not give an exponential} advantage in learning the weights
of such neural networks (assuming $\LWE$ is quantum-hard).

We now make an alternate choice of parameters for the PRF $\RF$ in order to prove a stronger hardness result on quantum learning $\TCz$. However, a caveat of using alternate parameters is, we need to make stronger assumptions on the hardness of the $\RLWE$ problem. 

\begin{result}
	\label{result:strongertc0}
	If the Ring-Learning with Errors problem cannot be solved in quantum quasi-polynomial time, then there is no quasi-polynomial time uniform weak quantum-PAC learner for $\TCz$ with membership~queries.
\end{result}

As far as we are aware, the above result wasn't known even in the \emph{classical} literature (the only result in this flavor was proven by Fortnow and Klivans~\cite{Fortnowklivans:hardness} who proved the hardness of learning based on circuit complexity theoretic assumptions). Kharitonov's proof with the BBS PRG~\cite{BBSPRG} cannot yield the above result. In fact, allowing a quasi-polynomial time learner for the BBS PRG, using Kharitonov's proof, will result in a \emph{weak}-distinguisher, i.e., an algorithm that can distinguish between a random function and a PRF with $2^{-\polylog n}$ probability. By contrast, our Result~\ref{result:prg} allows us to improve upon Kharitonov's result by showing that a quasi-polynomial time learner yields a distinguisher with $1/\poly(n)$ probability of distinguishing between a random function and a function from $\RF$.

\paragraph{Hardness of $\ACz$.}
Here we show how to choose a new set of parameters for the
PRF $\RF$ that allows it to be implemented
in $\ACz$. Unfortunately, this comes with the added costs of requiring stronger
assumptions on the hardness of $\RLWE$ and weakening the achieved conclusion,
namely our hardness result holds for {\em
strong} learners (i.e., learners with bias $\beta \geq 1/6$). We achieve the following result.

\begin{result}
  \label{result:ac0}
  If there is no strongly sub-exponential time quantum algorithm for the Ring-Learning with Errors problem, then there is no $n^{O(\log^{\nu} n)}$-time uniform {strong} quantum-PAC learner for $\ACz$ circuits on $n$ bits (for a {constant $\nu > 0$}), using membership~queries, where $d = O(\polylog n)$.
\end{result}
\vspace{-0.8em}

In order to better understand this assumption, we remark that the current best known classical algorithms for $\LWE$ require exponential
time~\cite{LLL82,CN11} and any straightforward quantization of
these results gives only a polynomial speedup. These algorithms are currently believed to be the best
known algorithms for $\RLWE$ as well. More specifically, for $d$-dimensional $\RLWE$ instances of interest in~\Cref{sec:AC0prghardness}, these algorithms would scale as $2^{O(\sqrt{d})}$-time. Therefore, any non-trivial improvements in the exponent for these $\RLWE$ algorithms would also be a breakthrough for
the algorithms and cryptanalysis communities.

There is a key difference involved in proving \Cref{result:ac0}, as compared to \Cref{result:tc0}. An inherent problem in directly using the $\RLWR$-based PRF $\RF$ to prove \Cref{result:ac0} is that, any circuit computing a function $f \in \RF$ needs to compute an iterated multiplication of many ring elements. While this can be done efficiently with a $\TCz$ circuit, it cannot be computed using an $\ACz$ circuit. In fact, it is \emph{a priori} unclear if we can devise PRFs with {super-polynomially large truth tables} whose hardness is based on a quantum-hard problem while also being computable in $\ACz$.\footnote{Here we mean that the size of the truth table for each function in the PRF is super-polynomial in the dimension of the underlying quantum problem.} 

In order to overcome this issue, we use a technique that was also used by Kharitonov ~\cite{Kharitonov93}; by considering PRFs with a truth-table size that is sub-exponential in the dimension of the underlying $\RLWR$ problem. Specifically, we
obtain a PRF $\RF'$ whose truth-table size is polynomial in $n$ while the dimension, $d$, is poly-logarithmic in $n$. We show that computing an arbitrary function $f \in \RF'$ can be performed by an $\ACz$ circuit.
However, this increase in truth table size -- from being super-polynomial in $d$ as in the $\TCz$ case to being sub-exponential in $d$ for $\ACz$ -- which means that a stronger hardness assumption is needed to guarantee the security of this PRF. To this end, by assuming that 
$d$-dimensional $\RLWE$ cannot be solved by quantum computers in strongly sub-exponential time (i.e., $2^{O(d^{1/\eta})}$-time for every constant $\eta > 2$), the PRF that we consider here is quantum-secure. With this new PRF such that $d = O(\polylog n)$, we repeat the arguments of \Cref{result:tc0} and show the $n^{O(\log^\nu n)}$-time hardness of quantum learning for $\ACz$ for a constant $\nu > 0$ that depends on $\eta$.

We remark that Result~\ref{result:ac0}  matches the upper bound of Linial,
Mansour and Nisan~\cite{LMN93} (up to the constant
$\nu$ in the exponent), who showed how to \emph{classically} learn $\ACz$
circuits on $n$ bits using $n^{O(\polylog(n))}$ many random examples $(x,c(x))$
where $x$ is drawn from the uniform distribution. Our result shows that, up to
the $\nu$ in the exponent, the classical learning algorithm of~\cite{LMN93} is
optimal even if one is given access to uniform \emph{quantum examples} and is allowed to make quantum queries, under the $\RLWE$ assumption explained previously. 

It is worth noting that our result also implies conditional hardness results for
{\em classical} learning of $\TCz$ and $\ACz$, under the
assumption that solving $\RLWE$ is hard for classical computers (instead of factoring in
the case of Kharitonov~\cite{Kharitonov93}). 

\vspace{-0.5em}
\subsubsection{Conditional hardness of PAC learning \texorpdfstring{$\TCz_2$}{TC0[2]}}
Our third hardness result  is the following.
\begin{result}
  \label{result:tc02}
  If the Learning with Errors problem cannot be solved in quantum
  polynomial time, then there is no polynomial-time weak quantum-PAC learner for $\TCz_2$.
\end{result}
\vspace{-0.5em}
In contrast to our first hardness result for $\TCz$, we
stress that the quantum learners in this result are
\emph{quantum-PAC learners}. The main idea to prove this result is to consider the
$\LWE$-based public-key cryptosystem proposed by Regev~\cite{regev:lattice} (see \Cref{sec:cryptosystemLWE}). Klivans and
Sherstov~\cite{KlivansS09} considered this cryptosystem and showed that the decryption functions in this cryptosystem can be implemented by circuits in $\TCz_2$. We can then use our connection between quantum learning and quantum-secure cryptosystems (\Cref{result:encryption}) to derive \Cref{result:tc02}.\\

\vspace{-2em}

\subsection{Open questions}
This work raises a number of interesting open questions, which we list below. 
\vspace{-0.8em}

\paragraph{Learning $\ACz_d$.}
What is the smallest $d$ for which we can prove that quantum learning $\ACz_d$ is (conditionally) hard? Bshouty and Jackson~\cite{bshouty:quantumpac} gave a quantum polynomial-time algorithm for $\ACz_2$ and for some universal constant $d'\geq 3$ (independent of the input-size of the $\ACz_{d'}$ circuit), our work rules out polynomial-time learning algorithms for $\ACz_{d'}$, assuming there exists no sub-exponential time quantum algorithm for the $\RLWE$ problem. Classically, using the constant-depth construction of PRGs by Naor and Reingold~\cite{naorreingold:prf} in Kharitonov's~\cite{Kharitonov93} result, one can show that $\ACz_5$ is hard to learn (assuming factoring is hard on a classical computer). %

\paragraph{Uniform learning of $\TCz_2$.}
The conditional hardness results of $\TCz$  under the uniform
distribution and hardness of quantum-PAC learning $\TCz_2$  do not rule out the possibility that $\TCz_2$ admits polynomial-time quantum learning algorithms under the uniform distribution. It is an open question to show
if $\TCz_2$ can be learned quantum-efficiently under the uniform distribution
or if we can show a conditional hardness result.
\vspace{-0.8em}

\paragraph{Hardness of (quantum) learning quantum shallow circuits} Linial et al.~\cite{LMN93} showed the quasi-polynomial time learnability of shallow \emph{classical} circuits. Correspondingly, what would be the time/sample complexity of learning shallow \emph{quantum} circuits with a quantum learner? Recently Chung and Lin~\cite{chung:learningquantumcirc} showed that polynomially many samples suffice to learn the class of polynomial-sized \emph{quantum circuits}.%
\vspace{-0.8em}

\paragraph{Hardness of quantum learning from other assumptions} 
Finally, we leave as an open question the possibility of proving the hardness of
quantum learning from other complexity-theoretic assumptions. We now point to some
potential directions:
\begin{itemize}
  \item Recent results have proved Strong Exponential Time Hypothesis (SETH)-based hardness of
    the Gap-SVP~\cite{aggarwal:Gapsvp} problem and the 
$    \mathsf{Subset-sum}$~\cite{abboud:seth} problem (actually, they prove the limit of a
    specific approach to solve the $   \mathsf{Subset-sum}$ problem). These hardness results do
    not imply any hardness for learning problems directly and we wonder if they can be
    tightened in order to make it possible.
    
  \item Oliveira and Santhanam~\cite{oliveira:conspiracy} established a connection
between learning theory, circuit lower bounds and pseudo-randomness. Is it
    possible to quantize such connections?

 \item Daniely and Schwartz~\cite{daniely&shalevshwartz:limitdnf} showed a
   complexity-theoretic hardness of PAC learning DNFs. Could we also show
    hardness for quantum-PAC learning DNFs as well?

\end{itemize}
\vspace{-0.8em}

\subsection{Differences with previous version}

This version of this work differs from the previous
version, due to a mistake in the former proof. We discuss the main change below.

In the
previous version of this work, we claimed to prove hardness of learning $\TCz$
and $\ACz$ based on the quantum security of pseudorandom generators. {In our
proof, we assumed that the run-time of a learning algorithm depended only on the {\em input size}. 
Essentially, we considered a learning algorithm $\A$ that runs in time $n^a$ with bias ${n^{-c}}$ for a concept class on $n$ bits, and showed the existence of a PRG $G$ that takes an $n$-bit seed having a stretch $n^d$ where $d > a+c$ that $\A$ cannot learn efficiently.}

However, in the standard definition of polynomial-time learnability, the run-time of a quantum learning algorithm should depend both on the input size $n$ as well as the 
representation size of the {\em concept class}. In other words, an efficient learner $\A$ for a PRG $G$, with an $n$-bit seed, stretch $n^d$ and representation size $n^s$, is allowed to take time $O(\poly(n, n^s))$.
Under this definition, for every instantiation of our PRG, since $n^s > n^d$, there exists an efficient learning algorithm with run-time $n^a > n^s > n^d$, thereby not leading to any  contradiction, or hardness-of-learning the concept~class.

In order to achieve our hardness results in this updated version, we use
\emph{pseudorandom functions} such that for every choice of parameters, their truth-table 
size $T(n) \gg \poly(n, s(n))$, where $s(n)$ is the representation size of the function. 
Hence for every learning algorithm $\A$ running in time $t(n)$, we show the existence of 
a PRF with truth table size $T(n)\gg O(\poly(n, s(n)) \geq t(n)$ that $\A$ cannot learn efficiently.

	\subsection*{Organization} In \Cref{sec:prelim}, we state some required
  lemmas, discuss cryptographic primitives and formally define the classical and quantum
  learning models. In \Cref{sec:hardnessofPRF}, we discuss the
  (Ring)Learning with Errors problem and its variants along with the pseudo-random functions constructed from these problems. In \Cref{sec:quantumsafePRFvslearn}, we describe the connection between quantum-secure PRFs and quantum learning under the uniform distribution and prove our main results showing the hardness of learning $\TCz$ and $\ACz$. Finally, in \Cref{sec:quantum-safelearning}, we show the connection between quantum-secure public-key cryptosystems and quantum-PAC learning and conclude with the proof for the hardness of PAC-learning $\TCz_2$.
\vspace{-0.8em}

\subsection*{Acknowledgments}
S.A.~thanks Andrea Rocchetto and Ronald de Wolf for discussions during the
initial stage of this work and Robin Kothari for introducing him to this
problem. S.A.~and A.S.~did part of this work during the “Workshop on Quantum
Machine Learning” at the Joint Centre for Quantum Information and Computer
Science, University of Maryland. S.A.~did part of this work at QuSoft, CWI and
was supported by ERC Consolidator Grant QPROGRESS. We thank Anurag Anshu, Gorjan Alagic, Daniel Apon and Prabhanjan Ananth for helpful discussions. We thank Robin Kothari  and Ronald de
Wolf for many helpful comments which improved the presentation of this paper. S. A. thanks Aravind Gollakota and Daniel Liang for useful discussions. We thank Oded Regev for noticing a mistake in a previous version of this~paper. We are grateful to Leo Ducas for helping us find a suitable set of parameters for our Ring-LWE hardness assumptions.

	\section{Preliminaries}
	\label{sec:prelim}
	
	\subsection{Notation and basic claims}
	\label{sec:basic}
	
	We define some widely used definitions below. {For the set of integers $\ints$ and an integer $q$, define $\ints_q := \ints / q \ints$, i.e., $\ints_q$ is the set of integers $\pmod q$.} For $k \in \mathbb{N}$, we let {$[k]:=\{0,\ldots,k-1\}$}. All logarithms will be taken with respect to the base $2$. A function $\mu: \natural \rightarrow \mathbb{R}$ is said to be \emph{negligible} in a parameter $\lambda \geq 1$, which we denote $\negl(\lambda)$, if it satisfies the following:
	$$
	\text{for every integer } t>0, \text{ there exists an integer } K_t>0 \text{ such that for all } \lambda > K_t, \text{ we have } |\mu(\lambda)| < \lambda^{-t}.
	$$
	Similarly, a function $\eta: \natural \rightarrow \mathbb{R}$ is said to be
  \emph{non-negligible} in $\lambda$ if there exists an integer $t > 0$ and $K_t>0$ such
  that $\eta(\lambda) \geq \lambda^{-t}$ for all $\lambda>K_t$. For a distribution $D:\01^n\rightarrow
  [0,1]$, we write $x\sim D$ to say that $x$ is drawn according to the
  distribution $D$. For simplicity, we say that a function $f(n)=\poly(n)$, if there exist constants $a,b>0$ such that $n^a < f(n) < n^b$.%

  An algorithm $\A$ is said to have oracle access to a function
  $f$, which is denoted by $\A^f$, if $\A$ is allowed to (classically) query $f(x)$ for every
  $x$ in the domain of $f$ with unit computational cost. When $\A$ is a quantum
  algorithm, it is said to have quantum oracle access to $f$, denoted by
  $\A^{\ket{f}}$, if $\A$ is allowed to perform the operation $\sum_{x} \alpha_{x,b}
  \ket{x}\ket{b}\rightarrow \sum_{x,b} \alpha_{x,b}
  \ket{x}\ket{f(x)\oplus b}$ with unit computational cost for every $x\in \01^n,
  b\in \01$ and $\alpha_{x,b}\in \mathbb{C}$

  We say there is a \emph{non-negligible advantage} in
  distinguishing~$D$ from another distribution $D'$
  if there is a  $\poly(n)$-time adversary
  $\Adv$ (i.e., algorithm)  such that
    $$ 
  \Big|\Pr_{x_1,\ldots,x_L \sim D}[\Adv(x_1,\ldots,x_L) = 1] - \Pr_{x_1,\ldots,x_L \sim
  D'}[\Adv(x_1,...,x_L) = 1] \Big| \geq \eta(n),
$$
  where $\eta(n)$ is a non-negligible function and $L$ is a polynomial in $n$.

  Analogously, we say that there is a non-negligible advantage in
  distinguishing the functions $f, f' : \mathcal{X} \to \mathcal{Y}$
  if there is a $\poly(n)$-time adversary $\Adv$ (i.e., algorithm) that has
  oracle access to $f$ or~$f'$ such that
    $$ 
  \Big|\Pr[\Adv^{f}() = 1] - \Pr[\Adv^{f'}() = 1] \Big| \geq \eta(n),
$$
  where $\eta(n)$ is a non-negligible function. Here $n$ is some security
  parameters and $|\mathcal{X}|$, $|\mathcal{Y}|$ scale with $n$. If $\Adv$ instead has quantum oracle access to a function $f$, we
  denote it by $\Adv^{\ket{f}}$.
	
\vspace{-0.5em}
\subsection{Information theory and communication complexity}
We describe some basic concepts in information theory that we use later. Given a probability distribution $D:\mathcal{X}\rightarrow [0,1]$, the entropy of a random variable $X \sim D$ is given by
\begin{align*}
  \mbf{H}(X) = - \sum_{x \in \mathcal{X}} \Pr_{X \sim D}[X = x]\log \Big(\Pr_{X \sim
  D}[X=x]\Big).
\end{align*}

The \emph{binary entropy} of $\varepsilon\in [0,1]$ is defined as $\mbf{H}_b(\varepsilon)=-\varepsilon \log \varepsilon-(1-\varepsilon)\log(1-\varepsilon)$. Moreover, $\mbf{H}_b(\varepsilon)$ can be upper bounded as follows.
	\begin{fact}
		\label{fact:taylorseriesbinaryentropy}
		For all $\eps \in [0,1/2]$ we have the binary entropy $\mbf{H}_b(\eps)\leq O(\eps \log (1/\eps))$, and from the Taylor series expansion of $\mbf{H}_b(\varepsilon)$, we have
		$$
		1-\mbf{H}_b(1/2+\eps)\leq 2\eps^2/\ln 2+O(\eps^4).
		$$
	\end{fact}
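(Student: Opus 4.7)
The plan is to handle the two bounds separately, both by elementary calculus arguments that do not require any learning-theoretic machinery.

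For the first bound, I would split the definition as $\mbf{H}_b(\eps) = \eps \log(1/\eps) + (1-\eps)\log(1/(1-\eps))$ and control the second summand. Using $-\ln(1-\eps) \le \eps/(1-\eps)$ (valid on $[0,1)$, via the Taylor series of $-\ln(1-x)$ or the convexity inequality $e^{-x} \ge 1-x$ rearranged), I get
\[
(1-\eps)\log\bigl(1/(1-\eps)\bigr) \;=\; \frac{(1-\eps)}{\ln 2}\bigl(-\ln(1-\eps)\bigr) \;\le\; \frac{\eps}{\ln 2}.
\]
Now for every $\eps \in (0, 1/2]$ we have $\log(1/\eps) \ge 1$, so $\eps/\ln 2 \le (1/\ln 2)\,\eps \log(1/\eps)$. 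Combining,
\[
\mbf{H}_b(\eps) \;\le\; \eps \log(1/\eps) + \frac{1}{\ln 2}\,\eps\log(1/\eps) \;=\; \Bigl(1 + \tfrac{1}{\ln 2}\Bigr)\eps\log(1/\eps) \;=\; O\bigl(\eps\log(1/\eps)\bigr),
\]
with the boundary case $\eps = 0$ handled by the convention $0 \log 0 = 0$.

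For the second bound, I would Taylor-expand $\mbf{H}_b$ about $1/2$. Direct differentiation gives $\mbf{H}_b'(x) = \log\bigl((1-x)/x\bigr)$ and $\mbf{H}_b''(x) = -\bigl(\ln 2 \cdot x(1-x)\bigr)^{-1}$, so $\mbf{H}_b'(1/2) = 0$ and $\mbf{H}_b''(1/2) = -4/\ln 2$. The symmetry $\mbf{H}_b(1/2+\eps) = \mbf{H}_b(1/2-\eps)$ forces every odd-order derivative at $1/2$ to vanish, in particular $\mbf{H}_b'''(1/2) = 0$. Hence Taylor's theorem with Lagrange remainder yields
\[
\mbf{H}_b(1/2+\eps) \;=\; 1 - \frac{2\eps^2}{\ln 2} + R(\eps),
\]
where $R(\eps) = O(\eps^4)$ uniformly on any closed subinterval of $(-1/2,1/2)$ (the fourth derivative of $\mbf{H}_b$ is continuous and hence bounded on, say, $[1/4, 3/4]$, so the remainder is controlled by $\eps^4$ times a constant). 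Rearranging gives $1 - \mbf{H}_b(1/2+\eps) \le 2\eps^2/\ln 2 + O(\eps^4)$, as claimed.

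The main (minor) subtlety is making sure the asymptotic $O(\cdot)$ notation is justified for the full range $\eps \in [0, 1/2]$ rather than just for small $\eps$: the first inequality needs the observation $\log(1/\eps) \ge 1$ on this interval to absorb the linear term into $\eps\log(1/\eps)$, and the second needs the remainder bound to be valid not only as $\eps \to 0$ but for all $\eps$ bounded away from $\pm 1/2$, which is where continuity of $\mbf{H}_b^{(4)}$ suffices. Neither step requires any deeper tools.
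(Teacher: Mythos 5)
Your proof is correct. The paper states this as a Fact without proof, merely invoking ``the Taylor series expansion of $\mbf{H}_b(\varepsilon)$'', which is exactly the route you take for the second inequality; your handling of the first claim via $-\ln(1-\eps)\le \eps/(1-\eps)$ together with $\log(1/\eps)\ge 1$ on $(0,1/2]$ is a standard and valid way to fill in the detail the paper omits. The only point to keep in mind is that the $O(\eps^4)$ term is read asymptotically as $\eps\to 0$ (which is how the Fact is applied in \Cref{lem:upperboundlemma}, where the relevant $\eps$ is $\gamma(k)/2=o(1)$), so the caveat you raise about $\eps$ near $1/2$ is not an obstruction -- and even on all of $[0,1/2]$ the bound holds since $\bigl(1-\mbf{H}_b(1/2+\eps)-2\eps^2/\ln 2\bigr)/\eps^4$ extends continuously to a bounded function there.
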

\vspace{-0.8em}
	
Given a probability distribution $D:\mathcal{X} \times
\mathcal{Y}\rightarrow [0,1]$, and the random variables $(X,Y) \sim D$, the conditional entropy of
$X$ given $Y$ is 
\begin{align*}
  \mbf{H}(X|Y) = -\sum_{x \in \mathcal{X}, y \in \mathcal{Y}} \Pr_{(X,Y) \sim D}[(X,Y) = (x,y)]\log \Big(\frac{\Pr_{(X,Y) \sim D}[(X,Y) = (x,y)]}{\Pr_{X \sim D}[X=x]}\Big).
\end{align*}
Given a probability distribution $D:\mathcal{X} \times
\mathcal{Y} \times \mathcal{Z}\rightarrow [0,1]$, the random variables $(X,Y, Z) \sim D$, the 
conditional mutual information between $X$ and $Y$ given $Z$ is
\begin{align*}
  \mbf{I}(X: Y|Z) = \mbf{H}(X|Z) - \mbf{H}(X|Y,Z).
\end{align*}

The following fact about conditional entropy and prediction errors will be useful for us.

	\begin{lemma}[Fano's inequality]
		\label{lem:fano}
		Let $X$ be a random variable taking values in $\01$ and $Y$ be a random variables taking values in $\mathcal{Y}$. Let $f:\mathcal{Y}\rightarrow \01$ be a \emph{prediction} function, which predicts the value of $X$ based on an observation of $Y$. Suppose $\varepsilon=\Pr[f(Y)\neq X]$ is the probability of error made by the prediction function, then  $\mbf{H}(X\vert Y)\leq \mbf{H}_b(\eps)$.
	\end{lemma}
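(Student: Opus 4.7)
The plan is to introduce the binary error indicator $E = \mathbf{1}[f(Y) \neq X]$, which is a Bernoulli random variable with parameter $\varepsilon$, and then exploit the chain rule for entropy applied to the joint variable $(X,E)$ conditioned on $Y$. Expanding $\mbf{H}(X, E \mid Y)$ in two different ways and equating the two expressions gives
$$\mbf{H}(X \mid Y) + \mbf{H}(E \mid X, Y) \;=\; \mbf{H}(E \mid Y) + \mbf{H}(X \mid E, Y).$$

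The left-hand side simplifies immediately: because $E$ is a deterministic function of $X$ and $Y$ (the prediction rule $f$ is fixed), we have $\mbf{H}(E \mid X, Y) = 0$, so the identity becomes $\mbf{H}(X \mid Y) = \mbf{H}(E \mid Y) + \mbf{H}(X \mid E, Y)$. For the first term on the right, I would use that conditioning cannot increase entropy to get $\mbf{H}(E \mid Y) \leq \mbf{H}(E) = \mbf{H}_b(\varepsilon)$, where the equality uses that $E$ is Bernoulli$(\varepsilon)$.

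The key step — and the place where the binary-valued hypothesis on $X$ is genuinely used — is to argue that $\mbf{H}(X \mid E, Y) = 0$. Here I would invoke $X \in \{0,1\}$: once $Y$ and the value of $E$ are given, $X$ is completely pinned down, namely $X = f(Y)$ when $E = 0$ and $X = 1 - f(Y)$ when $E = 1$. So $X$ is deterministic conditionally on $(E, Y)$, and its conditional entropy vanishes. Combining this with the bound on $\mbf{H}(E \mid Y)$ yields the claim $\mbf{H}(X \mid Y) \leq \mbf{H}_b(\varepsilon)$.

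The argument is short and there is no real obstacle, but the step that deserves a careful check is the vanishing of $\mbf{H}(X \mid E, Y)$: in the general version of Fano's inequality the analogous step produces the familiar correction term $\varepsilon \log(|\mathcal{X}| - 1)$, which disappears precisely because $|\mathcal{X}| = 2$ in our setting. Any attempt to relax the binary assumption on $X$ would force that extra term back into the bound, so flagging this asymmetry is the one subtlety worth making explicit in the write-up.
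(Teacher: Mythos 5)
Your proof is correct: it is the canonical derivation of Fano's inequality via the error indicator $E=\mathbf{1}[f(Y)\neq X]$, the chain rule for $\mbf{H}(X,E\mid Y)$, and the observation that $\mbf{H}(X\mid E,Y)=0$ in the binary case, which is exactly the argument the paper implicitly relies on when citing this lemma without proof. Your remark that the binary hypothesis on $X$ is what kills the usual $\varepsilon\log(|\mathcal{X}|-1)$ correction term is accurate and worth keeping.
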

\vspace{-0.8em}

We now briefly describe communication complexity. For details, we refer the reader to \cite{Touchette15,KerenidisLGR16}. Here, we are interested in the setup with two parties Alice (denoted $A$) and Bob (denoted $B$). $A$ (resp.~$B$)
receives input $X$ (resp.~$Y$) such that $(X,Y) \sim D$ for a publicly known
probability distribution $D$. $A$ and $B$ then follow some protocol $\pi$ in which they exchange quantum information back-and-forth. Finally, $B$ outputs a random variable $Z$. The quantum communication complexity of the protocol $\mathsf{QCC}(\pi)$ is the number of qubits  communicated in the protocol~$\pi$.

Touchette~\cite{Touchette15} defined the notion of quantum information
complexity for a protocol, denoted $\mathsf{QIC}(\pi)$, which is rather subtle and out of the
scope of this work. In~\cite[Theorem~$1$]{Touchette15}, Touchette showed that
for all protocols $\pi$, we have $\mathsf{QIC}(\pi) \leq \mathsf{QCC}(\pi)$. Similarly, Kerenidis et al.~\cite[Theorem~$1$]{KerenidisLGR16} showed that $\mathsf{QIC}(\pi)$ is at most the classical information complexity of the protocol $\mathsf{CIC}(\pi)$,
whose definition we omit here. Also, it is not hard to see that if $B$ outputs
some value $Z$, then 
\begin{align*}
  \mbf{I}(Z:X|Y) \leq \mathsf{CIC}(\pi).
\end{align*}
Putting together \cite[Theorem~$1$]{Touchette15} and \cite[Theorem~$1$]{KerenidisLGR16} along with the inequality above, we obtain the following corollary.

\begin{corollary}
  \label{cor:communicationmutualinformation}
  Given a quantum communication protocol $\pi$ between two parties $A$ and $B$ whose inputs are $X$ and  $Y$, respectively, drawn from a distribution $D$. Let $Z$ be the output
  of $B$. Then, 
  $$\mbf{I}(Z:X|Y) \leq QCC(\pi).$$
\end{corollary}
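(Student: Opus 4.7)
The plan is to chain the three inequalities cited in the paragraph immediately preceding the corollary, passing from $\mbf{I}(Z:X|Y)$ all the way up to $\mathsf{QCC}(\pi)$. First, I would justify the displayed bound $\mbf{I}(Z:X|Y) \le \mathsf{CIC}(\pi)$ stated (without proof) in the excerpt. The key observation is that B's output $Z$ is a classical random variable computed from $Y$ together with the classical transcript of $\pi$ that B receives, so $Z$ sits in the Markov chain $X \to (\text{transcript},Y) \to Z$. A standard chain-rule argument that accounts for the information flow round-by-round in $\pi$ then shows that the information about $X$ carried by $Z$ beyond what is already contained in $Y$ cannot exceed the classical information complexity of $\pi$.

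Next, I would apply \cite[Theorem~1]{KerenidisLGR16} of Kerenidis et al.\ to pass from $\mathsf{CIC}(\pi)$ to $\mathsf{QIC}(\pi)$, and then Touchette's theorem \cite[Theorem~1]{Touchette15} to pass from $\mathsf{QIC}(\pi)$ to $\mathsf{QCC}(\pi)$. Concatenating these three bounds gives
$$
\mbf{I}(Z:X|Y) \;\le\; \mathsf{CIC}(\pi) \;\le\; \mathsf{QIC}(\pi) \;\le\; \mathsf{QCC}(\pi),
$$
which is exactly the required inequality.

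I do not foresee a substantive obstacle here: the corollary is essentially a repackaging of the three cited facts into one convenient form that is usable later in the paper. The only point that deserves a sentence of care in the write-up is to confirm that the notion of $\mathsf{CIC}(\pi)$ being invoked is well-defined for the transcript of a \emph{quantum} protocol $\pi$, so that the same object $\pi$ feeds uniformly into the two cited theorems and into the first step above. One should also observe explicitly that $Z$ is truly a classical outcome (namely the measurement result B produces at the end of $\pi$), so that $\mbf{I}(Z:X|Y)$ is a classical conditional mutual information to which the Markov-chain argument of the first step genuinely applies.
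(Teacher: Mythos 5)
Your strategy is the same as the paper's: compose the two cited theorems with the observation $\mbf{I}(Z:X|Y)\le \mathsf{CIC}(\pi)$, which you justify by the standard Markov-chain argument ($Z$ is computed from $Y$ and the transcript). The substantive problem is the \emph{direction} of the middle inequality. Your chain needs $\mathsf{CIC}(\pi)\le\mathsf{QIC}(\pi)$, but the paper quotes \cite[Theorem~1]{KerenidisLGR16} as saying the opposite, namely $\mathsf{QIC}(\pi)\le\mathsf{CIC}(\pi)$. Observe that with the three facts exactly as the paper states them ($\mathsf{QIC}\le\mathsf{QCC}$, $\mathsf{QIC}\le\mathsf{CIC}$, $\mbf{I}(Z:X|Y)\le\mathsf{CIC}$) nothing composes: two quantities each bounded above by $\mathsf{CIC}$ have no forced order relative to one another. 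You have therefore flipped the middle inequality into the only orientation under which a chain exists. That flip is the entire content of the proof and cannot be passed off as bookkeeping: you must verify that what Kerenidis et al.\ actually establish is a \emph{lower} bound on $\mathsf{QIC}$ by a classical information quantity that dominates $\mbf{I}(Z:X|Y)$; as written, your step~2 and the paper's citation contradict each other, and at least one of them is misstating the theorem.

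Your two closing caveats are exactly where the residual danger lies, and they need to be resolved rather than merely flagged. For a genuinely quantum $\pi$ there is no classical transcript, so the Markov-chain argument in your first step does not define, let alone bound, $\mathsf{CIC}(\pi)$; it proves $\mbf{I}(Z:X|Y)\le\mbf{I}(\Pi:X|Y)$ only when $\Pi$ is a classical transcript from which $B$ computes $Z$. The cleanest repair, for both your write-up and the paper's, is to eliminate $\mathsf{CIC}$ entirely and prove the two-step chain $\mbf{I}(Z:X|Y)\le\mathsf{QIC}(\pi)\le\mathsf{QCC}(\pi)$: the second step is Touchette's theorem verbatim, and the first step --- that the quantum information cost upper-bounds the conditional information that $B$'s final \emph{classical} output carries about $A$'s input beyond $Y$ --- is the precise property one should extract from \cite{KerenidisLGR16} or \cite{Touchette15}, stated in the direction the corollary actually requires.
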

\vspace{-1.2em}

\subsection{Cryptographic primitives}
\begin{definition}[Pseudo-random functions]
  \label{def:prf}
  Let $n$ be security parameter, $\mathcal{X}$, $\mathcal{K}$ and $\mathcal{Y}$ be finite sets
  and $\Fe = \{f_{\mathbf{k}} : \mathcal{X} \to \mathcal{Y}\}$ be family of functions
  indexed by $\mathbf{k} \in \mathcal{K}$.
  $\Fe$ is called a pseudo-random function family if, for
  every polynomial-time probabilistic
  algorithm $\Adv$ and for every constant $c>0$ we have 
	\begin{align*}
    \Big \vert \Pr_{f_{\mathbf{k}} \in \Fe} [\Adv^{f_\mathbf{k}}(\cdot)=1] - \Pr_{g \in \mathcal{U}}
    [\Adv^{g}(\cdot)=1]\Big| < \frac{1}{s^c},
	\end{align*}
  where $f_{\mathbf{k}}$ is picked uniformly at random from $\Fe$ and $g$ is picked
  uniformly at random from $\mathcal{U}$, the set of all functions
  from $\mathcal{X}$ to $\mathcal{Y}$. 
\end{definition}
\vspace{-0.8em}

In this work, we also consider  quantum polynomial-time
distinguishers. In short, when we say that a pseudo-random function is
\emph{secure}, we mean that it satisfies~\Cref{def:prf} and it is
\emph{quantum-secure} if it satisfies a variation of~\Cref{def:prf} where $\Adv$
is a polynomial-time quantum algorithm and has quantum query access to the functions
$f_{\mathbf{k}}$ and $g$. %
We point out that such a distinction is important.
 For example, there are examples of PRFs that are secure against (quantum)
 adversaries with classical oracle access to the function, but that are not
 secure when the adversary is allowed to perform quantum
 queries~\cite{BonehDFLSZ11}.
\vspace{-0.8em}

\subsection{Learning models}
	\label{sec:learningmodels}
	
	\subsubsection{Classical distribution-independent learning}
	\label{sec:classicaldistdependentlearn}
	We begin by introducing the classical Probably Approximately Correct (PAC)
  model of learning which was introduced by Leslie
  Valiant~\cite{valiant:paclearning}. A \emph{concept class} $\Cc$ is a
  collection of Boolean functions $c:\01^n\rightarrow \01$, which are often
  referred to as \emph{concepts}. In the PAC model, a learner~$\A$ is given access to a \emph{random example oracle} $\EX(c,D)$ where $c\in \Cc$ is an \emph{unknown} target concept (which the learner is trying to learn) and $D:\01^n\rightarrow [0,1]$ is an \emph{unknown} distribution. At each invocation of $\EX(c,D)$ the oracle returns a \emph{labelled example} $(x,c(x))$ where $x$ is drawn from the distribution~$D$.\footnote{Note that the oracle $\EX(c,D)$ doesn't take any input and simply returns a labelled example.} 
  Then $\A$ outputs a hypothesis $h$ and we say that $\A$ is an
\emph{$(\varepsilon,\delta)$-PAC learner} for a concept class~$\Cc$ if it satisfies the following: 
	\vspace{2pt}
	\begin{quote}
	for every $\varepsilon,\delta\in [0,1]$, for all $c\in \Cc$ and distributions
    $D$, when $\A$ is given $\varepsilon,\delta$ and access to the $\EX(c,D)$
    oracle, with probability $\geq 1-\delta$, $\A$ outputs a hypothesis $h$
    such~that 
    $$\Pr_{x\sim D}[h(x)\neq c(x)]\leq \varepsilon$$
	\end{quote}

The learner's \emph{advantage} over a random guess is given by  $\biaslearning = \frac{1}{2} - \frac{\eps}{2}$ and 
 $2\biaslearning = 1 - \eps$ is called the {\em bias} of the learner. %

	The \emph{sample complexity} of $\A$ is the maximum number of invocations of
  the $\EX(c,D)$ oracle which the learner makes  when maximized over all $c\in
  \Cc$ and all distributions $D$. Finally the $(\varepsilon,\delta)$-PAC \emph{sample
  complexity of $\Cc$} is defined as the minimum sample complexity over all $\A$
  that $(\varepsilon,\delta)$-PAC learn $\Cc$.  The \emph{time complexity} of
  $(\varepsilon,\delta)$-PAC learning $\Cc$ is the minimum number of \emph{time
  steps} of an algorithm $\A$ that $(\varepsilon,\delta)$-PAC learns $\Cc$
  (where the minimum is over all $\A$ that $(\varepsilon,\delta)$-PAC learn
  $\Cc$). We say that $\A$ $(\varepsilon,\delta)-$\emph{weakly learns} (resp.~strongly learns) $\Cc$ if $\varepsilon=\frac{1}{2}-n^{-c}$ (resp.~$\varepsilon=1/3$) for 
  some constant $c>0$ and input size $n$. Freund et al.~\cite{Freund99} showed that weak-PAC learning $\Cc$ is equivalent  to strong-PAC learning.
	
	\subsubsection{Quantum distribution-independent learning}
	\label{sec:quantumdistdependentlearn}
	The quantum model of PAC learning was introduced by Bshouty and Jackson~\cite{bshouty:quantumpac}. Instead of having access to an $\EX(c,D)$ oracle, here a \emph{quantum-PAC learner} has access to a $\QEX(c,D)$~oracle
	$$
	\QEX(c,D): \ket{0^n,0}\rightarrow \sum_x\sqrt{D(x)}\ket{x,c(x)},
	$$ 
	and we leave the $\QEX(c,D)$ oracle undefined on other basis states. We refer to the state produced by $\QEX(c,D)$ as a \emph{quantum example}, which is a coherent superposition over classical labeled examples. A quantum-PAC learner is given access to copies of quantum examples and performs a POVM (positive-valued-operator measurement), where each outcome of the POVM corresponds to a hypothesis. Similar to the classical distribution-independent learning setting, the \emph{quantum sample complexity} of an algorithm $\A$ is the maximum number of invocations of the $\QEX(c,D)$ oracle which the learner makes, when maximized over all $c\in \Cc$ and all distributions~$D$. The \emph{$(\varepsilon,\delta)$-quantum {\em PAC} sample complexity of $\Cc$} is defined as the minimum quantum sample complexity over all $\A$ that $(\varepsilon,\delta)$-quantum-PAC learn $\Cc$.

\vspace{-0.5em}	
	\subsubsection{Uniform distribution learning}
	\label{sec:learnunderuniform}
	The classical PAC model of learning places a strong requirement on learners, i.e., the learner needs to $(\varepsilon,\delta)$-PAC learn $\Cc$ for  \emph{every unknown distribution} $D$. In classical learning theory, there has been a lot of work in understanding a weaker model of learning -- when $D$ is restricted to the uniform distribution on $\01^n$ (which we denote as $\U$). In this restricted model, a classical learner is given access to  $\EX(c,\U)$ (known to the learner) which generates $(x,c(x))$ where $x$ is sampled according to the uniform distribution $\U$. An algorithm $\A$ is said to $(\varepsilon,\delta)$-learn $\Cc$ under $\U$ if it satisfies the following:
	\vspace{2pt}
	\begin{quote}
		for every $\varepsilon,\delta\in [0,1]$, for all $c\in \Cc$, when $\A$ is given $\varepsilon,\delta$ and access to the $\EX(c,\U)$ oracle, with probability $\geq 1-\delta$, $\A$ outputs a hypothesis $h$
		such~that 
		$\Pr_{x\sim \U}[h(x)\neq c(x)]\leq \varepsilon$ 
	\end{quote}

	The sample complexity and time complexity of learning $\Cc$ under the uniform distribution is defined similar to \Cref{sec:classicaldistdependentlearn} when we fix $D=\U$.

	One can similarly consider the case when a quantum learner is given access to $\QEX(c,\U)$
	$$
	\QEX(c,\U): \ket{0^n,0}\rightarrow \frac{1}{\sqrt{2^n}}\sum_x\ket{x,c(x)}.
	$$ 
	We leave $\QEX$ undefined on other basis states and assume the learner \emph{does not} have access to the inverse of $\QEX(c,\U)$.
	The quantum sample complexity and time complexity of learning a concept class $\Cc$ under the uniform distribution is defined similar to \Cref{sec:quantumdistdependentlearn} when we restrict $D=\U$. A powerful advantage of being given access to $\QEX(c,\U)$ is \emph{Fourier sampling}.  We do not introduce Fourier sampling here and refer the interested reader to~\cite[Section~2.2.2]{arunachalam:survey}.
	
	\subsubsection{Learning with membership queries}
	\label{sec:learnwithmembership}
	The classical model of PAC learning places yet another strong requirement on learners, i.e., the learner is only given access to labelled examples generated by the oracle $\EX(c,D)$ (for an unknown $c\in \Cc$ and distribution $D$). Angluin~\cite{angluin:exactmembership} relaxed this requirement and introduced the model of learning with \emph{membership queries}. In this scenario, in addition to $\EX(c,D)$, a learner is given access to a \emph{membership oracle} $\MQ(c)$ for the unknown target concept~$c$, which takes as input $x\in \01^n$ and returns $c(x)$. An algorithm $\A$ is said to $(\varepsilon,\delta)$-learn $\Cc$ under $D$ with membership queries if it satisfies the following:
	\begin{quote}
		for every $\varepsilon,\delta\in [0,1]$, for all $c\in \Cc$, when $\A$ is given $\varepsilon,\delta$ and access to $\EX(c,D)$, $\MQ(c)$ oracles, with probability  $\geq 1-\delta$, $\A$ outputs a hypothesis $h$ such that $\Pr_{x\sim D}[h(x)\neq c(x)]\leq~\varepsilon$.
	\end{quote}
	We abuse notation by saying the sample complexity of $\A$ is the maximum
  number of invocations of an oracle\footnote{Note that an invocation of either
  $\EX(c,D)$ or  $\MQ(c)$ counts as one application of the oracle.} when maximized over all $c\in \Cc$ under distribution $D$.
	The sample complexity and time complexity of learning $\Cc$ under $D$ given membership queries is defined similar to \Cref{sec:classicaldistdependentlearn} (where the classical learner now has access to $\MQ(c)$ in addition to $\EX(c,D)$).

	One can similarly consider the case when a quantum learner is additionally given access to a quantum membership query oracle $\QMQ(c)$
	$$
	\QMQ(c):\ket{x,b}\rightarrow \ket{x,b\oplus c(x)},
	$$
	for $x\in \01^n$ and $b\in \01$. The quantum learner is allowed to perform
  arbitrary unitary operations in between applications of the $\QMQ(c)$ oracle.
  The quantum sample complexity and time complexity of learning $\Cc$ under $D$
  given quantum membership queries is defined similar to
  \Cref{sec:quantumdistdependentlearn} (where the classical learner now has
  access to $\QMQ(c)$ in addition to $\QEX(c,D)$).
  In this paper, we will also
  view the concept as as $c\in \01^{2^n}$, described by the truth-table of
  $c:\01^n\rightarrow \01$.  In this case, we  say a learning algorithm is allowed to make \emph{queries to a string $c\in \01^N$}, it means  that the algorithm is given access to the oracle $\QMQ(c):\ket{x,b}\rightarrow \ket{b\oplus c_x}$ for $x\in [N]$  and $b\in \01$. 

	\paragraph{Learning with membership queries under the uniform distribution.}
  Finally, one can combine the learning models in this section with
  \Cref{sec:learnunderuniform} and consider (quantum) learners which are given
  access to (quantum) membership queries and (quantum) labelled examples when
  the underlying distribution $D$ is restricted to the uniform distribution
  $\U$. We say $\A$ is an \emph{$(\varepsilon,\delta)$-uniform PAC learner for
  $\Cc$} with membership queries if $\A$ $(\varepsilon,\delta)$-learns $\Cc$
  under $\U$ with membership queries. Similarly we can define a
  \emph{$(\varepsilon,\delta)$-uniform quantum-PAC learner for $\Cc$} with
  quantum membership queries. In this paper we will consider such learners in
  the weak and strong learning settings, and 
 for simplicity we will omit the $(\varepsilon,\delta)$-dependence when referring to the classical-PAC or quantum-PAC~learners. The sample complexity and time complexity of such learners is defined similar to \Cref{sec:classicaldistdependentlearn},~\ref{sec:quantumdistdependentlearn} (wherein the classical learner now has access to $\EX(c,\U)$ and $\MQ(c)$ and the quantum learner has access to $\QEX(c,\U)$ and $\QMQ(c)$). 
   	In order to further understand the theoretical aspects of quantum machine learning,  we refer the reader
	to~\cite{schuld:book,adcockea:qml,arunachalam:survey}.

\vspace{-0.5em}
\subsection{Circuits and neural networks}
	
	In this paper we will be concerned with the class of shallow or constant-depth Boolean circuits that consist of $\AND, \OR, \NOT$ and Majority gates. We define these classes formally now.
	
\vspace{-0.5em}	
	\subsubsection{The circuit classes \torp{$\ACz$}{AC0} and \torp{$\TCz$}{TC0}}
  An $\ACz$ circuit on $n$ bits consists of $\AND, \OR$ and $\NOT$ gates whose
  inputs are $x_1,\ldots,x_n,\overline{x_1},\ldots,\overline{x_n}$. Fan-in to
  the $\AND$ and $\OR$ gates are unbounded. The size of the circuit (i.e., the
  number of gates in the $\ACz$ circuit) is bounded by a polynomial in $n$ and
  the depth of the circuit is a constant (i.e., independent of $n$). Furthermore, the gates at level $i$ of the circuit have all their inputs coming from the $(i-1)$-th level and all the gates at the same level are $\AND$ or $\OR$ gates. The class of Boolean functions that can be expressed by such a depth-$d$ circuit is written as $\ACz_d$ and $\ACz=\bigcup_{d\geq 0} \ACz_d$. 
	A depth-$d$ threshold circuit, denoted $\TCz_d$, is similar to an $\ACz_d$
  circuit, except that the circuit is also allowed to have Majority gates
  $\MAJ:\01^n\rightarrow \01$, where $\MAJ(x)=1$ if and only if $\sum_i
  x_i\geq n/2$. 
  
Additionally, we will need the following definition of halfspaces and polynomial threshold functions in order to discuss about depth-$2$ $\TCz_2$ circuits. A \emph{half-space} in $n$ dimensions is a Boolean function $f:\01^n\rightarrow \01$ of the form 
$$
f(x)= \indic{\sum_i a_ix_i\geq \theta}
$$
where $a_1,\ldots,a_n$ and $\theta$ are fixed integers and $\indic{\cdot}$
denotes the indicator function which evaluates to $1$ if and only if $\sum_i
a_ix_i\geq \theta$.  The \emph{intersection of $k$ half-spaces} is a function
of the form $g(x)=\bigwedge_{i=1}^k f_i(x)$ where the $f_i$s are half-spaces
and $\wedge$ is the $\mathsf{AND}$ function. A \emph{polynomial threshold
	function} (PTF) of degree $d$ is a Boolean function of the form
$f(x)=\indic{p(x)\geq 0}$, where $p$ is a degree-$d$ polynomial with
integer coefficients. Note that a half-space is a degree-$1$ PTF. A PTF~$f$,
defined as $f(x)=\indic{p(x)\geq 0}$, is called \emph{light} if the sum of
the absolute value of the coefficients in $p$  is at most polynomial in $n$. Using this, we now have the following claim.

\begin{claim}
	\label{claim:halfspaceintc0}
A polynomially-light halfspace on $n$ dimensions can be computed in $\TCz_2$.
\end{claim}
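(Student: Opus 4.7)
The plan is to show that a polynomially-light halfspace is in fact computable by a single suitably-padded majority gate applied to literals, which is easily accommodated in $\TCz_2$ (and in fact in $\TCz_1$, so that depth $2$ is ample room).

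First, I would reduce to the case of non-negative weights. Given $f(x)=\indic{\sum_i a_i x_i \geq \theta}$ with $\sum_i |a_i| \leq \poly(n)$, rewrite each term with $a_i<0$ using the identity $a_i x_i = -|a_i| + |a_i|\,\overline{x_i}$. Absorbing the $-|a_i|$ contributions into the threshold yields an equivalent inequality $\sum_j w_j \ell_j \geq \theta'$ where each $\ell_j$ is a literal (an input bit or its negation), each $w_j$ is a non-negative integer, and $\sum_j w_j = \sum_i |a_i| \leq \poly(n)$; the new threshold $\theta'$ is still an integer with $|\theta'| \leq \poly(n)$.

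Next, I would eliminate the weights entirely by duplication. Replace each weighted literal $w_j \ell_j$ by $w_j$ syntactic copies of $\ell_j$, producing a list of $m = \sum_j w_j = \poly(n)$ literals $y_1,\dots,y_m$ such that the halfspace now equals $\indic{\sum_{k=1}^m y_k \geq \theta'}$. This is just a general (unweighted) threshold on a polynomial number of literals, computed from the original inputs by free use of negation at the input level (allowed in the paper's definition of $\ACz$/$\TCz$) and duplication (which costs nothing in a circuit).

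Finally, I would convert this unweighted threshold into a single $\MAJ$ gate by padding. Choose $N = 2m+1$ total inputs and feed the $\MAJ_N$ gate the $m$ literals $y_1,\dots,y_m$ together with $c := m+1-\theta'$ constant $1$'s and the remaining $N - m - c = \theta'$ constant $0$'s (if $\theta' > m$ the function is identically $0$, and if $\theta' \leq 0$ it is identically $1$, so both trivial cases are handled; otherwise $0 \leq c \leq m$, making the padding legal). Then $\MAJ_N$ outputs $1$ iff the number of $1$'s among its inputs is at least $\lceil N/2\rceil = m+1$, i.e., iff $\sum_k y_k + c \geq m+1$, which is exactly $\sum_k y_k \geq \theta'$.

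The main (and only) obstacle is verifying that the final circuit has polynomial size, which is precisely what the polynomial-lightness assumption $\sum_i |a_i| \leq \poly(n)$ buys us: both the replication step and the padding step introduce at most $O(\poly(n))$ additional wires/constants. The resulting circuit is a single $\MAJ$ gate acting on polynomially many literals and constants, and hence lies in $\TCz_1 \subseteq \TCz_2$.
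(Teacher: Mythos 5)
Your proof is correct and follows essentially the same route as the paper, which simply observes that a polynomially-light halfspace is already a single majority gate once the inputs are suitably negated and replicated according to the coefficients; you have merely spelled out the negation, replication, and padding details explicitly. (The paper's one-line proof additionally handles the top $\mathsf{AND}$ of several halfspaces needed later, but for the claim as stated your single-gate argument suffices.)
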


\begin{proof}
 First observe that each half-space on $n$ bits is already a majority gate (with the inputs suitably negative and replicated based on the coefficients $a_1,\ldots,a_n$) and the top gate $\mathsf{AND}(f_1,\ldots,f_t)$ can be replaced by the majority gate $\mathsf{MAJ}(-t,f_1,\ldots,f_t)$, which is a depth-$2$ threshold circuit.
\end{proof}

\vspace{-2em}
\paragraph{Neural networks and $\TCz$.} 
  One motivation for learning the
  concept class $\TCz$ is that this class is a theoretical way to model neural networks. Although we do not deal with neural networks  in this paper, we briefly mention their connection to $\TCz$ circuits. A \emph{feed-forward neural network} can be modeled as an acyclic directed graph where the nodes consist of \emph{neurons}.\footnote{In general, neural networks are \emph{cyclic} directed graphs. In order to make the connection to $\TCz$ we consider a subclass of neural networks called feed-forward neural-networks.} Every neuron is associated with some internal \emph{weights} $w_0,\ldots,w_n \in \R$. The action of the neuron is defined as follows: it takes as real input $x_1,\ldots,x_n$ and generates an output signal $y$ defined by 
	$$
	y=\sigma \Big(\sum_{i\in [n]}w_i x_i -w_0\Big),
	$$
	where $\sigma$ is a \emph{transfer function}. The size of a neural network is the number of neurons in the network and the depth of the network is the number of \emph{layers} in the network between the input and final output. A commonly used transfer function is the sigmoid function, defined as $\sigma(t)=(1+e^{-t})^{-1}$. The sigmoid function is a continuous approximation of the threshold function $\mathsf{Thr_{w_0,w_1,\ldots,w_n}}$, which on input $x_1,\ldots,x_n$, outputs $1$ if $\sum_iw_ix_i\geq w_0$ and $0$ otherwise. Maass, Schnitger and Sontag~\cite{maas:sigmoidboolean} showed an equivalence between feed-forward neural networks and $\TCz$ circuits. Before we state their theorem, we first define the following:
	
\begin{definition} 
Let $\varepsilon>0$ and $f:\01^n\rightarrow \01$ be a Boolean function. We say a feed-forward neural network \emph{$C$ $\varepsilon$-computes $f$}, if there exists a $t_C\in \R$ such that: on input $x\in \01^n$, $C(x)\geq t_C+\varepsilon$ if~$f(x)=1$ and $C(x)\leq t_C-\varepsilon$ if $f(x)=0$.
\end{definition}

We now state the main result of~\cite{maas:sigmoidboolean}.

\begin{theorem}[Theorem~4.1\cite{maas:sigmoidboolean}]
Let $d\geq 1$. Let $\mathsf{T}_d$ be the class of functions $f:\01^n\rightarrow \01$ such that there exists a depth-$d$ feed-forward neural network $C$ on $n$ bits, whose size is $\poly(n)$, the individual weights in $C$ have absolute value at most $\poly(n)$ and $C$ $\frac{1}{3}$-computes $f$. Then, $\mathsf{T}_d=\TCz_d$.\footnote{The $1/3$ can be replaced by an arbitrary constant independent of $n$.} 
\end{theorem}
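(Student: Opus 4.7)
The plan is to prove both inclusions $\TCz_d \subseteq \mathsf{T}_d$ and $\mathsf{T}_d \subseteq \TCz_d$ separately, using simulations in each direction that respect the polynomial size and polynomial weight bounds, and invoking the $1/3$-gap in the definition of $\varepsilon$-computing to absorb discretization errors.

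For the easier inclusion $\TCz_d \subseteq \mathsf{T}_d$, I would first observe that every $\AND$, $\OR$, and $\mathsf{MAJ}$ gate on $n$ inputs is itself a threshold function $\mathsf{Thr}_{w_0,\dots,w_n}$ with integer weights of magnitude $O(n)$, so a polynomial-size depth-$d$ $\TCz$ circuit is, gate by gate, a depth-$d$ threshold circuit with polynomially bounded weights. To replace each threshold gate by a sigmoid neuron, I would scale the weights by a polynomial factor $K = \poly(n)$: since $\sum_i w_i x_i - w_0$ is a nonzero integer whenever the threshold output is well-defined, $\sigma(K(\sum_i w_i x_i - w_0))$ is exponentially close to $0$ or $1$. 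Propagating this through $d$ constant-depth layers, and routing each intermediate $0/1$ output through the inputs of the next layer (interpreting values close to $1$ as true and values close to $0$ as false by adjusting the bias of the next layer by a small constant), yields a neural network whose final neuron $\frac{1}{3}$-computes the target Boolean function. The polynomial-weight bound is preserved because we only multiplied by $K = \poly(n)$.

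For the harder direction $\mathsf{T}_d \subseteq \TCz_d$, the plan is to simulate the real-valued computation of the neural network by a $\TCz_d$ circuit operating on fixed-point rational approximations. Given a depth-$d$ polynomial-size sigmoid network whose weights have magnitude $\poly(n)$, each neuron's preactivation is a sum of $\poly(n)$ terms, each the product of a weight and an output of a previous neuron in $[0,1]$; hence preactivations are bounded by $\poly(n)$ in absolute value. I would maintain, at every layer, a $\poly(n)$-bit fixed-point approximation to each neuron's output, using the fact that $\TCz$ can perform iterated integer addition, multiplication, and division with polynomial precision (this is the Hesse--Allender--Barrington theorem as used elsewhere in this paper). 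The sigmoid $\sigma(t) = 1/(1+e^{-t})$ can be evaluated to $\poly(n)$ bits of precision by a $\TCz$ subcircuit by expanding $e^{-t}$ as a truncated power series and using $\TCz$ iterated multiplication and division. Since there are only $d = O(1)$ layers and the approximation error at each layer scales polynomially, the total accumulated error remains $o(1)$, well below $1/3$. The final comparison with $t_C$ can then be resolved by a single threshold gate, producing the Boolean output.

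The main obstacle I anticipate is the careful error analysis in the second direction: one must verify that the Lipschitz constant of each sigmoid neuron, together with the polynomial weights, does not blow up the discretization error past the $1/3$ margin after $d$ layers, and simultaneously that all arithmetic (including sigmoid evaluation via truncated Taylor expansion) can actually be carried out in $\TCz$ with the requisite precision. Modulo these technicalities, the proof is essentially a simulation argument in both directions, with the gap condition in $\varepsilon$-computing providing the crucial slack needed to translate between continuous sigmoid networks and discrete threshold circuits.
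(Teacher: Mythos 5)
This theorem is quoted in the paper as Theorem~4.1 of Maass, Schnitger and Sontag \cite{maas:sigmoidboolean}; the paper gives no proof of its own, so your proposal can only be judged on its internal merits. Your first inclusion $\TCz_d\subseteq \mathsf{T}_d$ is essentially right: every $\AND$, $\OR$, $\MAJ$ gate is a threshold gate with weights of magnitude $O(1)$, scaling the (integer) preactivation by $K=\poly(n)$ drives the sigmoid output within $e^{-\Omega(K)}$ of $\01$, and since the next layer's weights sum to $\poly(n)$ the resulting perturbation of its integer preactivation is $o(1)$, so the argument propagates through the constant number of layers and the final gap is far larger than $1/3$.

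The second inclusion as you have set it up does not prove the stated theorem. Evaluating each sigmoid by a truncated Taylor expansion of $e^{-t}$ together with $\TCz$ iterated multiplication and division costs a \emph{constant number of threshold layers per neuron}, not one; composing this over the $d$ layers of the network yields a threshold circuit of depth $O(d)$, i.e., $\mathsf{T}_d\subseteq \TCz_{cd}$ for some constant $c>1$, whereas the theorem asserts equality at the \emph{same} depth $d$. (For the use made of the theorem in this paper, where only membership in $\TCz=\bigcup_d \TCz_d$ matters, your weaker statement would suffice, but it is not the theorem as stated.) The depth-preserving argument of \cite{maas:sigmoidboolean} instead exploits that $\sigma$ is monotone and bounded: discretize each gate's output to precision $1/n^{O(d)}$ (note that $O(\log n)$ bits of precision suffice, and are in fact forced, since a finer fixed-point grid would require super-polynomial integer weights after clearing denominators); the discretized activation, as a function of the preactivation $\sum_i w_i x_i - w_0$, is a staircase with $\poly(n)$ steps and hence equals a linear combination $\sum_{j} c_j\,\indic{\sum_i w_i x_i \geq \theta_j}$ of $\poly(n)$ threshold gates applied to the \emph{same} linear form. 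Replacing each neuron by this bank of gates and folding the coefficients $c_j$ into the weights of the next layer keeps the depth exactly $d$ and the weights polynomial; the $1/3$ separation absorbs the $n^{-\Omega(1)}$ discretization error accumulated over the $d=O(1)$ layers. Without this (or an equivalent) device for collapsing the per-neuron simulation into a single threshold layer, your proposal establishes only the depth-inflated inclusion.
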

	
	In particular, this theorem implies that Boolean functions that are computable by constant-depth polynomial-sized neural networks with neuron-weights bounded by some polynomial in the input size, can be implemented as a circuit in $\TCz$. So an alternate definition of $\TCz$ is the class of constant-depth neural networks with polynomial size and weights bounded polynomially in the input length.

\section{Hardness assumptions and cryptographic constructions}
\label{sec:hardnessofPRF}
In this section we give a detailed introduction to lattice-based problems and some cryptographic primitives built using them. Our motivation in doing this, firstly, is to highlight the subtleties inherent in the hardness results and security proofs associated with these lattice-based problems and primitives respectively.  Secondly, the details also help in explaining the subtleties that are reflected in our utilization of these primitives for showing the hardness of quantum learning. 
\subsection{Learning with Errors (LWE)}

Throughout this section, we will need the following notation. Let $\chi$ be a distribution over $\ints$ and assume that we can efficiently sample from $\chi$.\footnote{The distribution $\chi$ is dependent on input size $n$, but we drop the $n$-dependence for notational simplicity.} Let $B>0$. We say that $\chi$ is $B$-bounded if $\pr_{e \sim \chi}[ |e| > B] \leq {\mu(n)}$, where $e, B$ are at most $n$ bits long and {$\mu$ is a function that is negligible in $n$}. In other words, a distribution $\chi$ is $B$-bounded if, with high probability, the magnitude of $e$ when drawn according to $\chi$ is at most~$B$. Also, a \emph{discrete Gaussian} refers to the standard normal Gaussian distribution on the real line which is restricted to take integer values. Now, we can define the decision version of the \emph{Learning with Errors ($\LWE$)} problem. %

\begin{definition}[$\LWE_{d,q,\chi, m}$] %
	\label{def:LWE}
	The (decision) Learning with Errors problem with dimension $d$, modulus $q$
  and a $B$-bounded distribution $\chi$ over $\ints$ is defined as follows: On
  input {$m$ independent samples} $\{(\mathbf{a}_i, b_i) \in \ints_q^d \times
  \ints_q\}_i$, where the $\mathbf{a}_i$ are uniformly sampled from $\ints_q^d$, distinguish {(with non-negligible advantage)} between the following two cases: %
	\begin{itemize}
	\setlength{\leftmargin}{+3in}
		\item ($\LWE$-samples) The $b_i$s are \emph{noisy products} with respect to a fixed secret $\mathbf{s}$ distributed uniformly in $\ints_q^d$, i.e., $b_i$s are of the form $b_i = \mathbf{a}_i \cdot \mathbf{s} + e_i \Mod q$ where $e_i\in \ints$ is sampled according to  $\chi$ conditioned on $|e_i| \leq B$.
		\item (Uniform samples) For every $i$, $b_i$ is uniformly sampled from $\ints_q$ and is independent of $\mathbf{a}_i$. %
	\end{itemize}
\end{definition}
 \vspace{-0.8em}
When the number of samples $m$ is arbitrary (i.e., not bounded in terms of the dimension $d$), we will simply denote the problem as $\LWE_{d, q, \chi}$.

Consider an efficient distinguisher $\De$ as defined in \Cref{def:prf} that
distinguishes a distribution $D$ from being uniformly random. Then, if $D$ is
the distribution generated by $\LWE$-samples, the distinguisher $\De$ would
clearly serve as an efficient algorithm to solve the (decision) $\LWE$ problem. We formally define this below.

\begin{definition}[Distinguishers for $\LWE_{d,q,\chi, m}$]
\label{defn:quantumdistinguisher}
	An algorithm $\De$ is a distinguisher for the (decision) $\LWE_{d, q, \chi, m}$ problem if, given $m$ independent samples from $\ints_q^d \times \ints_q$, it distinguishes $m$ $\LWE$-samples ${\{\mathbf{a}_i,b^{(1)}_i\}_i}$ from $m$ uniformly random samples ${\{\mathbf{a}_i,b^{(2)}_i\}_i}$ with non-negligible advantage i.e., there exists a non-negligible function $\eta:\natural \rightarrow \mathbb{R}$ such that
	$$
	\Big \vert \Pr_{(\mathbf{a}_i,b^{(1)}_i)} [\De(\{\mathbf{a}_i,b^{(1)}_i\}_i)
  \text{ outputs 1}] -\Pr_{(\mathbf{a}_i,b^{(2)}_i)}
  [\De(\{\mathbf{a}_i,b^{(2)}_i\}_i) \text{ outputs 1}] \Big \vert \geq \eta(d). %
	$$
	When $\De$ is a $\poly(d, m)$-time probabilistic (resp.~quantum) algorithm, it is said to be an \emph{efficient} classical (resp.~quantum) distinguisher. 
	\end{definition}

\vspace{-0.8em}
\subsubsection{Hardness of LWE} 
\label{sec:hardnesslwe}
 For a suitable choice of parameters, it is believed that $\LWE$ is a
 hard problem to solve. This is based on the worst-case hardness of lattice-based problems such as
 $\mathsf{GapSVP}_\gamma$ (decision version of the shortest vector problem) or
 $\mathsf{SIVP}$ (shortest independent vectors
 problem)~\cite{regev:lattice,peikert:publicSVP}. We do not introduce these
 problems or discuss their hardness here; an interested reader can refer to Peikert's survey~\cite{Peikert16} on the topic. 

The first hardness result for $\LWE$ was a \emph{quantum reduction} to the
$\mathsf{GapSVP}$ problem~\cite{regev:lattice,peikert:publicSVP}. Subsequent
works succeeded in de-quantizing the reduction from~\cite{regev:lattice} except
that the modulus $q$ was required to be super-polynomial in the dimension
$d$~\cite{peikert:publicSVP}. Later, Brakerski et al.~\cite{BLPRS13} built on
this result and other techniques from fully homomorphic encryption,  obtaining a reduction with a polynomial~modulus.
\begin{theorem}[{Hardness of LWE} \cite{BLPRS13}]  
\label{thm:classred}
Let $d,q\geq 1$ and $\alpha \in (0,1)$ be such that $q = \poly(d)$ and $\alpha q\geq 2 d$. Let $D_{\ints_q,\alpha}$ be the discrete Gaussian distribution over $\ints_q$ with standard deviation $\alpha q$. Then there exists a classical reduction from the worst-case hardness of the $d$-dimensional $\mathsf{GapSVP}_{\widetilde{O}(d/\alpha)}$ problem to the $\LWE_{d^2, q, D_{\ints_q,\alpha}}$ problem.
\end{theorem}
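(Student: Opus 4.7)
The plan is to build the reduction in two main stages, mirroring the structure of the BLPRS13 result: first apply a classical reduction from worst-case $\mathsf{GapSVP}$ to $\LWE$ with a super-polynomial modulus, and then apply a modulus-dimension switching step to drop the modulus down to polynomial size at the cost of squaring the dimension.

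First I would invoke Peikert's classical reduction from worst-case $\mathsf{GapSVP}_{\gamma}$ on $d$-dimensional lattices to decision-$\LWE$, which yields hardness of $\LWE_{d, q', D_{\ints_{q'}, \alpha'}}$ for a suitably chosen super-polynomial modulus $q' = 2^{\Theta(d)}$ and noise parameter $\alpha'$ satisfying $\alpha' q' \geq 2\sqrt{d}$. This is the step in which Regev's original quantum reduction is replaced by a classical one; it relies essentially on $q'$ being super-polynomial, which is exactly why a further transformation will be needed to reach the polynomial modulus $q = \poly(d)$ demanded by the statement.

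Second I would apply a modulus-dimension switching reduction from $\LWE_{d, q', \chi'}$ to $\LWE_{d', q, \chi}$ where $\chi = D_{\ints_q, \alpha}$ and $d' \leq d \lceil \log_q q' \rceil = O(d^2)$. Following the BLPRS framework, the idea is to decompose the secret $\mathbf{s} \in \ints_{q'}^d$ in base $q$ as $\mathbf{s} = \sum_{j=0}^{k-1} q^j \mathbf{s}_j$ with $\mathbf{s}_j \in \ints_q^d$ and $k = \lceil \log_q q' \rceil$, and to convert each large-modulus sample $(\mathbf{a}, \mathbf{a}\cdot\mathbf{s} + e) \Mod{q'}$ into samples whose effective secret is the concatenation $(\mathbf{s}_0, \ldots, \mathbf{s}_{k-1}) \in \ints_q^{dk}$ with modulus $q$. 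Composing this switching step with Peikert's reduction then produces the desired reduction from $\mathsf{GapSVP}_{\widetilde{O}(d/\alpha)}$ to $\LWE_{d^2, q, D_{\ints_q, \alpha}}$.

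The hard part, I expect, is the noise analysis in the modulus-switching stage. Rescaling an error of magnitude $O(\alpha' q')$ from modulus $q'$ down to modulus $q$ inflates the relative error by a factor of $q/q'$, and some of the intermediate samples produced by the base-$q$ decomposition have secret components whose marginal distributions are not immediately uniform. To make the reduction go through cleanly one needs an ``errorless first-sample" variant of $\LWE$ together with a careful Gaussian noise-flooding argument, certifying that the output noise is statistically close to a clean discrete Gaussian $D_{\ints_q, \alpha}$ with $\alpha q \geq 2d$. Once these technical estimates are in place, the composition of the two stages gives the theorem.
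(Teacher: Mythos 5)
The paper does not prove this theorem; it is imported verbatim from [BLPRS13] with a citation, so there is no in-paper proof to compare against. Your two-stage outline --- Peikert's classical reduction from worst-case $\mathsf{GapSVP}$ to decision-$\LWE$ with an exponential modulus, followed by modulus--dimension switching that trades the super-polynomial modulus for a quadratic blow-up in dimension --- is a faithful sketch of how the cited work actually establishes the result, and your identification of the noise analysis in the switching step (and the need for the errorless-first-sample variant) as the technically delicate part is accurate; of course, as written it remains a sketch rather than a complete argument, with those estimates deferred to the original reference.
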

\vspace{-0.8em}
The corollary below provides a suitable choice for parameters that lead to hard $\LWE$ instances. 

\begin{corollary}
	\label{cor:redlwetogapsvp}
	Let $d \in \natural$ and $\alpha, q$ be parameters such that {$\alpha =
  1/\sqrt{d}$} and $\alpha q \geq 2\sqrt{d}$. Let $\chi=D_{\ints_q,\alpha}$,
  the discrete Gaussian distribution over $\ints_q$ with standard deviation
  $\alpha q$. If there exists a polynomial-time quantum distinguisher for
  $\LWE_{d, q, \chi, m}$, then there exists a polynomial-time quantum algorithm for the $\sqrt{d}$-dimensional $\mathsf{GapSVP}_{\widetilde{O}({d^{1.5}})}$ problem.
\end{corollary}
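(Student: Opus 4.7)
The plan is to derive the corollary by a direct instantiation of the classical reduction in \Cref{thm:classred}. In \Cref{thm:classred}, the hypothesis reduces a $d'$-dimensional $\mathsf{GapSVP}$ instance to an $\LWE_{(d')^2, q, \chi}$ instance, provided that $\alpha q \geq 2 d'$ and $q = \poly(d')$. To match the corollary's LWE dimension of $d$, I would set $d' := \sqrt{d}$, so that $(d')^2 = d$. I then need to verify that the parameter constraints of \Cref{thm:classred} hold with this substitution: the condition $\alpha q \geq 2 d' = 2\sqrt{d}$ is exactly what is assumed in the corollary, and $q = \poly(\sqrt{d}) = \poly(d)$ follows from the implicit assumption that $q$ is polynomial in $d$.

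Next I would compute the resulting $\mathsf{GapSVP}$ approximation factor. By \Cref{thm:classred}, the reduction yields hardness of $\sqrt{d}$-dimensional $\mathsf{GapSVP}_{\widetilde{O}(d'/\alpha)}$. Substituting $d' = \sqrt{d}$ and $\alpha = 1/\sqrt{d}$ gives an approximation factor of $\widetilde{O}(\sqrt{d} \cdot \sqrt{d} \cdot \sqrt{d}) = \widetilde{O}(d^{1.5})$, accounting for the slack absorbed into the $\widetilde{O}$ from the reduction overhead. (More precisely, the $\widetilde{O}$ hides polylogarithmic factors that arise from the discretization of the Gaussian noise $\chi = D_{\ints_q, \alpha}$ and the worst-case-to-average-case transformation in \Cref{thm:classred}.)

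Finally, I would conclude by contrapositive. The reduction in \Cref{thm:classred} is an efficient \emph{classical} reduction, meaning it runs in time $\poly(d, \log q)$. Therefore, given an efficient quantum distinguisher for $\LWE_{d, q, \chi, m}$ (which, by \Cref{defn:quantumdistinguisher}, achieves non-negligible distinguishing advantage), composing it with the classical reduction yields a $\poly(d)$-time quantum algorithm for the $\sqrt{d}$-dimensional $\mathsf{GapSVP}_{\widetilde{O}(d^{1.5})}$ problem, as desired.

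I expect no serious obstacle here since the corollary is essentially a parameter-setting exercise built on top of \Cref{thm:classred}; the only subtlety is ensuring that the $B$-bounded property of the discrete Gaussian $D_{\ints_q, \alpha}$ used in the corollary is compatible with the distribution in \Cref{thm:classred} (which it is, by standard Gaussian tail bounds on $D_{\ints_q, \alpha}$ with standard deviation $\alpha q$) and tracking the $\widetilde{O}$ factors carefully to confirm the $d^{1.5}$ exponent.
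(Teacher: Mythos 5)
Your proposal is correct and takes exactly the route the paper intends: \Cref{cor:redlwetogapsvp} is stated without proof as a direct instantiation of \Cref{thm:classred} with the theorem's lattice dimension set to $\sqrt{d}$, and you verify the parameter constraints and the classical-reduction/quantum-solver composition correctly. One step of your arithmetic is off, though: with $d'=\sqrt{d}$ and $\alpha = 1/\sqrt{d}$, the approximation factor from \Cref{thm:classred} is $\widetilde{O}(d'/\alpha) = \widetilde{O}(\sqrt{d}\cdot\sqrt{d}) = \widetilde{O}(d)$, not $\widetilde{O}(d^{1.5})$, and the missing factor of $\sqrt{d}$ is polynomial, so it cannot be ``absorbed into the $\widetilde{O}$'' as you claim. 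The conclusion still holds for a different reason: $\mathsf{GapSVP}_{\gamma}$ only gets easier as $\gamma$ grows (any instance satisfying the promise for $\gamma' \geq \gamma$ also satisfies it for $\gamma$), so an algorithm for $\sqrt{d}$-dimensional $\mathsf{GapSVP}_{\widetilde{O}(d)}$ a fortiori solves $\mathsf{GapSVP}_{\widetilde{O}(d^{1.5})}$; the paper evidently writes $d^{1.5} = (\sqrt{d})^{3}$ so that the exponent matches the ``$n$-dimensional $\mathsf{GapSVP}_{\widetilde{O}(n^{3})}$ is believed quantum-hard'' regime it invokes immediately afterwards.
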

\vspace{-0.8em}

 It is believed that there exists no polynomial-time
  quantum algorithms for the $d$-dimensional
  $\mathsf{GapSVP}_{\widetilde{O}(d^3)}$ problem~\cite{GG00}, which implies that 
  there are no
  quantum polynomial-time algorithms for $\LWE$ with the parameters as stated in
  \Cref{cor:redlwetogapsvp}.

 We would like to stress that we show hardness results for learning
 constant-depth circuits with quantum examples, based on the hardness of solving
 $\LWE$  with {\em classical samples}. This does not
	contradict the work by Grilo et al.~\cite{grilo:LWEeasy} who prove that $\LWE$ is easy when {\em quantum examples} are provided.
  
  \vspace{-0.8em}
  
\subsubsection{Public-key encryption scheme based on LWE}
\label{sec:cryptosystemLWE}
  In this section we describe the public-key cryptosystem proposed by
  Regev~\cite{regev:lattice} whose security is based on the hardness of $\LWE$. 

  \begin{definition}[\LWEPKE{} \cite{regev:lattice}] The $\LWEPKE_{d,q,m}$ public-key
    encryption scheme consists~of:
      
    \noindent \textbf{Key-generation:} Pick $\textbf{s}\in \ints_q^d$ and $A\in \ints_q^{m \times
    d}$ uniformly at random from the respective supports. Draw $e \in \ints_q^d$ from the distribution $\chi$, as defined in the 
    \LWE{} problem. Let $\textbf{b} = A\textbf{s} + e$ and output $\Kpriv = \textbf{s}$
    and $\Kpub = (A, b)$.

    \noindent \textbf{Encryption:} To encrypt a bit $c$ using $\Kpub = (A, \textbf{b})$, pick $S
    \subseteq [m]$ uniformly at random and the encryption is $\big(\mathbf{1}_S^T \cdot A, \mathbf{1}_S^T \cdot  \textbf{b} + c \lfloor
    \frac{q}{2} \rfloor\big)$, where $\mathbf{1}_S\in \01^{m}$ is defined as $\mathbf{1}_S(i)=1$ if and only if $i\in S$

    \noindent \textbf{Decryption:} In order to decrypt the ciphertext $(a,b)$ using
    $\Kpriv = \textbf{s}$, output $0$
    if $b - a^T \textbf{s} \Mod{q} \leq~\lfloor\frac{q}{4}\rfloor$, otherwise output $1$.
  \end{definition}

  \begin{theorem}
\label{thm:regevsvphardnesscrypto}
 {
    Let $d \in \natural$, $\eps>0$ be a constant, $q=\poly(d)$
    and $m\geq (1+\eps)(d+1)\log{q}$ be polynomially bounded in $d$. 
    Then, the probability of decryption error for $\LWEPKE_{d,q,m}$ is
    $2^{-\omega(d^2/m)}$. Moreover, an adversary can
    distinguish an encryption of $0$ from an encryption of $1$ in
    polynomial time with
    non-negligible advantage over a random guess iff there is a polynomial-time distinguisher for
    $\LWE_{d,q,\chi,m}$.}
  \end{theorem}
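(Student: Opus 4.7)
The plan is to establish correctness and the ``iff'' security reduction separately, with the security direction argued through a three-hybrid argument.

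First I would verify correctness. A ciphertext of a bit $c$ under public key $\Kpub = (A, \mathbf{b} = A\mathbf{s} + e)$ has the form $(\mathbf{a}', b') = (\mathbf{1}_S^T A,\, \mathbf{1}_S^T \mathbf{b} + c \lfloor q/2 \rfloor)$. Substituting $\mathbf{b} = A\mathbf{s} + e$ yields $b' - \mathbf{a}'^T \mathbf{s} = \mathbf{1}_S^T e + c \lfloor q/2 \rfloor \Mod{q}$. Hence decryption succeeds precisely when the aggregated noise $|\mathbf{1}_S^T e|$ reduced mod $q$ is bounded by $q/4$. Since each coordinate of $e$ is drawn from the discrete Gaussian $\chi = D_{\ints_q, \alpha}$ and $|S| \leq m$, the sum $\mathbf{1}_S^T e$ is (after a standard tail-truncation argument) a discrete Gaussian with parameter at most $\sqrt{m}\, \alpha q$. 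A Gaussian tail bound then gives $\Pr[|\mathbf{1}_S^T e| \geq q/4] \leq \exp(-\Omega(1/(m \alpha^2)))$. With the appropriate cryptosystem-level choice of $\alpha$ (smaller than in~\Cref{cor:redlwetogapsvp}, typically $\alpha = O(1/d)$), this bound simplifies to $2^{-\omega(d^2/m)}$.

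For the forward direction of the ``iff'' (cryptosystem attack $\Rightarrow$ $\LWE$ distinguisher), I would set up a three-step hybrid. Let $H_0 = ((A, A\mathbf{s} + e),\, \enc(c))$ be the real distribution, $H_1 = ((A, \mathbf{u}),\, (\mathbf{1}_S^T A, \mathbf{1}_S^T \mathbf{u} + c \lfloor q/2 \rfloor))$ use a uniform $\mathbf{u} \in \ints_q^m$ in place of $A\mathbf{s}+e$, and $H_2 = ((A, \mathbf{u}),\, (\mathbf{w}, y + c \lfloor q/2 \rfloor))$ use an independent uniform $(\mathbf{w}, y)$ as the ciphertext. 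Any distinguisher between $H_0$ and $H_1$ immediately parses as a distinguisher for $\LWE_{d,q,\chi,m}$, where the $m$ rows of $A$ paired with the entries of the second component are the LWE samples. The gap between $H_1$ and $H_2$ is controlled by the leftover hash lemma applied to the universal hash family $\mathbf{1}_S \mapsto (\mathbf{1}_S^T A, \mathbf{1}_S^T \mathbf{u})$ into $\ints_q^{d+1}$; the hypothesis $m \geq (1+\eps)(d+1)\log q$ is exactly what makes the resulting statistical distance $\negl(d)$. In $H_2$ the ciphertext is independent of $c$, so no attacker has any advantage. Triangle inequality then gives that an adversary with non-negligible advantage $\eta(d)$ on $H_0$ induces $\LWE$ distinguishing advantage at least $\eta(d) - \negl(d)$.

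For the reverse direction, I would invoke the standard search-to-decision reduction for $\LWE$: a polynomial-time distinguisher for $\LWE_{d,q,\chi,m}$ yields (with polynomial overhead) an algorithm that recovers the secret $\mathbf{s}$ from polynomially many samples. Given $(\Kpub, \enc(c))$, one applies this algorithm to $\Kpub$ to extract $\Kpriv = \mathbf{s}$, and then runs honest decryption, which succeeds with probability $1 - 2^{-\omega(d^2/m)}$ by the correctness step, yielding an overwhelming attack. The main technical care is in the first direction: bookkeeping the distinguishing advantages through the hybrids, and verifying the parameter regime simultaneously supports the leftover-hash hypothesis $m \geq (1+\eps)(d+1)\log q$ and the concentration rate $d^2/m \to \infty$ needed for correctness. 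Both are standard consequences of the stated hypotheses on $m, q$ and a suitable noise parameter $\alpha$.
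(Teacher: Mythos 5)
Your sketch is correct and is essentially the standard argument from the literature: the paper itself does not prove \Cref{thm:regevsvphardnesscrypto} but imports it from Regev~\cite{regev:lattice}, whose proof proceeds exactly as you describe (Gaussian tail bound for correctness, an LWE-plus-leftover-hash-lemma hybrid for the forward security direction, and search-to-decision for the converse). The only points to keep an eye on when fleshing this out are the ones you already flag yourself: the noise parameter $\alpha$ must be fixed so that $1/(m\alpha^2)=\omega(d^2/m)$, and the search-to-decision reduction needs $q=\poly(d)$ of a suitable form (e.g.\ prime or a product of small primes), both of which hold for the parameters in the statement.
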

  \vspace{-0.8em}

  The following property of $\LWEPKE{}$ was proven by Klivans and Sherstov~\cite{KlivansS09}.
	
	\begin{lemma}[Lemma~4.3~\cite{KlivansS09}]
		\label{lem:SVPindeg2ptf}
    The decryption function of \LWEPKE{} can be computed by light degree-$2$ PTFs.
	\end{lemma}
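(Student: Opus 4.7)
The plan is to express the Boolean decryption function as a polynomial-size combination of light halfspaces in a binary encoding of the ciphertext, and then collapse that combination into a single quadratic polynomial threshold function.

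First, I would switch to the bit-level encoding. Write each $a_i, b \in \ints_q$ in binary using $\lceil \log q\rceil$ bits, so the input to the Boolean decryption function is $x \in \{0,1\}^n$ with $n = (d+1)\lceil \log q\rceil = \poly(d)$ (since $q = \poly(d)$ by \Cref{cor:redlwetogapsvp}). Then the integer
\[
L(x) \;=\; b - \sum_{i=1}^d s_i a_i \;=\; \sum_j 2^j b_j \;-\; \sum_{i,j} s_i\, 2^j\, a_{i,j}
\]
is a linear form in $x$ with integer coefficients of magnitude at most $q^2$ and $\ell_1$-norm at most $(d+1)q^2 = \poly(d)$. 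This is the \emph{light} linear form on which decryption depends.

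Second, I would reduce the modular threshold to a disjoint union of intervals. Since $|L(x)| \le O(dq^2)$, the integer shift $k = (L(x) - L(x) \bmod q)/q$ lies in a set $K \subset \ints$ of size $|K| = O(dq) = \poly(d)$. The decryption outputs $0$ exactly when
\[
L(x) \;\in\; \bigsqcup_{k \in K}\;\bigl[kq,\; kq + \lfloor q/4\rfloor\bigr],
\]
and membership in the $k$-th interval is the AND of two light halfspaces $\{L(x) \ge kq\}$ and $\{L(x) \le kq + \lfloor q/4\rfloor\}$ in $x$. Thus decryption is an OR over $k\in K$ of ANDs of two light halfspaces. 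Crucially, because the intervals are pairwise disjoint, the OR coincides with the \emph{sum} of the AND-indicators, which already takes values in $\{0,1\}$.

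Third, I would collapse this structure to a single light degree-$2$ PTF. For each $k$ the AND-indicator can be written as the product of two halfspace indicators $\alpha_k(x)\beta_k(x)$; replacing each indicator by the associated affine form $L(x) - kq$ and $kq + \lfloor q/4\rfloor - L(x)$, multiplied out, gives a quadratic $p_k(x)$ in the input bits whose sign captures the interval membership. Summing the rescaled $p_k$'s over $k \in K$ (exploiting disjointness so that only one term is ``active'' at any input) and subtracting from $1/2$ yields a single degree-$2$ polynomial $P(x)$ whose sign is the decryption bit. Each $p_k$ has coefficient $\ell_1$-norm bounded by the square of $L$'s norm, i.e.\ $\poly(d)$; summing $|K| = \poly(d)$ such quadratics keeps the total $\ell_1$-norm polynomial in $d$, so $\indic{P(x) \ge 0}$ is a light degree-$2$ PTF.

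The main obstacle is the final collapse: ensuring that the rescaled sum $P(x)$ actually has the correct sign uniformly over all inputs, since the ``inactive'' quadratics contribute nonzero values whose cumulative magnitude could overwhelm the active term. Handling this rigorously amounts to choosing weights that decay quadratically in $|k - k_*|$ (where $k_*$ is the active shift) — exactly the kind of Lagrange-interpolation-style bookkeeping used by Klivans and Sherstov — and verifying that this choice preserves the $\poly(d)$ bound on coefficient $\ell_1$-norm after expansion. Should a single clean PTF prove elusive, the same disjoint-intervals argument already expresses decryption as a light threshold of light halfspaces, which lies in $\TCz_2$ by \Cref{claim:halfspaceintc0}; this is sufficient for every downstream use of the lemma in \Cref{result:tc02}.
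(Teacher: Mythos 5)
The paper does not actually prove this lemma---it imports it verbatim as Lemma~4.3 of Klivans--Sherstov \cite{KlivansS09}---so the comparison here is between your reconstruction and the cited result. Your first two steps are sound and match the standard analysis: writing $L(x)=b-\sum_i s_i a_i$ over the bit-encoding gives a linear form with $\ell_1$-norm $O(dq^2)=\poly(d)$, and the decryption condition $L(x)\bmod q\le\lfloor q/4\rfloor$ is exactly membership of $L(x)$ in a union of $|K|=\poly(d)$ pairwise disjoint intervals, i.e.\ an OR of ANDs of pairs of light halfspaces on the same linear form.

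The gap is the final collapse, and it is not merely a bookkeeping issue: the mechanism you propose provably cannot work. With $p_k(t)=(t-kq)(kq+\lfloor q/4\rfloor-t)$, the active term is at most $q^2/64$ while the term for the \emph{adjacent} shift evaluated at the same point is negative with magnitude $\Theta(q^2)$; so for any \emph{fixed} positive weights $w_k$ one would need both $w_k\gtrsim 63\,w_{k+1}$ (for inputs active at $k$) and $w_{k+1}\gtrsim 63\,w_k$ (for inputs active at $k+1$), a contradiction. The suggested ``weights decaying in $|k-k_*|$'' is not available, since $k_*$ depends on the input and the polynomial must be fixed. More structurally, any degree-$2$ polynomial that depends on $x$ only through the value $t=L(x)$ has at most two sign changes in $t$ and hence cannot represent a union of two or more disjoint intervals; a correct degree-$2$ construction must therefore exploit the bit-level representation of the ciphertext beyond the single scalar $L(x)$, which your argument never does. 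This is precisely the idea one has to take from \cite{KlivansS09}, and it is missing here. Your fallback is also not quite right as stated: what you have is a threshold of \emph{ANDs of pairs} of halfspaces (depth $3$ naively), not a threshold of halfspaces, and \Cref{claim:halfspaceintc0} covers intersections, not unions. It can be repaired---for disjoint intervals on a common integer-valued form one has $\indic{\ell\in\bigcup_k[a_k,b_k]}=\sum_k(\indic{\ell\ge a_k}-\indic{\ell\ge b_k+1})$, which puts the decryption function in $\TCz_2$ and would indeed suffice to derive \Cref{result:tc02} directly from \Cref{thm:cryptotolearn}---but that proves a different (weaker) statement than the lemma, whose degree-$2$ PTF form is what \Cref{lem:halfspacesimpliesPTF} consumes in the paper's proof of \Cref{thm:hardnessoftc0two}.
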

\vspace{-0.8em}

\subsection{Ring-LWE}
\label{sec:rlwe}

One of the main drawbacks of $\LWE$-based cryptographic primitives is that their representations are not compact, which makes them inefficient to implement. To overcome this, Lyubashevsky et al.~\cite{LPR10} proposed a variant of $\LWE$ over polynomial rings, called \emph{Ring-$\LWE$}. 
{Unfortunately, the efficiency of implementing Ring-$\LWE$ does not
come for free: first, the average-case hardness of Ring-$\LWE$ is currently supported only
by the worst-case hardness of problems on ideal lattices; and secondly
the reduction is still
(partly) \emph{quantum} (see
Section~\ref{sec:hardnessofrlwe} for further discussion).} %

The following notation will be used whenever we refer to the ring variants of
lattice problems. Let $R$ be a degree-$d$ \emph{cyclotomic ring} of
the form ${R:=\ints[X]/\langle X^d+1\rangle}$ where $d$ is a power of $2$. In particular, the elements of $R$ will be represented by their residues modulo $(X^d + 1)$ -- which are integer polynomials of degree less than $d$. For an integer
modulus $q$, we let $R_q := {R / qR = \ints_q[X] / \langle X^d + 1 \rangle}$ where its
elements are canonically represented by integer polynomials of degree less than
$d$ with coefficients from $\ints_q$.

{Let $\Upsilon$ denote an efficiently sampleable distribution over the elements of $R$ that is concentrated on elements having $B$-bounded integer coefficients, i.e., let $\Upsilon$ be a variant of the $d$-dimensional discrete Gaussian distribution where each coefficient of the integer polynomial (in some canonical representation) is $B$-bounded with overwhelmingly high probability. {As with $\LWE$, the Ring-$\LWE$ problem focuses on finding a secret ring element $\mbf{s} \in R_q$ when given noisy \emph{ring} products $(\mbf{a}, \ \mbf{a} \cdot \mbf{s} + e)$ as samples where $\cdot$ and $+$ refer to the multiplication and addition operations in $R_q$.}

\begin{definition}[$\RLWE_{d,q,\Upsilon, m}$] %
	\label{def:RLWE}
  The (decision) Ring Learning with Errors problem with dimension $d$, modulus $q$ and a $B$-bounded distribution $\Upsilon$ over $R$ is defined as follows: On
  input {$m$ independent samples} $\{(\mathbf{a}_i, b_i) \in R_q \times
  R_q\}_i$, where the $\mathbf{a}_i$ are uniformly sampled from $R_q$, distinguish {(with non-negligible advantage)} between the following two cases: %
	\begin{itemize}
	\setlength{\leftmargin}{+3in}
		\item ($\RLWE$-samples) The $b_i$s are \emph{noisy products} with respect to
      a fixed secret $\mathbf{s}$ distributed uniformly in $R_q$, i.e., $b_i$s
      are of the form $b_i = \mathbf{a}_i \cdot \mathbf{s} + e_i \Mod q$ where
      $e_i\in R$ is sampled according to  $\Upsilon$.
		\item (Uniform samples) For every $i$, $b_i$ is uniformly sampled from $R_q$ and is independent of $\mathbf{a}_i$. %
	\end{itemize}
\end{definition}

Note that, similar to~\Cref{defn:quantumdistinguisher}, we can also consider
distinguishers, $\De$, for the ring-LWE problem $\RLWE_{d,q,\Upsilon,m}$. At times, we will consider the \emph{normal form} of the problem where the secret is sampled from the error distribution modulo $q$ i.e., $\mbf{s} \leftarrow \chi^d \Mod q$. It is known, from~\cite{LPR13}, that this form of the problem is as hard as the  problem in~\Cref{def:RLWE}. %

\vspace{-0.5em}
\subsubsection{Hardness of Ring-LWE}
\label{sec:hardnessofrlwe}

Lyubashevsky, et. al.~\cite{LPR10} first showed a worst-case hardness reduction
from the shortest vector problem $\mathsf{SVP}_{\gamma}$ on a
class of structured lattices called \emph{ideal lattices} to Ring-$\LWE$. It is believed
that $\mathsf{SVP}_{\gamma}$ is hard in the worst case, even for \emph{quantum}
algorithms. For instance, the best known quantum algorithms for
{(general)} $\mathsf{SVP}_{\poly(d)}$ still run in time $2^{O(d)}$. Although the hardness reduction between Ring-$\LWE$ and $\mathsf{SVP}_\gamma$ uses ideal lattices, the hardness assumptions for $\mathsf{SVP}_{\poly(d)}$ are still fairly standard. One caveat here is that that reduction poses restrictions on the types of rings and requires that the modulus $q = \poly(d)$. For more on this subject, we direct the interested reader to Peikert's survey~\cite{Peikert16}.

Subsequently, Langlois and Stehl\'{e}~\cite{LS15} improved the hardness
result for any modulus $q$ with $\poly(d)$ bit-size (therefore $q$ can be
super-polynomial in $d$). More recently, Peikert
et al.~\cite{PRS17} further improved the hardness result to work for every ring
and modulus by obtaining a direct reduction from the worst-case hardness of the
$\mathsf{SIVP}_{\gamma}$ (find approximately short independent vectors in a
lattice of length within a $\gamma$-factor of the optimum) {on ideal
lattices} to the (decision) $\RLWE_{d,q,\Upsilon,m}$ problem. We use the following result from their work applied to cyclotomic rings.

\begin{theorem}[Quantum Reduction from \cite{PRS17}]
\label{cor:rlwe-prs}
Let $R$ be a degree-$d$ cyclotomic ring and let $m = poly(d)$. Let $\alpha =
  \alpha(d) \in (0, 1)$ and $q = q(d) \geq 2$ be an integer such that $\alpha
  \leq 1/2 \sqrt{\log d / d}$ and $\alpha q \geq \omega(1)$. Let $\Upsilon$ be a
  $d$-dimensional discrete Gaussian distribution over $R$ with standard
  deviation at most $\alpha q$ in each coefficient. Then, there exists a
  polynomial-time quantum reduction from the $d$-dimensional $\mathsf{SIVP}_{\gamma}$ over $R$ to the
  $\RLWE_{d,q,\Upsilon,m}$ problem for any $\gamma \leq \max\{\omega(\sqrt{d\log d}/\alpha), \sqrt{2}d\}$. 

\end{theorem}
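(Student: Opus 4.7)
Since this is a restatement of the main hardness reduction of Peikert–Regev–Stephens-Davidowitz, my plan is to follow their ``quantum iterative step'' framework, which generalizes Regev's original $\LWE$ reduction to the ring setting and removes the need for an $\mathsf{SIVP}\to\mathsf{GapSVP}$ intermediate step. The first step is to reduce $\mathsf{SIVP}_\gamma$ on an ideal lattice $I$ to the task of \emph{sampling from a discrete Gaussian} $D_{I,s}$ at a width $s$ within a $\gamma$-factor of the smoothing parameter $\eta_\eps(I)$; this is the standard Micciancio–Regev-style observation that polynomially many samples from a narrow-enough discrete Gaussian on a lattice yield a set of short linearly independent vectors. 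So the goal becomes: using an $\RLWE_{d,q,\Upsilon,m}$ oracle, produce samples from $D_{I,s}$ for $s$ as small as $\sqrt{d\log d}/\alpha$ times $\lambda_d(I)$.

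The heart of the argument is an \emph{iterative quantum step} that, given polynomially many samples from $D_{I,r}$ at some width $r$, produces a polynomial number of samples from $D_{I,r'}$ with $r' = r\cdot\sqrt{d}/(\alpha q)$ (i.e.\ the width shrinks by a factor $\alpha q/\sqrt{d}$ per iteration). Starting from $r_0 = 2^{O(d)}\lambda_d(I)$ (trivially sampleable by rounding an LLL-reduced basis), a polynomial number of iterations drives $r$ down to the target width, yielding the claimed $\gamma$. The iterative step decomposes into two pieces. Classical piece: use the $D_{I,r}$ samples together with the $\RLWE$ oracle to solve bounded-distance decoding (BDD) on the dual ideal lattice $I^\vee$ at distance roughly $\alpha q/r$. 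The key observation is that, under the canonical embedding for cyclotomic rings, a BDD instance on $I^\vee$ translates into exactly an $\RLWE$-like sample $(\mathbf{a},\mathbf{a}\cdot\mathbf{s}+e)$ where the multiplicative secret $\mathbf{s}$ encodes the offset, the error is Gaussian of the right width, and $\mathbf{a}$ is determined by the Gaussian samples of $I$ we already hold. This translation is exactly where the \emph{ring} structure pays off, because multiplication in $R_q$ implements ideal-lattice arithmetic compactly, and this is where one exploits the chosen cyclotomic structure and the bound on $\alpha q$.

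Quantum piece: given a BDD oracle for $I^\vee$ at distance $\alpha q/r$, produce samples from $D_{I,r'}$ for $r' = r\cdot\sqrt{d}/(\alpha q)$. This is the step that only has a quantum implementation. One prepares a uniform superposition over lattice points of $I^\vee$ in a large box, tensored with a Gaussian state of width $\approx 1/r'$, uncomputes the BDD labels, and then applies the quantum Fourier transform. The Poisson summation / Fourier-duality between $D_{I,r'}$ and the Gaussian mass on cosets of $I^\vee$ converts the resulting state into a coherent superposition whose measurement statistics are precisely $D_{I,r'}$. The correctness analysis uses the smoothing-parameter machinery from Micciancio–Regev and its ring adaptation.

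The main obstacle, and the only place genuinely new work is needed relative to Regev's original $\LWE$ argument, is controlling the error growth and the ring-specific Gaussian analysis so that the reduction works for \emph{every} cyclotomic ring and \emph{every} modulus $q\geq 2$ (not just prime or specially structured $q$). Concretely, one must (i) argue correctness of BDD for errors drawn from a continuous Gaussian in the canonical embedding while the $\RLWE$ oracle expects a coefficient-wise bounded distribution $\Upsilon$, which requires showing the two distributions are close under the tower-of-fields transformations for arbitrary cyclotomics; and (ii) verify that the parameter constraints $\alpha\leq \tfrac{1}{2}\sqrt{\log d/d}$ and $\alpha q\geq\omega(1)$ are exactly what is needed to make both the classical BDD step succeed and the quantum Gaussian-sampling step reduce the width by a strictly super-constant factor. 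Once these two technical points are in place, composing polynomially many iterations and plugging into the reduction from $\mathsf{SIVP}_\gamma$ to discrete Gaussian sampling yields the stated $\gamma\leq\max\{\omega(\sqrt{d\log d}/\alpha),\sqrt{2}d\}$.
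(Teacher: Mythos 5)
The paper does not prove this statement at all: it is imported verbatim (with the authors' choice of parameters) from Peikert--Regev--Stephens-Davidowitz \cite{PRS17}, so there is no internal proof to compare your sketch against. Judged on its own terms, your outline correctly reproduces the Regev-style skeleton that underlies the result -- reduce $\mathsf{SIVP}_\gamma$ to discrete Gaussian sampling, then iterate a step whose classical half turns Gaussian samples plus an $\RLWE$ oracle into a BDD solver on $I^\vee$ and whose quantum half turns a BDD oracle into narrower Gaussian samples via a Fourier-transform argument.

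There is, however, a real gap in the classical half as you describe it. The theorem is about the \emph{decision} problem $\RLWE_{d,q,\Upsilon,m}$ (\Cref{def:RLWE} asks only for a distinguisher with non-negligible advantage), yet your step ``use the $\RLWE$ oracle to solve BDD'' implicitly treats the oracle as a \emph{search} oracle that returns the secret $\mathbf{s}$ encoding the BDD offset. Turning a mere distinguisher into a BDD solver is exactly the part that is nontrivial for arbitrary rings and moduli: the classical search-to-decision reductions of \cite{LPR10} only work for cyclotomic rings with special (e.g.\ prime, splitting) moduli, whereas the theorem as stated allows any integer $q \geq 2$. The actual content of \cite{PRS17} is a direct reduction from BDD (with a Gaussian-sample hint) to decision-$\RLWE$ via their ``oracle hidden center problem,'' in which one tracks the distinguisher's acceptance probability as a function of a perturbation parameter to locate the hidden offset. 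Your sketch identifies the ring-specific Gaussian analysis as the only new work, but without this decision-to-BDD bridge the iteration never gets off the ground. A secondary, smaller issue is the characterization of what is new in \cite{PRS17} (``removes the $\mathsf{SIVP}\to\mathsf{GapSVP}$ intermediate step''): Regev's original reduction already starts from $\mathsf{SIVP}$/$\mathsf{GapSVP}$ directly; the novelty is the any-ring, any-modulus decision reduction just described.
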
 
\vspace{-0.5em}

When $\gamma = \widetilde{O}(\poly(d))$, the best known quantum algorithms for $\mathsf{SIVP}_{\gamma}$ are still expected to take exponential time and any sub-exponential time algorithm for these approximation factors will be considered a breakthrough in the cryptography and quantum community.
Based on this, we will use the following three assumptions on the hardness of $\RLWE$ in the remainder of this work.

\begin{assumption}[Polynomial time hardness]\label{assumption:polynomial-hardness}
  Let $0 < \tau < 1$, $d \in \mathbb{N}$, $q = 2^{O(d^\tau)}$, $r \leq \poly(d)$ and $\Upsilon(r)$ be the discrete Gaussian distribution over degree-$d$ cyclotomic rings with standard deviation at most $r$ in each coefficient.
		The $\RLWE_{d,q,\Upsilon(r)}$ problem is hard for a $\poly(d)$-time quantum
    computer.
\end{assumption}

\vspace{-0.7em}
\noindent \Cref{assumption:polynomial-hardness} is used in \Cref{cor:tc0-hard} to show
that $\TCz$ cannot be learned in polynomial time.

\begin{assumption}[Quasi-polynomial time hardness]\label{assumption:quasipoly-hardness}
  Let $0 < \tau < 1$, $d \in \mathbb{N}$, $q = 2^{O(d^\tau)}$, $r \leq \poly(d)$ and $\Upsilon(r)$ be the discrete Gaussian distribution over degree-$d$ cyclotomic rings with standard deviation at most $r$ in each coefficient.
		The $\RLWE_{d,q,\Upsilon(r)}$ problem is hard for a $2^{O(\polylog(d))}$-time quantum
    computer.
\end{assumption}

\vspace{-0.7em}
\noindent \Cref{assumption:quasipoly-hardness} helps to show
that $\TCz$ cannot be learned in quasi-polynomial time in~\Cref{cor:tc0-hard}.

\begin{assumption}[Strongly sub-exponential time hardness]\label{assumption:subexp-hardness}
  Let $0 < \tau < 1$, $0  < \eps < \frac{1}{2}$ , $d \in \mathbb{N}$, $q = 2^{O(d^\tau)}$, $r \leq d^\lambda$ for $0 < \lambda < 1/2 - \eps$ and $\Upsilon(r)$ be the discrete Gaussian distribution over degree-$d$ cyclotomic rings with standard deviation at most $r$ in each coefficient.
		For every constant $\eta > 2$, the $\RLWE_{d,q,\Upsilon(r)}$ problem is hard for a $2^{o(d^{1/\eta})}$-time
    quantum computer.
\end{assumption}

\vspace{-0.7em}
\noindent \Cref{assumption:subexp-hardness} is used in \Cref{cor:ac0-hard} to
show that $\ACz$ cannot be learned in polynomial time.

\subsection{Learning with Rounding (LWR)}
\label{sec:lwrintro}
While the hardness results for $\LWE$ and $\RLWE$ suggest that they are good candidates for
quantum-secure cryptographic primitives, the amount of randomness these schemes need to
generate samples make them unsuitable for the construction of pseudo-random
{functions}. For this reason, we consider the Learning with Rounding problem ($\LWR$) which
was introduced by Banerjee, Peikert and Rosen~\cite{BPR12} and can be seen as
a deterministic variant of $\LWE$. 

The following notation will be used in this section. For $\mathbf{a},\mathbf{s}\in \ints_q^d$, let $\mathbf{a} \cdot \mathbf{s}$ be defined as $\sum_{j\in [d]} a_{j}s_j \mod q$. For $q \geq p\geq 2$, we define $\lfloor \cdot\rceil_p:\ints_q\rightarrow \ints_p$ as $\lfloor x\rceil_p= \lfloor (p/q)\cdot x\rceil$ where $\lfloor \cdot \rceil$ denotes the closest integer. Additionally, when $\mbf{x}$ is a ring element in $R_q$, we extend this rounding operation to rings as follows: define $\lfloor \cdot \rceil_p:R_q \rightarrow R_p$ as the operation where each polynomial coefficient $x_i \in \ints_q$ of $\mbf{x}$ is rounded to $\ints_p$ as $\lfloor x_i \rceil_p$ respectively. From here onwards, we assume 
that $q$ has a $\poly(d)$ bit size, i.e. $\log q \leq O(\poly(d))$, and, 
without loss of generality, that $p$, and $q$ are powers of~$2$.  

\begin{definition}[$\LWR_{d,q,p,m}$]%
\label{def:lwr}
	The (decision) Learning with Rounding problem with dimension $d$, modulus $q$ and rounding modulus $p<q$ is defined as follows: on input {$m$ independent samples} $\{(\mathbf{a}_i, b_i) \in \ints_q^d \times \ints_p\}_i$ where the $\mathbf{a}_i$'s are sampled uniformly from $\ints_q^d$, distinguish {(with non-negligible advantage)} between the following two cases: %
	\begin{itemize}
		\item ($\LWR$-samples) There is a fixed secret $\mathbf{s}$ uniformly sampled from $\ints_q^d$ such that the $b_i$s are \emph{rounded products} with respect to $\mathbf{s}$, i.e., $b_i = \lfloor \mathbf{a}_i \cdot \mathbf{s}\rceil_p$ for every $i$.

		\item (Uniform samples)  For every $i$, $b_i$ is uniformly sampled from $\ints_p$ and is independent of $\mathbf{a}_i$. 
	\end{itemize}
\end{definition} 
\vspace{-0.8em}

As $\LWR$ inherits all issues of inefficient implementation exhibited by $\LWE$, we are particularly interested in the ring version of this problem namely, Learning with Rounding over Rings. This is defined below.

\begin{definition}[$\RLWR_{d,q,p,m}$]%
\label{def:rlwr}
	The (decision) Learning with Rounding problem with dimension $d$, modulus $q$ and rounding modulus $p<q$ is defined as follows: on input {$m$ independent samples} $\{(\mathbf{a}_i, \mbf{b}_i) \in R_q \times R_p\}_i$ where the $\mathbf{a}_i$'s are sampled uniformly from $R_q$, distinguish {(with non-negligible advantage)} between the following two cases: %
	\begin{itemize}
		\item ($\RLWR$-samples) There is a fixed secret $\mathbf{s}$ uniformly sampled from $R_q$ such that the $b_i$s are \emph{rounded ring  products} with respect to $\mathbf{s}$, i.e., $\mbf{b}_i = \lfloor \mathbf{a}_i \cdot \mathbf{s} \rceil_p$ for every $i$.

		\item (Uniform samples)  For every $i$, $b_i$ is uniformly sampled from $R_p$ and is independent of $\mathbf{a}_i$. 
	\end{itemize}
\end{definition} 
\vspace{-0.8em}

While the definition above samples a secret uniformly from $R_q$, an equivalent problem samples $\mbf{s}$ using the $\RLWE$ error distribution $\Upsilon \Mod{q}$. This will have a bearing when we consider pseudo-random functions constructed from $\RLWR$. When the number of samples is arbitrary, the corresponding problems are denoted as $\LWR_{d, q, p}$ and $\RLWR_{d,q,p}$. Similar to \Cref{defn:quantumdistinguisher} of distinguishers for $\LWE_{d,q,\chi, m}$ one could analogously define classical and quantum distinguishers for the $\LWR_{d,q,p,m}$ and $\RLWR_{d,q,p,m}$ problems.

\subsubsection{PRFs from the LWR problem}
\label{sec:lwr-prg}

The hardness reduction in~\cite{BPR12} from $\RLWR_{d,q,p,m}$ (resp.
$\LWR_{d,q,p,m}$) to $\RLWE_{d,q,\Upsilon}$ (resp. $\LWE_{d,q,\chi}$) for $q =
\superpoly(d)$ implies that these problems may be hard for quantum computers.
This makes them good candidates from which to construct quantum-secure PRFs. 
The following PRF construction from~\cite[Section~5]{BPR12} will be of importance to us. %

\begin{definition}[$\RLWE$ degree-$k$ PRF]
\label{def:LWEPRF} 
  For parameters $d \in \natural$, moduli $q \geq p \geq 2$ and input length $k \geq 1$, let $R$ be a degree-d cyclotomic ring. The $\RLWEPRF_{d,q,p,k}$ function family $\RF$ is defined as $\RF := \{ f_{\mbf{a}, \mbf{s}_1, \ldots, \mbf{s}_k} : \01^k \rightarrow R_p \}$ where $\mbf{a} \in R_q$, $\mbf{s}_i \in R_q$ for $i \in [k]$ and 
  \[f_{\mbf{a},\mbf{s}_1,...,\mbf{s}_k}(x) := \left\lfloor \mbf{a} \cdot \Pi_{i \in
  [k]} \mbf{s}_i^{x_i}\right\rceil_p\]
  where $\left\lfloor \cdot \right\rceil_p$ denotes the rounding procedure over ring elements.
\end{definition}

Clearly, every function in the family $\RF$ defined above, outputs a ring element from $R_p$.
However, for the purposes of constructing the
simplest concept class that is hard to learn, we would prefer a PRF that outputs a
single bit. We can achieve this by modifying the construction to output the most significant bit of the PRFs output from~\Cref{def:LWEPRF} and denote it as the One-Bit RLWR PRF ($\oBRPRF$) as defined below. 

\begin{definition}[One-Bit $\RLWR$ PRF]
\label{def:RLWRPRF} 
  For parameters $d \in \natural$ and moduli $q \geq p \geq 2$, let $R$ be a degree-$d$ cycolotomic ring. Let $\phi_{p} : R_q \rightarrow \01$ be defined as $\phi_{p} (\mbf{g}) = \msb\left(\left\lfloor \mbf{g} \right\rceil_p \right)$ where $\msb(\cdot)$ for a ring element denotes the most significant bit of it's first polynomial coefficient. For an input length $k \geq 1$, the $\oBRPRF_{d,q,p,k}$ function family $\RF$ is defined as $\RF := \{ f_{\mbf{a}, \mbf{s}_1, \ldots, \mbf{s}_k} : \01^k \rightarrow \01 \}$ where $\mbf{a}, \mbf{s}_1, \ldots, \mbf{s}_k\in~R_q$ and 
  \[f_{\mbf{a},\mbf{s}_1,...,\mbf{s}_k}(x) = \phi_{p} \left(\mathbf{a} \cdot \Pi_{i \in [k]} \mathbf{s_i}^{x_i} \right).\]
\end{definition}

Note that the definition above is equivalent for $\phi$ outputting any arbitrary fixed bit from the representation of the ring element. 
~\cite{BPR12} showed that the $\RLWE$ degree-$k$ PRFs are both secure against classical adversaries. 
Combining techniques from~\cite{BPR12} and~\cite{zhandry:qprf},\footnote{Zhandry~\cite{zhandry:qprf} proved that the $\LWE$ degree-$k$ PRF is also secure against quantum adversaries. This was done by showing that the classical security of the PRF under the $\LWE$ assumption implies the quantum security under the same assumption.} one can conclude in a fairly straightforward manner that the $\RLWE$ degree-$k$ PRF is also secure against quantum adversaries under the $\RLWE$ assumption. For completeness, we show how this security proof also implies the security of the One-Bit $\RLWR$ PRFs.

\begin{lemma}
\label{lem:prf_sec}
  If the $\RLWE$ degree-$k$ \emph{PRF} is quantum-secure, then so is the
  One-Bit $\RLWR$ \emph{PRF}.
\end{lemma}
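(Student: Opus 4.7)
The plan is to prove the contrapositive by a straightforward oracle reduction: given a quantum distinguisher $\mathcal{D}_{\mathrm{bit}}$ for the One-Bit $\RLWR$ PRF with non-negligible advantage, I will build a quantum distinguisher $\mathcal{D}_{\mathrm{ring}}$ for the $\RLWE$ degree-$k$ PRF with (essentially) the same advantage, contradicting its quantum security.

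The key observation is that the One-Bit PRF is a deterministic, publicly computable post-processing of the $\RLWE$ degree-$k$ PRF: if $g_{\mbf{a},\mbf{s}_1,\dots,\mbf{s}_k}(x) := \lfloor \mbf{a}\cdot\prod_i \mbf{s}_i^{x_i}\rceil_p \in R_p$ denotes the $\RLWE$ degree-$k$ PRF and $\psi : R_p \to \{0,1\}$ denotes the map sending a ring element to the most significant bit of its first coefficient, then the One-Bit $\RLWR$ PRF equals $\psi \circ g_{\mbf{a},\mbf{s}_1,\dots,\mbf{s}_k}$. So $\mathcal{D}_{\mathrm{ring}}$ works as follows: on quantum oracle access to some $h:\{0,1\}^k\to R_p$ (which is either a random $g \in \RF_{\RLWE}$ or a uniformly random function), it runs $\mathcal{D}_{\mathrm{bit}}$ and answers each of its quantum queries $\sum_{x,b}\alpha_{x,b}\ket{x,b}$ by the standard ``compute-copy-uncompute'' trick: introduce an ancilla register over $R_p$, apply $O_h$ to get $\sum_{x,b}\alpha_{x,b}\ket{x,b,h(x)}$, XOR $\psi(h(x))$ (which is a classical function of the ancilla) into the $b$-register, and then apply $O_h$ a second time to uncompute the ancilla back to $\ket{0}$. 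This simulates the oracle $O_{\psi\circ h}$ using two quantum queries to $O_h$, so $\mathcal{D}_{\mathrm{ring}}$ runs in polynomial time whenever $\mathcal{D}_{\mathrm{bit}}$ does.

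Now I analyze the two cases for $h$. If $h = g_{\mbf{a},\mbf{s}_1,\dots,\mbf{s}_k}$ for uniformly random $\mbf{a},\mbf{s}_1,\dots,\mbf{s}_k$, then by construction $\psi\circ h$ is distributed exactly as a uniformly random member of $\oBRPRF_{d,q,p,k}$. If instead $h$ is a uniformly random function $\{0,1\}^k\to R_p$, then since $p$ is a power of $2$ (as assumed in Section~\ref{sec:lwrintro}) and $R_p = \ints_p[X]/\langle X^d+1\rangle$ is a product of $d$ independent copies of $\ints_p$ under the coefficient embedding, the first-coefficient MSB $\psi(h(x))$ is uniformly distributed in $\{0,1\}$ for each $x$, and independent across distinct inputs because the values $h(x)$ are. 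Hence $\psi\circ h$ is distributed exactly as a uniformly random Boolean function on $\{0,1\}^k$. Therefore
\[
\bigl|\Pr_{g\in \RF_{\RLWE}}[\mathcal{D}_{\mathrm{ring}}^{\ket{g}}=1] - \Pr_{h\in\mathcal{U}}[\mathcal{D}_{\mathrm{ring}}^{\ket{h}}=1]\bigr| = \bigl|\Pr_{f\in \RF_{\oBRPRF}}[\mathcal{D}_{\mathrm{bit}}^{\ket{f}}=1] - \Pr_{f'\in\mathcal{U}_{\{0,1\}}}[\mathcal{D}_{\mathrm{bit}}^{\ket{f'}}=1]\bigr|,
\]
which by hypothesis is non-negligible, contradicting the assumed quantum security of the $\RLWE$ degree-$k$ PRF.

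The only subtlety I anticipate is being careful about what ``uniform function into $R_p$'' means at the level of the oracle encoding, and verifying that the coefficient-wise decomposition of $R_p$ really does make $\psi$ send the uniform distribution on $R_p$ to the uniform distribution on $\{0,1\}$; both rely on $p$ being a power of $2$, which the paper assumes without loss of generality. Beyond that, everything is a standard oracle simulation and the advantage passes through with equality, so there is no loss in the reduction.
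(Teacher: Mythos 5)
Your proposal is correct and follows essentially the same route as the paper: both prove the contrapositive by simulating the one-bit oracle from the ring-valued oracle via a compute--extract-bit--uncompute simulation using two queries per simulated query, and both observe that a uniformly random function into $R_p$ yields a uniformly random Boolean function after taking the fixed output bit. Your additional remark about $p$ being a power of $2$ and the coefficient-wise structure of $R_p$ is a harmless elaboration of the same point the paper makes in one line.
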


\vspace{-0.5em}
\begin{proof}
  Let us assume that there exists a distinguisher $\De$ for the One-Bit
  $\RLWR$ PRF.
  This means that $\De$ can make (quantum) queries to an oracle $O$ and can
  distinguish if $O$ is a function picked from the One-Bit $\RLWR$ PRF family or a uniformly
  random Boolean function. %

  We construct a distinguisher $\De'$ for the $\RLWE$ degree-$k$
  PRF that uses $\De$ as a sub-routine. By definition, $\De'$ has (quantum) query access to an oracle $O'$
  that is either a function picked from the $\RLWE$ degree-$k$ PRF family or 
  a uniformly random function that outputs an element from $R_p$. %

In order to use $\De$ as a sub-routine successfully, $\De'$, using queries to its oracle $O'$, 
  simulates an oracle $\widetilde{O}$ that will answer the (quantum) queries from $\De$. 
  In this case, $\widetilde{O}$ is a function from One-Bit $\RLWR$ PRF when $O'$ 
  is a function from $\RLWE$ degree-$k$ PRF, or a uniformly random Boolean  function  otherwise. Using the definition of $\widetilde{O}$ and the fact that $\De$ is a
  distinguisher for the One-Bit $\RLWR$ PRF,
  one can conclude that $\De'$ is indeed a distinguisher for the $\RLWE$ degree-$k$ PRF.

  We finish the proof by showing how $\De'$ can simulate $\widetilde{O}$ using queries to $O'$.
  We now describe the action of $\widetilde{O}$ on the query $\ket{x}\ket{y}$ made by $\De'$ -- without loss of generality, it suffices to define the 
  actions of $\widetilde{O}$ the basis states since as the operations of $\De'$ are unitary.   $\De'$ makes a call to $O'$ using a $\ket{x}$ as the input register and fresh
  ancilla bits set to $\ket{0}$ as the output register, resulting in 
  $\ket{x}\ket{y}\ket{z}$, where $z = O'(x)$. $\De'$ computes the state
  $\ket{x}\ket{y \oplus z_1}\ket{z}$, where $z_1$ denotes the first bit of~$z$, followed by a second query 
  to $O'$ using the first and third registers as the input and output registers, respectively.
  $\De'$ then discards the last register (which has value $\ket{0}$) and answers the query
  with $\ket{x}\ket{y \oplus z_1}$.%
  Notice that if $O'$ is an oracle to the $\RLWE$ degree-$k$ PRF family, then $\widetilde{O}$ is simulating a function from the One-Bit $\RLWR$ PRF family; and  if $O'$ is a
  random function, the~first bit of the representation of its output is a random
  bit, which ensures that $\widetilde{O}$ is also a random~function.
\end{proof}

\vspace{-0.8em}
Putting all the security proofs together, we obtain the choice of parameters that would make some One-Bit $\RLWR$ PRF family quantum secure.
\begin{lemma}
\label{lem:qprf}
Let $d \in \natural$, $r > 0$, $2 \leq p \ll q$ and $k =\omega(\log d)$ such that $\log q \leq O(\poly(d))$. 

	Let $R$ be a degree-$d$ cyclotomic ring and let $\Upsilon(r)$ be the
    $d$-dimensional discrete Gaussian distribution over $R$ with standard
    deviation at most $r$ in each coefficient. Let $q \geq p \cdot k(r
    \sqrt{d+k} \cdot \omega(\log d))^k \cdot d^{\omega(1)}$. Let $\RF$ be the
    $\RLWRPRF_{d,q,p,k}$ function family where each secret $\mbf{s}_i$ is
    independently drawn from~$\Upsilon(r)$. 
    {If $\RLWE_{d,q,\Upsilon(r)}$ is hard for an $t(n)$-time quantum
    computer, then $\RF$ is
    a quantum-secure pseudo-random function family against $O(t(n))$
    adversaries.}
\end{lemma}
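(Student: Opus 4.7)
The plan is to chain together three known results: the classical security reduction of Banerjee--Peikert--Rosen, Zhandry's quantum lifting, and our \Cref{lem:prf_sec} relating the multi-output PRF to its one-bit variant.

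First, I would invoke the construction and analysis in~\cite[Section~5]{BPR12} applied to the ring setting. Under the stated parameter regime $q \geq p \cdot k(r\sqrt{d+k}\cdot\omega(\log d))^k \cdot d^{\omega(1)}$, with secrets $\mbf{s}_i$ drawn from $\Upsilon(r)$, the hybrid argument of BPR reduces the classical security of $\RLWEPRF_{d,q,p,k}$ (as in \Cref{def:LWEPRF}) to the (normal form) hardness of $\RLWE_{d,q,\Upsilon(r)}$. The role of the bound on $q$ is precisely to ensure that, along each of the $k$ hybrid steps, the $\RLWR$-style rounding drowns the accumulated noise coming from switching one ring multiplication from the ``real'' world to a uniform one; this yields a classical distinguishing advantage for $\RLWE$ that is at most $k$ times the PRF distinguishing advantage, with only a $\poly(d,k)$ overhead in runtime.

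Second, I would apply the lifting theorem of Zhandry~\cite{zhandry:qprf}. Zhandry shows that if a PRF construction has a classical security reduction to an underlying computational assumption (of the right ``sufficiently random'' form, which the $\RLWE$ assumption satisfies), then the same construction is also secure against quantum adversaries making quantum queries to the PRF, with only a polynomial loss in the reduction. Applying this to the reduction from the previous paragraph shows that $\RLWEPRF_{d,q,p,k}$ is quantum-secure against $O(t(n))$ adversaries whenever $\RLWE_{d,q,\Upsilon(r)}$ is hard for $t(n)$-time quantum algorithms.

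Finally, I would invoke \Cref{lem:prf_sec}: since the One-Bit $\RLWR$ PRF is obtained by taking a fixed bit of the output of the $\RLWEPRF$ construction, its quantum security follows from that of the multi-bit PRF, and the simulation in the proof of \Cref{lem:prf_sec} only costs two extra oracle queries per simulated query. Composing the three reductions preserves the $O(t(n))$ adversarial runtime bound and yields the claimed conclusion. The main obstacle here is bookkeeping rather than ideas: one must verify that the parameter conditions in the lemma statement match those required by the BPR hybrid (in particular the $(r\sqrt{d+k}\cdot\omega(\log d))^k$ term, which encodes the worst-case bound on the noise after $k$ ring multiplications of $\Upsilon(r)$-distributed secrets), and that Zhandry's lifting is indeed applicable to the specific hybrid structure used by BPR in the ring regime.
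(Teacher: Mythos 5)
Your proposal is correct and follows essentially the same route as the paper's (sketched) proof: the classical security argument of~\cite[Section~5]{BPR12} for the ring-based degree-$k$ PRF, lifted to quantum adversaries via Zhandry's technique~\cite[Section~F]{zhandry:qprf}, and then composed with \Cref{lem:prf_sec} to pass to the one-bit variant. The only cosmetic difference is that you phrase Zhandry's contribution as a black-box ``lifting theorem,'' whereas the paper describes it as adapting his argument to the ring setting; either way the chain of reductions and the accounting of the $O(t(n))$ runtime are the same.
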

\begin{sproof}
We begin with the security proof for the $\RLWEPRF_{d,q,p,k}$ against classical adversaries as shown in~\cite[Section~5]{BPR12}. By making adjustments almost identical to those in Zhandry's proof~\cite[Section~F]{zhandry:qprf} that was used to prove the quantum security of the $\LWE$ base PRF, we conclude that the $\RLWEPRF_{d,q,p,k}$ is secure against quantum adversaries. Using Lemma~\ref{lem:prf_sec}, we obtain the parameters $q, p, k$ for which the $\RLWRPRF_{d,q,p,k}$, $\RF$ is secure against quantum adversaries. %
\end{sproof}

Next, we demonstrate that there exist efficient constant-depth circuit implementations for the One-Bit $\RLWR$ PRF family $\RF$.

\begin{lemma}
\label{lem:PRFinTC0}
Let $d \in \natural$, $k = \omega(\log d)$, let $q =d^{\omega(1)}$ and $2 \leq p \leq q$ be powers of $2$ such that $\log q \leq O(\poly(d))$. Let $R$ be the degree-$d$ cyclotomic ring. Consider the $\RLWRPRF_{d,q,p,k}$ function family $\RF$ defined in~\Cref{def:RLWRPRF}. For every $\mbf{a}, \mbf{s}_1, \ldots, \mbf{s}_k \in R_q$, 
$f_{\mbf{a}, \mbf{s}_1, \ldots, \mbf{s}_k} \in \RF$ can be computed by an $O(\poly(d, k))$-sized $\TCz$~circuit. 
\end{lemma}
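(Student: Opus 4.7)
The plan is to show that every operation needed to evaluate $f_{\mbf{a},\mbf{s}_1,\ldots,\mbf{s}_k}(x) = \phi_p\bigl(\mbf{a}\cdot\prod_{i\in[k]}\mbf{s}_i^{x_i}\bigr)$ can be realized by a constant-depth, polynomial-size threshold circuit, treating the ring elements as their coefficient vectors in $\ints_q^{d}$ of total bit-length $O(d\log q) = O(\poly(d))$. The proof naturally splits into three blocks: (i) reducing the whole computation to iterated multiplication of $(k+1)$ ring elements in $R_q$; (ii) implementing iterated ring multiplication in $\TCz$; and (iii) implementing the rounding map $\phi_p$ and MSB extraction in $\TCz$. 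The $\mbf{a},\mbf{s}_i,x_i$ are given as input in their bit representations, and the circuit outputs one bit.

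First I would replace the exponent-by-bit expression with an iterated product of $k+1$ explicit factors. For each $i$, define $\mbf{t}_i := \mbf{s}_i$ if $x_i=1$ and $\mbf{t}_i := \mbf{1}$ (the multiplicative identity of $R_q$) if $x_i=0$; bit-wise selection between the two representations is an $\ACz$ (in fact $\NCz$) operation on the $O(d\log q)$ coefficient bits, conditioned on the single bit $x_i$. Then $\prod_i \mbf{s}_i^{x_i} = \prod_i \mbf{t}_i$, and the quantity whose rounding we need is $\mbf{a}\cdot\mbf{t}_1\cdots\mbf{t}_k$, an iterated product of $k+1=O(\poly(d))$ elements of $R_q$.

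The main step, and the one I expect to be the chief obstacle, is block (ii): computing the iterated product $\mbf{a}\cdot\mbf{t}_1\cdots\mbf{t}_k$ in $R_q$ by a $\TCz$ circuit of size $O(\poly(d,k))$. The relevant tool is the Hesse--Allender--Barrington theorem \cite{HAB02} that iterated multiplication of $n$ integers each of $n$ bits is in $\TCz$; together with the techniques used in \cite{LPR13} for ring-arithmetic in cyclotomic rings, this lifts to $R_q=\ints_q[X]/\langle X^d+1\rangle$. Concretely, I would encode each factor $\mbf{t}_j$ by its coefficient vector in $\ints_q^d$ and then view the entire product as a polynomial product in $\ints[X]$ followed by reduction modulo $X^d+1$ and modulo $q$. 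The polynomial product of $k+1$ factors, each of degree $<d$ with $\log q$-bit coefficients, is a single polynomial of degree $(k+1)(d-1)$ whose coefficients are sums of products of $(k+1)$ coefficients; each such coefficient is a sum of at most $d^{k+1}$ products of $(k+1)$ integers of bit-length $\log q$, i.e., an integer of bit-length $O((k+1)(\log q+\log d)) = \poly(d)$. Using the $\TCz$ algorithm of \cite{HAB02} for iterated integer multiplication (applied in parallel to each monomial's contribution), together with $\TCz$ iterated addition, we obtain each coefficient of the unreduced polynomial product in $\TCz$; the reduction modulo $X^d+1$ is a fixed linear substitution on these coefficients, and reduction modulo $q$ for $q$ a power of two is just truncation of bits, both being in $\ACz$. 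Since all of these stages compose in constant depth, we obtain the representative in $R_q$ of the entire iterated product in $\TCz$.

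Finally, for block (iii), given the first coefficient $g_0\in\ints_q$ of the result, the rounding $\lfloor g_0\rceil_p$ with $p,q$ powers of $2$ and $p\le q$ is nothing but extracting the top $\log p$ bits of $g_0$ after adding the rounding constant $q/(2p)$ modulo $q$; integer addition modulo a power of $2$ is in $\ACz$. Taking the MSB $\msb(\lfloor g_0\rceil_p)$ is then a single wire. Composing blocks (i)--(iii) gives an $O(\poly(d,k))$-size constant-depth threshold circuit computing $f_{\mbf{a},\mbf{s}_1,\ldots,\mbf{s}_k}$, proving the lemma. I would flag the $\TCz$ iterated ring multiplication as the heart of the argument, since everything else is either hard-wired selection, bit truncation, or addition modulo a power of two.
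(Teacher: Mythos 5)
Your blocks (i) and (iii) match the paper's proof, but block (ii) --- which you correctly flag as the heart of the argument --- has a genuine gap as written. You propose to compute the iterated product in the \emph{standard coefficient representation} by expanding $\mbf{a}\cdot\mbf{t}_1\cdots\mbf{t}_k$ into monomials and running the \cite{HAB02} iterated-multiplication circuit ``in parallel to each monomial's contribution,'' followed by an iterated sum. But the coefficient of $X^m$ in the product of $k+1$ polynomials of degree $<d$ is a sum of up to roughly $d^{k}$ monomial products, and the total number of monomials is $d^{k+1}$. Since the lemma takes $k=\omega(\log d)$, this is $d^{\omega(\log d)}=2^{\omega(\log^2 d)}$, which is super-polynomial in $d$ and certainly not $O(\poly(d,k))$. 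So the circuit you describe does not meet the claimed size bound; the naive schoolbook expansion is exactly the obstruction the lemma has to get around.

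The paper avoids this by first converting the $k+1$ ring elements into the canonical embedding $\sigma:R_q\rightarrow\complex^d$ (via FFT/CRR, which is in $\TCz$ by \cite{RT92,HAB02}), in which ring multiplication becomes \emph{coordinate-wise} multiplication. The iterated product then reduces to $d$ parallel iterated multiplications of $k+1$ integers of $O(\log q)=O(\poly(d))$ bits each --- a $\poly(d,k)$-sized $\TCz$ computation by \cite{HAB02} --- followed by the inverse transform. Your argument could alternatively be repaired without the canonical embedding by a Kronecker-substitution step: pack each polynomial $\mbf{t}_j$ into a single integer $\mbf{t}_j(2^B)$ for $B=O(k(\log q+\log d))$ large enough that no coefficient blocks overlap, apply \cite{HAB02} to the resulting $k+1$ integers of $\poly(d,k)$ bits, and read the product coefficients back off before reducing modulo $X^d+1$ and $q$. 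Either fix works, but some such device replacing the monomial expansion is essential.
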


\begin{proof} 
From~\Cref{def:RLWRPRF}, notice that given a $k$-bit input $x$, for any function
  $f_{\mbf{a}, \mbf{s}_1, \ldots, \mbf{s}_k} \in \RF$, $f_{\mbf{a}, \mbf{s}_1,
  \ldots, \mbf{s}_k} (x)$ can be computed as follows: (a) take the product of
  $\mbf{a}$ with those secrets $\mbf{s}_i$ for which $x_i=1$ to obtain an
  element in $R_q$, (b) round it to an element in $R_p$ and (c) output the first
  bit of the representation of this element. Since $q$ and $p$ are powers of
  $2$, the coefficient-wise rounding procedure just truncates each coefficient to the $\log p$ most significant bits. Hence, the last two steps can be efficiently executed by a $\poly(d)$-sized $\TCz$ circuit as there are most $d$ coefficients to round. It remains to show that computing the iterated multiplication of at most $k+1$ ring elements can be performed using a $\TCz$ circuit. 

Recall that the ring elements, in the standard representation are stored as degree-$d$ polynomials and multiplication in this representation may not be efficient. However, using a canonical embedding $\sigma: R_q \rightarrow \mathbb{C}^d$, arising from algebraic number theory, any ring element $z \in R_q$ can be mapped to the complex vector $(z(\eta_i))_i \in \complex^d$ where $\eta_i$ is the $i$th complex root of $(-1)$. In other words, $z$ can be efficiently stored as complex vectors of length $d$ and each vector entry of size at most $O(\log q) = O(\poly d)$ bits. The advantage in this representation is that ring multiplication reduces to coefficient wise multiplication of these vectors~\cite{LPR10,LPR13}. Now, Step (a) can be broken up into the following operations:
\begin{enumerate}
\item Convert the $k+1$ ring elements ${\mbf{a}, \mbf{s}_1, \ldots, \mbf{s}_k}$ into their canonical embeddings $\sigma(\mbf{a}), \{\sigma(\mbf{s}_i)\}_{i\in [k]} \in~\mathbb{C}^d$.
\item Perform coordinate wise product of at most $k+1$-vectors in this embedding.
\item Convert the solution back into the standard representation, i.e., perform the inverse of the embedding $\sigma$ %
\end{enumerate}

The first and last steps to convert to and from the canonical embedding is
  performed using Fast Fourier Transforms (FFT) and its variants or the Chinese
  Remainder Representation (CRR)~\cite{LPR13} which can be executed in
  $\TCz$~\cite{RT92,HAB02}. The second step is just $d$-parallel iterated
  multiplications of at most $k+1$ elements each of $O(\log q) = O(\poly
  d)$-bits which can be performed by an $O(\poly(d, k))$-sized $\TCz$
  circuit~\cite{HAB02}. Putting all the steps together,  $f_{\mbf{a}, \mbf{s}_1, \ldots, \mbf{s}_k} (x)$ can be computed by an $O(\poly(d, k))$-sized $\TCz$ circuit.
\end{proof}

\vspace{-0.8em}
One problem that prevents us from computing any function from the family $\RF$
in smaller circuit classes, say $\ACz$, lies in the fact that performing FFT,
the CRR embedding and iterated multiplication cannot be done efficiently in
$\ACz$. However, we use two observations to demonstrate that there exist
sub-exponential
sized $\ACz$ circuits that compute functions in the family $\RF$: (i) arithmetic
operations on $O(\polylog n)$-bit numbers can be executed using $\poly(n)$-sized
$\ACz$ circuits\footnote{Basic arithmetic operations can be expressed by
circuits containing \textsf{AND}, \textsf{NOT}, \textsf{OR} and \textsf{MAJ}
gates. \textsf{MAJ} gates with $O(\polylog n)$-bit fan-in can be computed by
$\ACz$ circuits.}~\cite{MT98}; (ii) iterated multiplication of $t$ numbers each
of length at most $n$ bits can be performed in $\poly(2^{n^\eps},
2^{t^\eps})$-sized $\ACz$ circuits~\cite{HAB02,HV06}. We proceed by 
defining the PRF with a {smaller dimension in the underlying problem
such that the PRF's circuit size scales sub-exponentially in the dimension}.
This resembles the technique used by Kharitonov~\cite{Kharitonov93} to show that some instances  of the BBS PRG can be implemented in $\ACz$. 

\begin{lemma}
\label{lem:PRFinAC0}
Let $d \in \natural$ and choose constants $c > \eta > 1$ and $\eps < 1$ such that $\eps \leq \frac{\eta}{\eta+c}$. 
Let $d > 2$ and $k = (d)^{\frac{1}{\eta}}$. Choose $2 < p \ll q = \superpoly(d)$ as powers of
  $2$ such that $\log q = \frac{k^2}{k+1} \log d$. Consider the $\RLWRPRF_{d,q,p,k}$
  family $\RF$ defined in~\Cref{def:RLWRPRF}. For every $\mbf{a}, \mbf{s}_1,
  \ldots, \mbf{s}_{k} \in R_{q}$, 
  $f_{\mbf{a}, \mbf{s}_1, \ldots,
  \mbf{s}_{k}} \in \RF$  can be computed by a $2^{O(d^{1/c})}$-sized $\ACz$
  circuit of depth $O(1/\eps)$. 
\end{lemma}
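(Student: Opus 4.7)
The plan is to construct, for each $f\equiv f_{\mbf{a},\mbf{s}_1,\ldots,\mbf{s}_k}\in\RF$, an $\ACz$ circuit of size $2^{O(d^{1/c})}$ and depth $O(1/\eps)$ that evaluates $f(x)=\phi_p\!\bigl(\mbf{a}\cdot\prod_{i:x_i=1}\mbf{s}_i\bigr)$ on input $x\in\01^k$. Because $\mbf{a}$ and the secrets $\mbf{s}_i$ are hardwired constants, and because $p$ and $q$ are powers of $2$ (so $\lfloor\cdot\rceil_p$ is just bit truncation and $\phi_p$ is a single projection), the only nontrivial task is to compute the iterated ring product $\mbf{a}\cdot\mbf{y}_1\cdots\mbf{y}_k\in R_q$, where each $\mbf{y}_i\in R_q$ equals $\mbf{s}_i$ if $x_i=1$ and the multiplicative identity otherwise. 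Assembling $\mbf{y}_i$ bit-by-bit from $x_i$ and the hardcoded $\mbf{s}_i$ is a depth-$2$ $\ACz$ operation.

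I would then reduce ring multiplication in $R_q=\ints_q[X]/(X^d+1)$ to a single integer multiplication via Kronecker substitution. Pick a slot width $M=\Theta(k\log d+k\log q)=\Theta(k^2\log d)$, large enough that every coefficient of the (unreduced) integer polynomial $\mbf{a}(X)\prod_i\mbf{y}_i(X)\in\ints[X]$, each bounded by $d^kq^{k+1}$, fits in an $M$-bit window. Encode each ring element $\mbf{y}_\ell$ as the integer $U_\ell=\sum_{j<d}\mbf{y}_\ell[j]\,2^{jM}$ of length $n:=dM$ bits; the integer product $U_0U_1\cdots U_k$ then stores the coefficients of the unreduced polynomial product in disjoint $M$-bit windows, which can be projected out by rewiring. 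To compute this integer product, invoke the $\ACz$ iterated-multiplication construction of~\cite{HAB02} (see also~\cite{HV06}): the product of $t$ integers of $n$ bits each admits an $\ACz$ circuit of depth $O(1/\eps)$ and size $\poly(2^{n^\eps},2^{t^\eps})$ for every $\eps\in(0,1)$. Finally, reduce modulo $(X^d+1)$ by forming, for each of the $d$ target positions, a signed sum of at most $k+1$ extracted coefficients, and reduce modulo $q$ by truncation; each such sum acts on only $O(kM)=O(\polylog d)$-bit operands, so by observation (i) of the preceding discussion it is realized by a $\poly(d)$-size $\ACz$ circuit.

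The main obstacle is the parameter bookkeeping needed to certify that the whole pipeline fits within $2^{O(d^{1/c})}$. With $t=k+1=d^{1/\eta}$, $n=dM=\Theta(d^{1+2/\eta}\log d)$, and the hypotheses $c>\eta>1$ and $\eps\leq \eta/(\eta+c)$, one has to check that both $n^\eps$ and $(k+1)^\eps$ are dominated by $d^{1/c}$, so that the $\poly(2^{n^\eps},2^{t^\eps})$ bound from~\cite{HAB02,HV06} collapses into $2^{O(d^{1/c})}$. The precise choice $\log q=\tfrac{k^2}{k+1}\log d$ (rather than the naive $k\log d$) is exactly the slack needed, in combination with $\eps\leq\eta/(\eta+c)$, to absorb the contributions of the slot width $M$ and the multiplication count $k$ simultaneously into the target exponent $1/c$. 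All auxiliary steps---building the $\mbf{y}_i$, projecting window bits, summing at most $k+1$ numbers per reduced coefficient, and bit-truncating mod $q$---contribute only polynomial-size sub-circuits of constant depth, so the overall depth is $O(1/\eps)$ and the overall size is $2^{O(d^{1/c})}$, as required.
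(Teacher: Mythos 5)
Your overall pipeline (hardwiring the keys, selecting $\mbf{y}_i$ from $x_i$, truncation-based rounding since $p,q$ are powers of $2$, and isolating the iterated ring product as the crux) matches the paper's, but your central step --- reducing the iterated ring multiplication to a \emph{single} iterated integer multiplication via Kronecker substitution --- breaks the size bound, and this is a genuine gap rather than bookkeeping you can defer. Your packed integers have bit-length $n = dM = \Theta(d\cdot k^2\log d) = \Theta(d^{1+2/\eta}\log d)$, and the $\ACz$ iterated-product circuits of~\cite{HAB02,HV06} have size $\poly(2^{n^{\eps}},2^{t^{\eps}})$, i.e., exponential in $n^{\eps}$. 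Since $n \geq d$, you need at the very least $\eps \leq 1/c$, and in fact $\eps < \frac{\eta}{c(\eta+2)}$ once the factor $d^{2/\eta}\log d$ is included; but the hypothesis only gives $\eps \leq \frac{\eta}{\eta+c}$, and $\frac{\eta}{\eta+c} > \frac{\eta}{c(\eta+2)}$ whenever $c>1$. Concretely, at the maximal allowed $\eps = \eta/(\eta+c)$ your circuit has size $2^{\Theta(d^{(\eta+2)/(\eta+c)})}$, and $(\eta+2)/(\eta+c) > 1/c$ for all $c>\eta>1$ (e.g., $\eta=2$, $c=3$ gives exponent $4/5$ versus the target $1/3$). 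Downstream, where $d=\log^c n$ and the circuit must be $\poly(n)$-sized, your bound gives $n^{\mathrm{polylog}(n)}$, so the application to \Cref{thm:hardnessofac0} fails. Your assertion that the choice $\log q=\frac{k^2}{k+1}\log d$ together with $\eps\leq\eta/(\eta+c)$ is ``exactly the slack needed'' is therefore false for this encoding: the slack suffices to absorb $\log q$ into the exponent, not $d\cdot\log q$.

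The paper avoids this by never forming a single $\Omega(d)$-bit integer: it converts the ring elements to the canonical (CRT/FFT) embedding, in which multiplication in $R_q$ becomes $d$ \emph{independent, parallel} products, so the iterated-multiplication circuits only ever operate on $O(\log q) = O(k\log d) = O(d^{1/\eta}\log d)$-bit operands. Then $(\log q)^{\eps} = O(d^{\eps/\eta+o(1)}) = O(d^{1/(\eta+c)+o(1)}) = O(d^{1/c})$ under $\eps\leq\eta/(\eta+c)$, and the $d$-fold parallel repetition multiplies the size by only $d=\polylog$ of the final bound. To repair your argument you would either have to adopt this coordinate-wise representation (paying for the FFT/CRR conversions, as the paper does), or shrink $\eps$ below $\eta/(c(\eta+2))$ --- which still yields a constant-depth circuit but proves a quantitatively different statement from the lemma as written and, more importantly, does not follow from the stated hypothesis on~$\eps$.
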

\begin{proof}
For notational simplicity, we set $d' = 2^{(d)^{1/c}}$ i.e., $d = \log^c d'$ and proceed by first calculating the circuit size in terms of $d'$. 
The proof follows the ideas used in the proof of~\Cref{lem:PRFinTC0}.
  Specifically, the operations performed to evaluate $f_{\mbf{a}, \mbf{s}_1,
  \ldots, \mbf{s}_{k}}(x)$ break up in the same manner as before. Also, note
  that the input to the circuit is determined by the $k$-bit string $x$.

  Let us
  start our analysis with the 
  rounding of some element from $R_{q}$ to $R_{p}$. This can be done by
  truncating each coefficient to the most significant $\log p$ bits where $\log
  p \ll \log q = \frac{k^2}{k+1} \log d \leq  \ k \log d = O(\polylog d')$  since $d = \log^c d'$ and $k = (d)^{1/\eta} =
  { \log^{c/\eta} d'}$. From observation (i) above, this truncation can be performed by a $\poly(d')$-sized $\ACz$ circuit for each of the $d$ coefficients.

  In order to deal with the iterated multiplication of $\mbf{a} \in R_{q}$ with
  the secrets $\mbf{s}_1, \ldots, \mbf{s}_{k} \in R_{q}$, as before we first convert the ring elements from their standard representation to the canonical embedding $\sigma : R_{q} \rightarrow \complex^{d}$ so that every element $z \in R_{q}$ is a complex vector of length $d = O(\polylog d')$ and each vector entry of size at most $O(\log q) = O(\polylog d')$ bits~\cite{LPR13}. The complexity of the FFT and CRR operations needed to convert to and from the canonical embedding $\sigma$ in this case are constrained by the complexity of multiplication and exponentiation of $O(\log q)$-bit numbers. 
Then, using a variant of observation (ii) above that also works for
  exponentiation~\cite{HAB02, HV06}, these operations can be performed by
  $\poly\left({2^{(\log q)}}^\eps, {2^{k}}^\eps\right)$-sized $\ACz$ circuits of
  depth $O(1/\eps)$. Finally, for the $d$-parallel iterated multiplications of
  at most $k+1$ elements each of $O(\log q)$-bits, by observation (ii) we need $\poly\left({2^{(\log q)}}^\eps, 2^{k^\eps}\right)$-sized $\ACz$ circuits of depth $O(1/\eps)$ to evaluate it. 

The total circuit size for evaluating $f_{\mbf{a}, \mbf{s}_1, \ldots,
  \mbf{s}_{k}}(x)$ hinges on the complexity of $\poly({2^{(\log q)}}^\eps,
  2^{k^\eps})$ which we show to be $O(\poly(d'))$ below. First, notice that
  $2^{k^\eps} \leq 2^{{(\frac{k}{2} \log d)}^\eps} \leq  2^{{(\frac{k^2}{k+1} \log d)}^\eps} = {2^{(\log q)}}^\eps$ for $d > 4$.
  Hence, it suffices to bound only the complexity of the latter term in terms of
  $d'$:
\begin{align*}
	(\log q)^\eps & = \left(\frac{k^2}{k+1} \log d \right)^\eps \leq (k \log d)^\eps = (d)^{\frac{\eps}{\eta}} (\log d)^{\eps} & \left( \text{Since, } k = (d)^{\frac{1}{\eta}} \right)\\
	& =  (\log d')^{\frac{c \eps}{\eta}} O( (\log\log d')^\eps ) & \left( \text{Since, } d = \log^c d' \right)\\
	& \leq (\log d')^{\frac{c \eps}{\eta}} O((\log d')^\eps) & \\
	& = O\left( (\log d')^{\left(\frac{c}{\eta} + 1 \right) \eps} \right) & \\
	& = O\left( (\log d')^{\frac{c + \eta}{\eta} \eps} \right) & \\
  & = O(\log d') & \left(\text{Since, } \eps \leq \frac{\eta}{\eta + c} \right),
\end{align*} 
which implies that 	${2^{(\log q)}}^\eps  \leq 2^{O(\log d')} = O(\poly(d'))$.

Putting all the steps together, the function $f_{\mbf{a}, \mbf{s}_1, \ldots, \mbf{s}_{k}}(x)$ taking as input a $k$-bit number $x$ can be computed by an $\ACz$ circuit of size $O(\poly(d')) = O\left(\poly\left(2^{d^{1/c}}\right)\right) = 2^{O(d^{1/c})}$ and depth scaling as~$O(1/\eps)$.
\end{proof}

\vspace{-1em}
\section{Quantum-secure PRFs vs. quantum learning}
\label{sec:quantumsafePRFvslearn}
In this section we prove our main theorem which shows the connection between
efficient quantum learning and quantum algorithms which serve as distinguishers
for pseudo-random functions.  
The structure of our proof goes in the same lines
of the work of Kharitonov~\cite{Kharitonov93,Kharitonov:AC1}: we assume the existence of an efficient learner for
some PRF $\Fe$ and utilize it to construct an efficient distinguisher for $\Fe$ from
truly random functions. However,
we need here very different techniques, since the arguments used by
Kharitonov do not follow in the quantum setting.

A key part of the proof for this theorem hinges on bounding how well a quantum
algorithm predicts the output of a random function on a random input.
We assume that the algorithm is allowed to query an oracle to the function
(in superposition) at most $\query$ times. Each of these queries denotes performing a membership
query to $z$, the truth-table of the function. We show an upper bound on the
probability of predicting the output of this random function for a uniformly
random input.

\begin{lemma}
\label{lem:upperboundlemma}
  Let $k > 1$ and $f : \01^k \to \01$ be a random function. Consider a quantum
	  {algorithm} that makes $\query$ quantum membership queries to $f$. Given a uniformly random question $x\in \01^{k}$,
  the probability of the quantum algorithm correctly predicts $f(x)$ is    \begin{align}
    \label{eq:ub-probability}
    \Pr_{x\in \01^{k}}[h_x = f(x)] \leq  \frac{1}{2}+
    \sqrt{\frac{k\cdot \query}{2^k}},
  \end{align}
where $h_{x}$ is the output of the algorithm given the question $x$.
\end{lemma}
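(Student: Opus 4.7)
The plan is to model the $\query$-query algorithm as a two-party communication protocol, apply Corollary~\ref{cor:communicationmutualinformation} together with the chain rule, and close with Fano's inequality and Fact~\ref{fact:taylorseriesbinaryentropy}.

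For the setup, let Alice hold the truth table $z \in \{0,1\}^{2^k}$ of $f$ (uniform random) and Bob hold the question $x$. Bob simulates the algorithm: for each quantum membership query he sends the $(k+1)$-qubit query register to Alice, Alice applies the oracle $\ket{y,b} \mapsto \ket{y, b \oplus z_y}$, and returns it. Since the algorithm computes $h_x$ from the post-query ``hypothesis state'' $\ket{\phi_z}$ (so the queries are non-adaptive to $x$, which is essential: an $x$-adaptive algorithm could just query $x$), we have $\mathsf{QCC}(\pi) = 2(k+1)\query$, and Corollary~\ref{cor:communicationmutualinformation} with $X=z$, $Y=x$, $Z=h_x$ gives $\mbf{I}(h_x : z \mid x) \leq 2(k+1)\query$. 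Expanding along the $2^k$ i.i.d.\ uniform bits of $z$ via the chain rule, and using that conditioning on the independent variables $z_{<i}$ and $x$ cannot decrease the mutual information with an independent bit $z_i$, yields
\[
\sum_i \mbf{I}(h_x : z_i \mid x) \;\leq\; \mbf{I}(h_x : z \mid x) \;\leq\; O(k\query).
\]
By a random-access-code-style argument in the spirit of Nayak's bound, exploiting that $\ket{\phi_z}$ is a single quantum encoding of $z$ independent of $x$, this converts to the ``diagonal'' bound $\mbf{I}(z_x : h_x \mid x) \leq O(k\query/2^k)$.

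To conclude, Fano's inequality (Lemma~\ref{lem:fano}) gives $\mbf{I}(z_y : h_y \mid x{=}y) \geq 1 - \mbf{H}_b(1 - p_y)$ with $p_y := \Pr[h_y = z_y]$. Averaging over $y$ and applying Jensen's inequality to the convex function $p \mapsto 1 - \mbf{H}_b(p)$ at $p=1/2$ yields $1 - \mbf{H}_b(\bar p) \leq O(k\query/2^k)$, where $\bar p := \Pr[h_x = z_x]$. Fact~\ref{fact:taylorseriesbinaryentropy} (which gives $1 - \mbf{H}_b(1/2 + \delta) = \Omega(\delta^2)$ for small $\delta$) then implies $\bar p \leq \tfrac{1}{2} + O(\sqrt{k\query/2^k})$, matching the claimed bound up to constants.

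The main difficulty is the random-access-code step: data-processing alone only gives the trivial $\mbf{I}(z_x : h_x \mid x) \leq O(k\query)$, and recovering the factor of $2^k$ requires either invoking Nayak's accessible-information bound externally or deriving it from first principles within the framework of Corollary~\ref{cor:communicationmutualinformation}. The crucial ingredient for such a derivation is the non-adaptivity of the queries on $x$, which ensures that the $2^k$ potential measurements $\{M_y\}_y$ all act on a common quantum encoding $\ket{\phi_z}$ whose Holevo information about $z$ is bounded by the communication, so the accessible information about a typical bit must average to at most $O(k\query)/2^k$.
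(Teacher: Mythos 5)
Your overall architecture --- bound the information the algorithm extracts about the truth table via Corollary~\ref{cor:communicationmutualinformation}, then convert a prediction advantage into an information lower bound via Fano's inequality and Fact~\ref{fact:taylorseriesbinaryentropy} --- matches the paper's. But the step you yourself flag as ``the main difficulty'' is a genuine gap: the passage from $\mbf{I}(h_x : z \mid x) \leq O(k\query)$ to the diagonal bound $\mbf{I}(z_x : h_x \mid x) \leq O(k\query/2^k)$ is asserted rather than proved, and deferring it to ``a random-access-code-style argument in the spirit of Nayak's bound'' imports a nontrivial external theorem. The difficulty is largely an artifact of your framing: by letting Bob's output $h_x$ be produced by an $x$-dependent measurement, the $2^k$ outputs $\{h_y\}_y$ are not jointly defined classical random variables, and one really would need an accessible-information argument to sum their per-coordinate informations.

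The paper avoids this entirely. In the setting where the lemma is applied (the distinguisher of Theorem~\ref{thm:distinguisher-prf}), the algorithm makes its $\query$ queries and outputs a single \emph{classical} hypothesis $h:\01^k\to\01$ before $x$ is drawn, and $h_x$ is simply $h(x)$ --- you correctly note this non-adaptivity in $x$ is essential, but you do not exploit it fully. One then works with $\mbf{I}(z:h)$ directly: the communication bound gives $\mbf{I}(z:h)\leq (k+1)\query$, while independence of the bits $z_y$ gives $\mbf{I}(z:h)\geq \sum_y \bigl(1-\mbf{H}(z_y\mid h)\bigr)$, Fano gives $\mbf{H}(z_y\mid h)\leq \mbf{H}_b(p_y)$ for the per-coordinate error $p_y$, and concavity of $\mbf{H}_b$ (your Jensen step) plus the Taylor bound turns an assumed average advantage $\gamma/2$ into $\mbf{I}(z:h)\geq 2^{k+1}\gamma^2/\ln 2$, a contradiction. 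The factor $2^k$ thus arises from summing Fano over all $2^k$ coordinates of the truth table, not from any compression of the diagonal quantity; data processing ($\mbf{I}(z_y:h(y))\leq\mbf{I}(z_y:h)$) plus superadditivity is all that is needed. As written, the crucial inequality in your argument is missing; either adopt the hypothesis-function framing or actually carry out the Nayak-style step.
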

\begin{proof}
Let us denote $\gamma(k) =
  \sqrt{\frac{k \cdot \query}{ 2^{k}}}$. On input some index $x$, we view $h_x$ as the output of a ``hypothesis function" $h:\01^k\rightarrow \01$, such that $h_x = h(x)$.
By contradiction, let us assume that
	the hypothesis function $h$ satisfies %
		\begin{align}
	\label{eq:learningcontradiction}
      \Pr_{x \in \01^k}\left[f(x) = h(x)\right] >
      \frac{1}{2} + \frac{\gamma(k)}{2},
	\end{align} 
	where $x$ is drawn uniformly from $\01^k$.  In this case,  the mutual
  information between the truth table of $f$, which we denote as $z$,  and the truth-table of the
  hypothesis $h$  is given by
	\begin{align}\label{eq:lb-information-entropy}
	\mathbf{I}(z:h) = \sum_{x\in \01^k} \mathbf{I}(z_x:h) = \sum_{x\in \01^k}
    (\mathbf{H}(z_x) - \mathbf{H}(z_x|h)) = 2^k  - \sum_{x\in \01^k} \mathbf{H}(z_x|h),
	\end{align}
	where the first and last equality used the independence of the $z_x$s because
  the truth table of $f$, i.e., $z\in \01^{2^k}$, is a uniformly random string. By Fano's inequality (in \Cref{lem:fano}), it follows that $\mathbf{H}(z_x|h) \leq \mathbf{H}_b(p_x)$, where $\mathbf{H}_b(\cdot)$ is the binary entropy function and $p_x$ is
	the probability of error on guessing $z_x$ by an arbitrary estimator whose
  input is $h:\01^k\rightarrow \01$. By assumption of $h$ in
  \Cref{eq:learningcontradiction}, it follows that $\frac{1}{2^k}\sum_x
  p_x < \frac{1}{2} - \frac{\gamma(k)}{2}$. Using this, we then have that 
	$$
	\sum_x \mathbf{H}(z_x|h) \leq \max_{\{p_x\}}
	\sum_x  \mathbf{H}_b(p_x),
	$$
	where the maximization is over $\{p_x\}$ in the set $\left\lbrace p_x :
    \frac{1}{2^k}\sum_x p_x \leq \frac{1}{2} - \frac{\gamma(k)}{2}\right\rbrace$.
	Since $\mathbf{H}_b$ is a concave function, the maximum is obtained when all the $p_x$s are equal (this also follows from Jensen's inequality). Given our upper-bound on the sum $\sum_x p_x$, it follows that $	\sum_x  \mathbf{H}_b(p_x)$ is maximized when $p_x =
  \frac{1}{2} - \frac{\gamma(k)}{2}$ for every $x$. In this case, we have 
	\begin{align}
	\label{eq:ub-entropy}
\max_{\{p_x\}}
\sum_x  \mathbf{H}_b(p_x)\leq    \sum_x \mathbf{H}_b\left(\frac{1}{2} -
    \frac{\gamma(k)}{2}\right)=  2^k \cdot \mathbf{H}_b\left(\frac{1}{2}
    - \frac{\gamma(k)}{2}\right) \leq 2^k \left(1 -
    \frac{2\gamma(k)^2}{\ln{2}} \right),
	\end{align}

	where we use \Cref{fact:taylorseriesbinaryentropy} in the last inequality. Putting together \Cref{eq:lb-information-entropy} and~\eqref{eq:ub-entropy}, we~obtain
	\begin{align}\label{eq:lb-information}
    \mathbf{I}(z:h) >  \frac{2^{k+1} \gamma(k)^2}{\ln{2}} > 2 \cdot k \cdot \query.
	\end{align}
	
	We now upper bound the mutual information between the output hypothesis $h$
  and the uniformly random string $z$. One way to view the quantum 
  algorithm is as a protocol where a quantum membership query to $z$ is a
  message from the algorithm to an oracle hiding $z$ and the oracle's output is a message from
  the oracle to the algorithm. In this case, 
    using \Cref{cor:communicationmutualinformation}
    the mutual information between the
  output hypothesis $h$ and the random string $z$ can be upper-bounded by the
  communication complexity of the protocol, %
	\begin{align}\label{eq:ub-information}
    \mathbf{I}(h : z) \leq (k+1)\cdot \query.
	\end{align}

  However, since $k \geq 2$, the lower bound in
    \Cref{eq:lb-information} contradicts \Cref{eq:ub-information},
    which in turn contradicts our assumption in
    \Cref{eq:learningcontradiction}. 
\end{proof}

\vspace{-0.8em}

Now, we can prove the main technical result of this section.
\begin{theorem}
	\label{thm:distinguisher-prf}
  Let $n,D\geq 2$ and $k= D \cdot \log n$. Let $\Fe = \{f_K : \01^k \rightarrow \01 \ | \ K \in~\01^n \}$
  be a family of functions. Suppose there exists a $t(n)$-time uniform quantum-PAC learner for~$\Fe$ (given
  access to $\query$ quantum membership queries and uniform quantum examples)
  with bias 
 {  $\biaslearning(n) \geq 
    2\sqrt{\frac{k\cdot \query}{n^D}}$}

  \noindent
  Then there exists an
  $O(t(n))$-time quantum algorithm
  {$\De$}
  that satisfies:
  \begin{align}
	\label{eq:distinguisher}
    \Big \vert \Pr_{f\sim \Fe} [\De^{\ket{f}}(\cdot)=1]- \Pr_{f \sim \mathcal{U}}
    [\De^{\ket{f}}(\cdot)=1]\Big| \geq \frac{\biaslearning(n)}{2},
	\end{align}
  where the probability is taken uniformly over $f\in \Fe$ and $f \in \mathcal{U}$,
  where $\mathcal{U}$ is the set of all functions from $\01^k$ to $\01$. 
\end{theorem}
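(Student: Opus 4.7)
The strategy is the standard ``run the learner, then guess at a random challenge'' reduction, made to work quantumly. Given quantum oracle access to an unknown function $f : \01^k \to \01$, the distinguisher $\De$ simulates a copy of the learner $\A$. Whenever $\A$ issues a quantum membership query, $\De$ forwards it to its own oracle $\ket{f}$. Whenever $\A$ consumes a uniform quantum example $\tfrac{1}{\sqrt{2^k}} \sum_x \ket{x,f(x)}$, $\De$ prepares $\tfrac{1}{\sqrt{2^k}} \sum_x \ket{x}\ket{0}$ by Hadamards on the input register and then makes one call to $\ket{f}$ to XOR $f(x)$ into the label register. Once $\A$ halts with hypothesis $h$, $\De$ draws $x^* \in \01^k$ uniformly at random, queries $\ket{f}$ once more on $x^*$, and outputs $1$ iff $h(x^*) = f(x^*)$. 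This runs in time $O(t(n))$ and uses a total number of oracle calls that we identify with the parameter $\query$ in the statement.

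Next I would analyse the two cases separately. If $f$ is drawn uniformly from $\Fe$, the PAC guarantee on $\A$ under the uniform distribution with bias $\biaslearning(n)$ says that, with high probability over $\A$'s internal randomness, the output hypothesis $h$ satisfies $\Pr_{x^* \in \01^k}[h(x^*) = f(x^*)] \geq \tfrac{1}{2} + \biaslearning(n)$; averaging over $f \in \Fe$ and over $x^*$ therefore gives
\[
\Pr_{f \sim \Fe}\bigl[\De^{\ket{f}}(\cdot) = 1\bigr] \;\geq\; \tfrac{1}{2} + \biaslearning(n).
\]
If instead $f$ is a uniformly random function from $\01^k$ to $\01$, then the composition of $\A$ with the challenge-prediction step is precisely the type of $\query$-query quantum predictor analysed in Lemma~\ref{lem:upperboundlemma}. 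Since $k = D\log n$ gives $2^k = n^D$, that lemma yields
\[
\Pr_{f \sim \mathcal{U}}\bigl[\De^{\ket{f}}(\cdot) = 1\bigr] \;\leq\; \tfrac{1}{2} + \sqrt{\tfrac{k\cdot \query}{n^D}} \;\leq\; \tfrac{1}{2} + \tfrac{\biaslearning(n)}{2},
\]
where the last inequality uses exactly the hypothesis $\biaslearning(n) \geq 2\sqrt{k\cdot \query/n^D}$. Subtracting the two bounds gives the claimed distinguishing advantage of at least $\biaslearning(n)/2$.

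The only real subtlety, and the place where I would be most careful, is the accounting of quantum queries. The parameter $\query$ in the statement has to be read as the total number of oracle calls made by $\De$: this counts $\A$'s membership queries, the single oracle call used to prepare each simulated quantum example, and the one final call at $x^*$. Once this is fixed, the random-function case is entirely subsumed by Lemma~\ref{lem:upperboundlemma}, and no separate information-theoretic or collision argument is needed; the classical difficulties faced by Kharitonov have already been absorbed into that lemma. A small bookkeeping point worth stating is that the $\A$-in-expectation guarantee ($\Pr[\text{success}] \geq 2/3$, say) can be boosted or absorbed into the constants, so that the stated bound on $\Pr_{f \sim \Fe}[\De = 1]$ holds after a trivial cleanup.
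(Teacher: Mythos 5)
Your proposal is correct and follows essentially the same route as the paper: simulate the learner by answering its membership queries (and quantum examples, each costing one oracle call) with the distinguisher's oracle, output $1$ iff the hypothesis agrees with $f$ on a random challenge, lower-bound the pseudo-random case by the PAC guarantee, and upper-bound the random-function case via Lemma~\ref{lem:upperboundlemma} using $2^k = n^D$. The query-accounting and success-probability caveats you flag are the same ones the paper implicitly absorbs, so there is nothing further to add.
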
 
	\begin{proof}
  Let $\A$ be a $t(n)$-time uniform quantum-PAC learner that makes at most~$\query$ quantum queries to a concept $c\in \Cc$ and outputs a hypothesis $h:\01^{k}\rightarrow \01$ such that
	$$
  \Pr_{x\in \01^k}\left[c(x) = h(x)\right] \geq \frac{1}{2} + \beta(n),
	$$
	where the probability is over $x$ drawn uniformly from $\01^k$. Note that $\A$ can obtain a uniform quantum example by making a single quantum membership query, so  without loss of generality we assume $\A$ makes membership queries.
	
  The goal is to use $\A$ to construct a quantum
  distinguisher $\De$ for the PRF $\Fe$ that
  satisfies \Cref{eq:distinguisher}.
    Let the distinguisher
    $\mathcal{D}$ have quantum oracle access to the function $f : \01^{k} \to \01$. The goal for $\mathcal{D}$ is to
    decide if $f$ is a uniformly random function or if $f \in \Fe$. In order to
    do this, $\mathcal{D}$ proceeds by first running the
    quantum learning algorithm $\mathcal{A}$ as follows: whenever $\mathcal{A}$
    makes a quantum membership query, $\mathcal{D}$ uses its oracle to answer
    $\mathcal{A}$.

  After making $\query$
  membership queries, $\mathcal{A}$ then outputs a hypothesis $h$. The
  distinguisher $\De$ outputs~$1$ (i.e., $f$ is a pseudo-random function) if and
    only if $h(x) = f(x)$ for a uniformly random $x \in \01^k$. 

    The running time of $\De$ is $O(t(n))$ since $\A$
    runs in time $t(n)$, each query has constant cost and computing $h(x)$ takes
    at most $O(t(n))$ time. 
    We now show that $\De$ satisfies \Cref{eq:distinguisher}. Suppose $f$ is
    a pseudo-random function i.e., $f \in
    \Fe$. Each query from $\mathcal{A}$ will be answered
    correctly as $\De$ has oracle access to it.
  Hence, we have
	\begin{align}\label{eq:lb-pseudorandom}
    \Pr[\mathcal{D}^{\ket{f}}(\cdot) = 1] =
	\Pr_{x \in \01^k}\left[f(x) = h(x)\right] \geq
    \frac{1}{2}+\biaslearning(n), 
	\end{align}
	where the inequality follows from the correctness of the quantum learning
  algorithm $\A$.

  On the other hand, if $f$ was a uniformly random function, using \Cref{lem:upperboundlemma}, we can conclude~that 
	\begin{align*}
    {
      \Pr[\mathcal{D}^{\ket{f}}(\cdot) = 1]} =
	\Pr_{x \in \01^k}\left[f(x) = h(x)\right] \leq
    \frac{1}{2}+
    \sqrt{\frac{k\cdot \query}{2^{k}}}=  \frac{1}{2}+
    \sqrt{\frac{k\cdot \query}{n^D}}
	\end{align*}
	where the last equality uses the definition of $k=D\cdot \log n$.

Combining these two cases, the bias of the distinguisher $\De$ is

\begin{align*}
    \Big \vert \Pr_{f \in \Fe} [\De^{\ket{f}}(\cdot)=1]- \Pr_{f \in \mathcal{U}}
  [\De^{\ket{f}}(\cdot)=1]\Big| &\geq \Big(\frac{1}{2}+\biaslearning(n) \Big)- \Big(\frac{1}{2}+
    \sqrt{\frac{k \cdot \query}{n^D}} \Big)
     = {\beta(n)}-\sqrt{\frac{k \cdot \query}{n^D}}  \geq \frac{\biaslearning(n)}{2},
\end{align*}

 where  the last inequality uses the definition of $\beta(n)$.
This concludes the proof.
\end{proof}

Our hardness results for $\TCz$ and $\ACz$ are based of the following corollary. In order for notational simplicity in the subsequent sections, we replace $n$ in Theorem~\ref{thm:distinguisher-prf} by $M$ below.

\begin{corollary}
	\label{cor:quantumsafePRF}
  Let $M\geq 2$, $D \in \{2,\ldots,
  M/(4\log M)\}$ and $a>0$ be a constant. Let $\Fe = \{f_K : \01^k \rightarrow \01 \ | \ K \in~\01^M \}$
  be a family of functions for some $k = D \cdot \log M$. %

For any $t(M)\leq M^{D-a}$ and 
{$\beta(M)\geq 2 \sqrt{\frac{k}{M^a}}$}, if there exists a $t(M)$-time uniform
  quantum-PAC learner for~$\Fe$ with quantum membership queries 
	and bias  $\biaslearning(M)$, then there exists a distinguisher that runs in
  time $O(t(M))$ and is able to distinguish $\Fe$ from a uniform random
  function.
\end{corollary}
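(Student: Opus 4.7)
The claim is essentially a parameter-tuned restatement of \Cref{thm:distinguisher-prf}, so the plan is to invoke that theorem directly with $n$ replaced by $M$ and to verify that the hypotheses hold under the corollary's constraints. Concretely, I set $k = D \log M$ as stipulated, which matches the input length condition $k = D \log n$ in \Cref{thm:distinguisher-prf} (the range $D \in \{2, \ldots, M/(4\log M)\}$ ensures both $D \geq 2$ and $k \leq M/4$, so we stay inside the regime where the theorem applies).

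The key step is to bound the number of quantum membership queries $\query$ made by the alleged learner. Since the learner runs in time at most $t(M) \leq M^{D-a}$ and each quantum membership query counts as one time step, we have $\query \leq t(M) \leq M^{D-a}$. Substituting into the threshold appearing in \Cref{thm:distinguisher-prf}, we get
\begin{align*}
2 \sqrt{\frac{k \cdot \query}{M^D}} \;\leq\; 2 \sqrt{\frac{k \cdot M^{D-a}}{M^D}} \;=\; 2\sqrt{\frac{k}{M^a}} \;\leq\; \biaslearning(M),
\end{align*}
where the last inequality is exactly the assumption on $\biaslearning(M)$ in the corollary. Hence the bias hypothesis of \Cref{thm:distinguisher-prf} is satisfied.

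Applying \Cref{thm:distinguisher-prf} now yields a quantum algorithm $\De$ running in time $O(t(M))$ with distinguishing advantage at least $\biaslearning(M)/2$ between a random member of $\Fe$ and a uniformly random function from $\01^k$ to $\01$. This is exactly the distinguisher promised by the corollary, and since $\biaslearning(M) \geq 2\sqrt{k/M^a}$ is non-negligible in $M$ (as $k = D\log M$ and $a$ is a positive constant), the distinguisher is meaningful. The only thing to verify beyond plugging into \Cref{thm:distinguisher-prf} is the $\query \leq t(M)$ accounting, so there is no real obstacle here; the content of the corollary is simply to package the bounds of \Cref{thm:distinguisher-prf} in a form convenient for the later applications to $\TCz$ and $\ACz$, where $M$ is the representation size of the PRF and $D$ controls how the truth-table size $M^D$ scales relative to it.
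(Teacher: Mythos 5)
Your proposal is correct and follows essentially the same route as the paper's own proof: bound the number of queries by the running time ($\query \leq t(M) \leq M^{D-a}$), chain the inequalities to verify $\biaslearning(M) \geq 2\sqrt{k\cdot\query/M^D}$, and invoke \Cref{thm:distinguisher-prf} with $n$ replaced by $M$. The only cosmetic difference is that the paper writes the inequality chain starting from $\beta(M)$ and you write it starting from the threshold, which is the same computation.
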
 
\begin{proof}
	First observe that the number of queries made by the learner is at most
  $\query\leq M^{D-a}$. Hence, 
{
    \begin{align*}
  \beta(M)\geq 2\sqrt{\frac{k}{M^a}}
  =  2\sqrt{\frac{k M^{D - a}}{M^D}} \geq
  2\sqrt{\frac{k \cdot \query}{M^D}},
    \end{align*}}
	which satisfies the assumption of Theorem~\ref{thm:distinguisher-prf}. Hence,
  there is a $O(t(M))$-time quantum distinguisher for the PRF family $\Fe$ with
  distinguishing probability $\beta(M)=1/\poly(M)$ since we let $a>0$ be a constant. 
\end{proof}

\subsection{Hardness of learning \torp{$\TCz$}{TC0}}
\label{sec:TC0prghardness}
For completeness, we formally state \Cref{result:tc0} and follow with its proof. 

\begin{theorem}
	\label{thm:hardnessoftc0}
  Let $n \in \natural$, $q = 2^{O(n^c)}$ for some constant $c\in(0,1)$, $r \leq \poly(n)$ and $t(n)\leq 2^{O(\polylog (n))}$. Let $\Upsilon(r)$ be the $n$-dimensional discrete Gaussian distribution over a degree-$n$ cyclotomic ring with standard deviation at most $r$ in each coefficient. Assuming
  that the $\RLWE_{n,q,\Upsilon(r)}$ problem is hard for an $O(t(n))$-time quantum
  computer, then there is no $O(t(n))$-time uniform weak quantum-PAC learner for  {$\poly(n)$-sized}~$\TCz$ circuits.
\end{theorem}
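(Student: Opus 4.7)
The plan is to instantiate the One-Bit $\RLWR$ PRF family $\RF$ from \Cref{def:RLWRPRF} with parameters chosen so that (i) every $f \in \RF$ is computable by a $\poly(n)$-sized $\TCz$ circuit (via \Cref{lem:PRFinTC0}), and (ii) $\RF$ is quantum-secure against $O(t(n))$-time adversaries under the $\RLWE_{n,q,\Upsilon(r)}$ assumption (via \Cref{lem:qprf}). Then any $O(t(n))$-time uniform weak quantum-PAC learner for $\poly(n)$-sized $\TCz$ circuits must in particular learn $\RF$, and \Cref{cor:quantumsafePRF} converts such a learner into an $O(t(n))$-time quantum distinguisher between $\RF$ and a uniformly random Boolean function, contradicting the security of $\RF$ and hence the stated RLWE assumption.

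First I would fix the RLWE dimension $d = n$, take rounding modulus $p = 2$, and choose an input length $k$ that is $\omega(\log n)$ and grows as a sufficiently large $\polylog(n)$ function (to be tuned later so as to accommodate $t(n) \leq 2^{O(\polylog n)}$). I would then set $q$ to the smallest power of $2$ meeting the security requirement $q \geq p \cdot k(r\sqrt{d+k} \cdot \omega(\log d))^k \cdot d^{\omega(1)}$ from \Cref{lem:qprf}. Since $r \leq \poly(n)$ and $k = \polylog(n)$, this forces $\log q = O(\polylog n)$, comfortably inside the theorem's tolerance $q = 2^{O(n^c)}$. With these parameters, \Cref{lem:PRFinTC0} yields a $\poly(d,k) = \poly(n)$-sized $\TCz$ implementation of every $f_{\mbf{a},\mbf{s}_1,\ldots,\mbf{s}_k} \in \RF$, and the representation size of each concept is $M = O(kd\log q) = \poly(n)$. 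In particular, $\RF$ genuinely sits inside the class of $\poly(n)$-sized $\TCz$ circuits.

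Next I would assume toward contradiction the existence of a $t(n)$-time uniform weak quantum-PAC learner $\A$ (with quantum membership queries) for $\poly(n)$-sized $\TCz$. Since $\RF$ is a sub-class of this concept class, $\A$ in particular learns $\RF$ with bias $\beta(M) = 1/\poly(M) = 1/\poly(n)$. To invoke \Cref{cor:quantumsafePRF}, set $D := k/\log M$. A suitable tuning of $k$ makes $D$ a growing $\polylog$ function so that the two conditions $t(n) \leq M^{D-a}$ and $\beta(M) \geq 2\sqrt{k/M^a}$ hold simultaneously for some small constant $a > 0$: the first holds because $M^{D-a} = 2^{\Omega(\polylog n)}$ dominates $2^{O(\polylog n)} \geq t(n)$ with the right choice of hidden constants, and the second because $k/M^a = \polylog(n)/\poly(n)$ is polynomially small. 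The corollary then yields an $O(t(n))$-time quantum distinguisher for $\RF$ with bias $\beta(M)/2 = 1/\poly(n)$, and by \Cref{lem:qprf} this is an $O(t(n))$-time quantum algorithm breaking $\RLWE_{n,q,\Upsilon(r)}$, the desired contradiction.

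The hard part will be the joint parameter balancing. On one hand, quantum security of $\RF$ forces $q$ to grow roughly as $d^k$, while the $\TCz$ implementation of \Cref{lem:PRFinTC0} tolerates only $\log q \leq O(\poly d)$; together these pin $k$ inside the $\polylog(n)$ regime. On the other hand, \Cref{cor:quantumsafePRF}, whose underlying information-theoretic engine is \Cref{lem:upperboundlemma}, demands that the input space $2^k$ exceed the learner's query budget by enough to convert the learner's advantage $\beta$ into a non-negligible distinguishing bias, which forces $k$ to scale with $\log t(n)$. Verifying that a single choice of $k$ satisfies all these inequalities and that the resulting representation size $M$ remains $\poly(n)$ (so $\RF$ really lives inside $\poly(n)$-sized $\TCz$) is the main bookkeeping work; the remainder of the argument is a direct chaining of \Cref{lem:PRFinTC0}, \Cref{lem:qprf}, and \Cref{cor:quantumsafePRF}.
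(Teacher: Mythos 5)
Your proposal is correct and follows essentially the same route as the paper's proof: instantiate the One-Bit $\RLWR$ PRF with $d=n$, place it in $\poly(n)$-sized $\TCz$ via \Cref{lem:PRFinTC0}, and chain \Cref{lem:qprf} with \Cref{cor:quantumsafePRF} to convert a weak learner into an $\RLWE$ distinguisher; the only cosmetic difference is that the paper inflates the modulus to $\log q \approx n^c$ so that its bookkeeping parameter $M=(k+1)\log q$ is $\poly(n)$, whereas you obtain $M=\poly(n)$ from the dimension factor $d=n$ in the key length while keeping $\log q=\polylog(n)$, and either choice satisfies the corollary's hypotheses. One small correction: the constant $a$ in \Cref{cor:quantumsafePRF} must be chosen sufficiently \emph{large} relative to the weak learner's bias exponent (roughly $a>2\delta$ when $\beta(n)=n^{-\delta}$, as the paper's explicit choice $a\geq 2(\delta+1)/c$ reflects), not ``small,'' so that $2\sqrt{k/M^a}$ actually falls below $\beta$.
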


\begin{proof}
	We proceed by a proof of contradiction. First, consider the concept class to
  be the $\RLWRPRF_{n,q,p,k}$ function family $\RF$ as given by
  Definition~\ref{def:RLWRPRF}, where $2 \leq p \ll q$ are powers of~$2$,
  $k = \log^\alpha n = \omega(\log n)$ for some constant $\alpha > 2$, {$p = 2$}
  and $q = 2^{\gamma \cdot \frac{k}{k+1}}$  and $\gamma = n^c$ for $0
  < c < 1$. Using Lemma~\ref{lem:PRFinTC0} for $d = n$, every function in $\RF$ can be computed by an ${ O(\poly (n), \omega(\log n))} = O(\poly(n))$-sized $\TCz$ circuit. In other words, the concept class $\RF \subseteq \TCz$. 

	By way of contradiction, suppose that there is a $t(n)$-time uniform weak quantum learning algorithm for $\RF$, i.e., 
  there exists a learner $\A$ for $\RF$ that uses at most $t(n)$ queries/samples and
	achieves bias $\biaslearning(n)=n^{-\delta}$ for some constant
  $\delta>0$. 

  Let $M = (k+1)\log{q} =\gamma k= \gamma \log^{\alpha}n$ (where the first equality used that $\log q=\gamma\cdot k/(k+1)$ by definition) and $D = \frac{k}{\log M} = 
  \frac{\log^{\alpha}n}{\log \gamma +\alpha\log\log n}$.
 Let $a$ be some constant satisfying $a\geq 2(\delta+1)/c$. This choice of $a$ allows us to show that  
$$
\beta(n)\geq \frac{1}{n^\delta}\geq \frac{1}{n^{ca/2-1}}\geq 2\sqrt{\frac{(\log n)^{\alpha(1-a)}}{n^{ca}}}=2\sqrt{\frac{(\log n)^{\alpha(1-a)}}{\gamma^a}} =2\sqrt{\frac{\log^\alpha n}{(\gamma\log^{\alpha } n)^a}} = 2\sqrt{\frac{k}{M^a}},
$$
 where the last equality uses the definition of $M$ and the penultimate equality uses the definition of $\gamma$.
  Finally, notice that 
  $$
  D - a =  \frac{\log^{\alpha}n}{\log \gamma +\alpha\log\log n}-a=\frac{\log^{\alpha}n}{c\log n +\alpha\log\log n}-a\geq \frac{\log^{\alpha-1}n}{2c},
  $$
  using $\gamma=n^c$ in the second equality. Therefore we have
  $$ M^{D - a} \geq  \left(\gamma \log^{\alpha}n\right)^{\frac{\log^{\alpha-1}n}{2c}}
  = 2^{O(\polylog (n))}\geq t(n).
  $$
This choice of parameters allows us to use Corollary~\ref{cor:quantumsafePRF},
which 
implies the existence of an $O(t(n))$-time
  distinguisher for the $\RLWRPRF_{n,q,p,k}$ family $\RF$. For the
  aforementioned choices of $n, q$ and~$r$, using Lemma~\ref{lem:qprf},
  we can translate the algorithm for $\RF$ to an $O(t(n))$-time quantum distinguisher for the $\RLWE_{n,q,\Upsilon}$ problem, which contradicts our assumption.
\end{proof}
An immediate corollary of~\Cref{thm:hardnessoftc0} using Assumptions
\ref{assumption:polynomial-hardness} and \ref{assumption:quasipoly-hardness} gives us our main results:

\begin{corollary}\label{cor:tc0-hard}
  Let $n \in \natural$, $q = 2^{O(n^c)}$ for some constant $0 < c < 1$, $r \leq \poly(n)$ and $\Upsilon(r)$ be discrete Gaussian distribution over degree-$n$ cyclotomic rings with standard deviation at most $r$ in each coefficient.
	\begin{enumerate}
		\item  Assuming that the $\RLWE_{n,q,\Upsilon(r)}$ problem is hard for a $\poly(n)$-time quantum computer, then there is no $\poly(n)$-time uniform weak quantum-PAC learner for  {$\poly(n)$-sized} $\TCz$ circuits.
		\item  Assuming that the $\RLWE_{n,q,\Upsilon(r)}$ problem is hard for a $2^{O(\polylog(n))}$-time quantum computer, then there is no $2^{O(\polylog(n))}$-time uniform weak quantum-PAC learner for  {$\poly(n)$-sized} $\TCz$ circuits.	\end{enumerate}
\end{corollary}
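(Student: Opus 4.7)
The corollary is a direct instantiation of Theorem~\ref{thm:hardnessoftc0} for two specific choices of the running time $t(n)$, so my plan is simply to check that both choices fall within the regime covered by that theorem and then invoke the matching $\RLWE$ hardness assumption to rule out the existence of a learner.

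For part~(1), I would set $t(n) = \poly(n)$. Since any polynomial in $n$ is certainly upper bounded by $2^{O(\polylog(n))}$, this choice of $t(n)$ satisfies the hypothesis of Theorem~\ref{thm:hardnessoftc0}. The parameters $n$, $q = 2^{O(n^c)}$, $r \leq \poly(n)$ and $\Upsilon(r)$ are exactly those allowed in the theorem's statement. By Assumption~\ref{assumption:polynomial-hardness}, under the stated parameter regime the problem $\RLWE_{n,q,\Upsilon(r)}$ is hard for $\poly(n)$-time quantum computers, which is precisely the $O(t(n))$-time hardness required by Theorem~\ref{thm:hardnessoftc0}. Applying that theorem, there is no $\poly(n)$-time uniform weak quantum-PAC learner for $\poly(n)$-sized $\TCz$ circuits.

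For part~(2), I would instead set $t(n) = 2^{O(\polylog(n))}$, which meets the upper bound $t(n) \leq 2^{O(\polylog(n))}$ of Theorem~\ref{thm:hardnessoftc0} with equality in the exponent class. The parameters $q$, $r$ and $\Upsilon(r)$ are again in the allowed range. Now I appeal to Assumption~\ref{assumption:quasipoly-hardness}, which states that $\RLWE_{n,q,\Upsilon(r)}$ is hard for $2^{O(\polylog(n))}$-time quantum algorithms under this exact parameter setting, so the hardness precondition of Theorem~\ref{thm:hardnessoftc0} holds for this $t(n)$. Theorem~\ref{thm:hardnessoftc0} then yields the absence of a $2^{O(\polylog(n))}$-time uniform weak quantum-PAC learner for $\poly(n)$-sized $\TCz$ circuits.

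There is no real obstacle here beyond bookkeeping: the nontrivial content (constructing the distinguisher from a learner via Corollary~\ref{cor:quantumsafePRF}, embedding the $\RLWRPRF$ in $\TCz$ via Lemma~\ref{lem:PRFinTC0}, and translating the distinguisher to an $\RLWE$ solver via Lemma~\ref{lem:qprf}) is already packaged inside Theorem~\ref{thm:hardnessoftc0}. The one thing worth double-checking while writing the proof is that the parameter schedule used inside the proof of Theorem~\ref{thm:hardnessoftc0} (in particular the choice $k = \log^{\alpha} n$ with $\alpha > 2$ and the resulting $D$ and $M$) makes $M^{D-a} \geq t(n)$ work for both the polynomial and quasi-polynomial regimes; this is already verified there since $M^{D-a} = 2^{O(\polylog(n))}$ dominates both $\poly(n)$ and $2^{O(\polylog(n))}$. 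So the corollary follows in two lines, one per item, each invoking Theorem~\ref{thm:hardnessoftc0} with the appropriate assumption.
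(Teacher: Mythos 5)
Your proposal is correct and matches the paper exactly: the paper presents \Cref{cor:tc0-hard} as an immediate instantiation of \Cref{thm:hardnessoftc0} with $t(n)=\poly(n)$ under \Cref{assumption:polynomial-hardness} and $t(n)=2^{O(\polylog(n))}$ under \Cref{assumption:quasipoly-hardness}, which is precisely what you do. Your added check that $M^{D-a}=2^{O(\polylog(n))}$ dominates both time bounds is a reasonable piece of due diligence but adds nothing beyond what is already verified inside the theorem's proof.
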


\subsection{Hardness of learning \torp{$\ACz$}{AC0}}
\label{sec:AC0prghardness}

The proof for the hardness of learning $\ACz$ is similar to the proof in the
previous section and will proceed by showing that for a suitable choice of
parameters, the concept class constructed from $\RF$ can be evaluated by $\ACz$
circuits. Recall from~\Cref{lem:PRFinAC0}, that the circuit size scales
sub-exponentially with respect to the {dimension of the underlying problem}.
Hence, in order to prove the $\ACz$ hardness, we make two changes. We first
consider ``smaller instances" of the $\RLWRPRF$ that can now be computed in
$\ACz$ (instead of $\TCz$ as we showed in \Cref{lem:PRFinTC0}). Secondly, we
``weaken" the statement of our hardness result by showing that, the existence of
a quasi-polynomial  time \emph{strong} quantum learner for $\ACz$ implies the existence of a strongly sub-exponential time quantum distinguisher for $\RLWE$ (instead of polynomial-time quantum distinguishers for $\RLWE$ as we showed in \Cref{thm:hardnessoftc0}). 

We first give reasons as to why the strongly sub-exponential time assumption is
justified. The best known algorithms for $d$-dimensional $\LWE$ using various
methods from lattice reduction techniques~\cite{LLL82, CN11}, combinatorial
techniques~\cite{BKW03,Wag02} or algebraic ones~\cite{AG11} all require
$2^{O(d)}$-time in the asymptotic case. Even by quantizing any techniques, the
improvements so far only affect the constants in the exponent and do not provide
general sub-exponential time algorithms for $\LWE$. It is also believed that the best-known algorithms for Ring-$\LWE$ do not do much better than the lattice reduction/combinatorial methods used for $\LWE$. While the exponential run-times correspond to the hardest instances of these problems, for the choice of parameters used in this section, the algorithms scale more like $2^{\Omega(\sqrt{d})}$. Hence, we are justified in \emph{weakening} our statement of hardness to the assumption that $\RLWE$ does not have $2^{d^{1/\eta}}$-time algorithms for every constant $\eta > 2$.

Before going into the proof, we provide some intuition as to why the choice of parameters as defined in \Cref{lem:PRFinAC0} forces us to impose stronger hardness assumptions  -- the circuits are $\poly(n)$-sized and the algorithms run in $\poly(n)$-time but, the family $\RF$ is defined on a ring of smaller dimension i.e., $d = O(\polylog n)$. However, from the perspective of the {input} size which scales as $\poly(d)$, this means that the algorithms scale sub-exponentially in $d$. Hence, we need to use the more stringent ``no strongly sub-exponential time algorithms for $\RLWE$" assumption. We formally restate \Cref{result:ac0} below and proceed with its proof.

\begin{theorem}
\label{thm:hardnessofac0}
Let $c > \eta > 2$ such that $c/\eta>2$, and  $a\geq 2$ be constants. Let $n \in
  \natural$, $d = \log ^c n$, $k = d^{1/\eta}$, $q = d^{O(k)}$ and $r \leq
  d^\lambda$ for $0 < \lambda < 1/2 - O(1/k)$, $\Upsilon(r)$ be discrete Gaussian distribution over degree-$n$ cyclotomic rings with standard deviation at most $r$ in each coefficient. Let $t(d) \leq 2^{d^{1/\eta}}$ and {$\beta(n) \geq 2{(\log n)^{c(1-2a)/2\eta}}$}.  Assuming that $\RLWE_{d,q,\Upsilon(r)}$ is hard for a $t(d)$-time quantum computer,  there is no $2^{(\log n)^{c/\eta}}$-time  uniform quantum-PAC learner {with bias $\beta(n)$} for  $\poly(n)$-sized $\ACz$ circuits. %
\end{theorem}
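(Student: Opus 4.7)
The plan is to parallel the proof of Theorem~\ref{thm:hardnessoftc0} but instantiate the $\RLWR$-based PRF with the smaller dimension afforded by Lemma~\ref{lem:PRFinAC0}, so that the concept class lands in $\poly(n)$-sized $\ACz$ rather than $\TCz$. Concretely, I take as the concept class the family $\RF = \RLWRPRF_{d,q,p,k}$ with $d = \log^c n$, $k = d^{1/\eta} = (\log n)^{c/\eta}$, $p = 2$, and $\log q = (k^2/(k+1))\log d$, so the hypotheses of Lemma~\ref{lem:PRFinAC0} are met. That lemma then produces an $\ACz$ circuit of size $2^{O(d^{1/c})} = 2^{O(\log n)} = \poly(n)$ computing any $f \in \RF$, placing $\RF$ inside $\poly(n)$-sized $\ACz$ on $k$-bit inputs. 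The remaining parameter constraints in the theorem ($r \leq d^\lambda$ with $\lambda < 1/2 - O(1/k)$ and $q = d^{O(k)}$) are precisely what Lemma~\ref{lem:qprf} needs, so $\RF$ is quantum-secure under the assumption that $\RLWE_{d,q,\ups(r)}$ is hard for $t(d)$-time quantum computers.

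Suppose, for contradiction, that $\RF$ admits a uniform quantum-PAC learner $\A$ running in time $T(n) := 2^{(\log n)^{c/\eta}}$ with membership queries and bias $\beta(n) \geq 2(\log n)^{c(1-2a)/(2\eta)}$. To apply Corollary~\ref{cor:quantumsafePRF} I set $M := (k+1)\log q = \Theta(k^2 \log d) = \polylog(n)$ (the representation size of a PRF key) and $D := k/\log M$, so that $k = D \log M$ and $D \in \{2,\ldots,M/(4\log M)\}$ for large enough $n$. Since $M^D = 2^k = T(n)$ and $M$ is only polylogarithmic in $n$, we have $M^{D-a} = T(n)/M^a = T(n)/\polylog(n)$, which matches $T(n)$ up to a sub-exponential factor that can be absorbed by an infinitesimal decrease of $c/\eta$ in the exponent of $T$. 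Moreover,
\[
2\sqrt{k/M^a} \;=\; \Theta\!\left((\log n)^{c(1-2a)/(2\eta)}\right) \;\leq\; \beta(n),
\]
verifying the bias hypothesis of the Corollary.

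Applying Corollary~\ref{cor:quantumsafePRF} yields a quantum distinguisher $\De$ for $\RF$ running in time $O(T(n)) = 2^{O((\log n)^{c/\eta})} = 2^{O(d^{1/\eta})}$. Feeding this into Lemma~\ref{lem:qprf}, which converts distinguishers for $\RF$ into distinguishers for $\RLWE_{d,q,\ups(r)}$ with only constant-factor overhead, gives an $O(t(d))$-time quantum algorithm solving $\RLWE_{d,q,\ups(r)}$, contradicting the theorem's hardness assumption.

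The main obstacle is parameter bookkeeping rather than any new conceptual step: one must verify that a single family of choices ($d=\log^c n$, $k=d^{1/\eta}$, $\log q = \Theta(k \log d)$, $r = d^{\lambda}$) simultaneously fits the circuit-size bound of Lemma~\ref{lem:PRFinAC0}, the noise/modulus regime of Lemma~\ref{lem:qprf}, and the tight inequalities $t(M) \leq M^{D-a}$ and $\beta(M) \geq 2\sqrt{k/M^a}$ of Corollary~\ref{cor:quantumsafePRF}. The delicate point is that here $M$ is only $\polylog(n)$ (as opposed to $\poly(n)$ in the $\TCz$ case), which is precisely what pushes the conclusion to sub-exponential-time hardness in the dimension $d$ and why a correspondingly stronger $\RLWE$ assumption (in the spirit of Assumption~\ref{assumption:subexp-hardness}) is needed.
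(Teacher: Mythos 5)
Your proposal follows essentially the same route as the paper: the same $\RLWRPRF_{d,q,p,k}$ concept class with $d=\log^c n$, $k=d^{1/\eta}$, $\log q=\tfrac{k^2}{k+1}\log d$, placed in $\poly(n)$-sized $\ACz$ via \Cref{lem:PRFinAC0}, then \Cref{cor:quantumsafePRF} to extract a PRF distinguisher and \Cref{lem:qprf} to contradict the $\RLWE_{d,q,\Upsilon(r)}$ hardness assumption. The only substantive divergence is the choice of $D$. You take $D=k/\log M$, which makes $k=D\log M$ exact but yields $M^{D-a}=2^{k}/M^{a}$, a $\polylog(n)$ factor \emph{below} the claimed time bound $2^{(\log n)^{c/\eta}}=2^{k}$; your suggestion to ``absorb'' this by an infinitesimal decrease of $c/\eta$ changes the statement being proved, so as written you only rule out learners running in time $2^{k}/\polylog(n)$. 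The paper instead sets $D=\sqrt{M}/\log M$, so that $(D-a)\log M\geq \sqrt{M}/2=k\sqrt{\log d}/2\geq k$ and hence $M^{D-a}\geq 2^{d^{1/\eta}}\geq t(d)$ with room to spare --- the extra $\sqrt{\log d}$ in the exponent swallows the $M^{a}$ loss outright (at the mild cost that $D\log M$ is then $k\sqrt{\log d}$ rather than $k$). If you either adopt the paper's $D$ or note that \Cref{thm:distinguisher-prf} only needs $\query\leq M^{D-a}$ for the actual query count (which can be taken slightly below the nominal running time), the polylog shortfall disappears; the rest of your verification (circuit size, the bias inequality $\beta(n)\geq 2\sqrt{k/M^{a}}$, and the reduction to $\RLWE$) matches the paper's.
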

\begin{proof}
Consider the constants $c, \eta, \eps$ such that $c> \eta > 2$ and $\eps \leq
  \frac{\eta}{\eta+c}$. Consider the concept class $\RF$ from the function
  family $\RLWRPRF_{d,q,p,k}$  as given by~\Cref{def:RLWRPRF} where dimension $d
  = \log^c n$, input $k = d^{\frac{1}{\eta}}$, and moduli $2 < p \ll q$
  are powers of $2$ such that $\log q = \frac{k^2 \log{d}}{k+1}$.
  By~\Cref{lem:PRFinAC0}, every function in $\RF$ can be computed by a $2^{O\left( d^{1/c} \right)}$-sized $\ACz$ circuit of depth $O(1/\eps)$ which is $O(1)$ since $\eps\leq \eta/(\eta+c)\leq 1$. In other words, the $\ACz$ circuits are of size $2^{O\left( (\log n)^{c \cdot 1/c} \right)} = O(\poly(n))$.  

We now consider a function class $\RF$ against which we are going to prove our
  hardness result: keeping Lemma~\ref{lem:PRFinAC0} in mind, recall that the
  length of the key (i.e., {number of bits to define the parameters $\mbf{a},
  \mbf{s}_i\in R_q$ of the function $f_{\mbf{a},
  \mbf{s}_1, \ldots, \mbf{s}_{k}}$}) is at most
  {$M:= (k+1)\log q= d^{2/\eta}\log{d}$}, let
$$
  \RF = \{f_K : \01^k \rightarrow \01 \ | \ K \in~\01^{{(k+1)\log q}} \}.
$$ 

In order to discuss the hardness of learning the function class $\RF$ in terms
  of the hardness of $d$-dimensional $\RLWE$, we now relabel the function class
  above as follows: since {$(k+1) \log q = k^2 \log d$} and $k=d^{1/\eta}$, we have:
$$
\RF = \{f_K : \01^k \rightarrow \01 \ | \ K \in~\01^{d^{2/\eta}\log d} \}. 
$$

By way of contradiction, suppose that there exists a $t(d)\leq
  2^{d^{1/\eta}}=2^{(\log n)^{c/\eta}}$-time uniform  quantum learner, {with
  bias $\beta(n)$}, for $\poly(n)$-sized $\ACz$ circuits i.e., there exists a
  learner $\A$ for $\RF$ that uses at most $t(d)$ queries/samples and achieves
  bias $\beta(n)$. 

  Let $D =\frac{\sqrt{M}}{\log M}$. Observe that $k = D \cdot
  \log{M}$,
{
\begin{align} \label{eq:betadintermsofn}
  \beta(n)\geq 2{(\log n)^{c(1-2a)/2\eta}}=2 {d^{\frac{1}{2\eta}-\frac{a}{\eta} }}=
  2\sqrt{\frac{d^{1/\eta}}{d^{2a/\eta}}}\geq 
  2\sqrt{\frac{k}{M^{a}}}.
\end{align}}

Additionally, observe that 
$$
(D-a)\log M=\Big(\frac{\sqrt{M} }{\log M} -a\Big)\cdot \log M \geq \sqrt{M}/2 \geq d^{1/\eta}(\sqrt{\log d})/2 \geq d^{1/\eta}
$$

 Hence, it follows that
  $M^{D-a}\geq 2^{d^{1/\eta}}\geq  t(d)$. %
We are now ready to apply Corollary~\ref{cor:quantumsafePRF} since $M, k, a$ and
  $D$ satisfy the required properties. It now follows from
  Corollary~\ref{cor:quantumsafePRF} that %
there exists an $O(t(d))$ time distinguisher for the
  $\RLWRPRF_{d,q,p,k}$ problem with distinguishing advantage at least
  $\beta(n)/2 \geq  O(1/\poly(d))$, from \Cref{eq:betadintermsofn}.
  For the aforementioned choices of $d,q$ and $r$, using~\Cref{lem:qprf},
  this distinguisher can be converted into an $O(t(d))$ time algorithm for the
  $\RLWE_{d,q,\Upsilon(r)}$ problem with an $O(1/\poly(d))$ distinguishing
  advantage. This contradicts the assumption that algorithms for
  $\RLWE_{d,q,\Upsilon(r)}$ require $2^{\Omega(d^{1/\eta})}$-time and concludes the
  proof. 
\end{proof}
{While the previous theorem lower bounds shows that there is no
quasi-polynomial-time algorithm to learn $\ACz$ circuits with $\beta(n) \geq
O(1/\polylog n)$, we can weaken our assumption to also consider \emph{strong}
learners with bias  $\beta(n) \geq O(1)$ (say, $\beta = 1/6$). Setting $\beta(n)
= 1/6$ in~\Cref{thm:hardnessofac0}, we would obtain a distinguisher for
$d$-dimensional $\RLWE$ that runs in time $2^{O(d^{1/c})}$ with a distinguishing
advantage $\beta(n) = 1/12 \geq 1/\poly(d)$ thereby contradicting the
sub-exponential time hardness of $\RLWE$. Using
Assumption~\ref{assumption:subexp-hardness}, this leads to the following corollary which is the main result in this section.} 

\begin{corollary}\label{cor:ac0-hard}
Let $c > \eta > 2$, $n \in \natural$, $d = \log^c n$, $k = d^{1/\eta}, q = d^{O(k)}$ and $r \leq d^\lambda$ for $0 < \lambda < 1/2 - O(1/k)$. Let $\Upsilon(r)$ be the discrete Gaussian distribution over a degree-$n$ cyclotomic ring with standard deviation at most $r$ in each coefficient. Assuming that the $\RLWE_{d,q,\Upsilon(r)}$ problem is hard for a $2^{o(d^{1/\eta})}$-time quantum computer, then for all $\nu \leq c/\eta - 1$ there exists no $n^{O(\log^{\nu} n)}$-time {strong} uniform quantum-PAC learner can learn $\poly(n)$-sized $\ACz$ circuits. 
\end{corollary}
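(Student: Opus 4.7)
The plan is to derive the corollary directly from \Cref{thm:hardnessofac0} by specializing its bias parameter to the strong-learning regime and invoking \Cref{assumption:subexp-hardness} with a slightly smaller value of $\eta$. First I would argue by contradiction: suppose for some $\nu \leq c/\eta - 1$ there exists an $n^{O(\log^{\nu} n)}$-time strong uniform quantum-PAC learner for $\poly(n)$-sized $\ACz$ circuits. By the definition of strong learning, its bias satisfies $\beta(n) \geq 1/6$.

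Next I would verify the hypotheses of \Cref{thm:hardnessofac0}. For the bias, any constant $a \geq 2$ works: the required condition $\beta(n) \geq 2(\log n)^{c(1-2a)/(2\eta)}$ holds trivially for large $n$ because the exponent is negative, so the right-hand side vanishes and is eventually dominated by $1/6$. For the runtime, the learner runs in time $n^{O(\log^{\nu} n)} = 2^{O(\log^{\nu+1} n)}$, and since $\nu + 1 \leq c/\eta$, this is at most $2^{(\log n)^{c/\eta}} = 2^{O(d^{1/\eta})}$, which matches the time bound of \Cref{thm:hardnessofac0}. Applying the theorem therefore yields a quantum distinguisher for $\RLWE_{d,q,\Upsilon(r)}$ running in time $2^{O(d^{1/\eta})}$ with advantage at least $\beta(n)/2 \geq 1/12$.

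The final step, which is the only delicate one, is to contradict \Cref{assumption:subexp-hardness}. The distinguisher runtime $2^{O(d^{1/\eta})}$ is not directly $2^{o(d^{1/\eta})}$, so I cannot invoke the assumption with the same value of $\eta$. The key trick is that the assumption holds \emph{for every} constant $\eta' > 2$; since the corollary assumes $\eta > 2$ strictly, I can pick any $\eta' \in (2, \eta)$. Then $1/\eta < 1/\eta'$, so $d^{1/\eta} = o(d^{1/\eta'})$ and the distinguisher runs in time $2^{o(d^{1/\eta'})}$, contradicting \Cref{assumption:subexp-hardness} instantiated at $\eta'$. The only remaining routine check is that the parameters $d, q, r$ chosen in the corollary fall within the scope of the assumption, namely $q = 2^{O(d^\tau)}$ for some $\tau < 1$ and $r \leq d^\lambda$ for some $\lambda < 1/2$; this is immediate from $q = d^{O(k)}$ with $k = d^{1/\eta}$ and $\eta > 2$.
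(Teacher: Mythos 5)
Your proposal is correct and follows essentially the same route as the paper, which simply sets $\beta(n)=1/6$ in \Cref{thm:hardnessofac0} and invokes \Cref{assumption:subexp-hardness}. In fact you handle the one delicate point more carefully than the paper does: the constructed distinguisher runs in time $2^{O(d^{1/\eta})}$ rather than $2^{o(d^{1/\eta})}$, and your observation that the universal quantification over $\eta' > 2$ in \Cref{assumption:subexp-hardness} lets you instantiate it at some $\eta' \in (2,\eta)$ is exactly the right way to close that gap.
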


Notice that this matches the $\Omega(n^{\polylog n})$ lower bound for
\emph{classical} learning $\ACz$  as obtained by Kharitonov \cite{Kharitonov93}
under the assumption factoring is quasi-polynomial-hard. In particular, this
lower bound also matches the $O(n^{\log ^\alpha n})$ sample and time complexity
upper bound for learning $\ACz$ circuits as demonstrated by Linial, Mansour and
Nisan~\cite{LMN93} for some universal constant $\alpha>1$.  Hence, we can
conclude that uniform \emph{strong} quantum-PAC learners cannot offer a
significant advantage for learning $\ACz$ circuits.

{Using PRFs based on $\RLWR$ and our proof technique, we believe that it may not be possible to find a lower bounds for weak quantum learners for $\ACz$ circuits. The reason is two-fold: (1) The circuit size of the PRF family scales as sub-exponential in the dimension $d$ while the key-length scales as $\poly(d) = O(\polylog n)$ for $\ACz$ circuits and (2) this limits the bias $\beta(n)$ to be bounded from below by $O(1/\poly(d)) = O(1/\polylog n)$ while weak learners are expected to have a bias $\beta(n) \geq O(1/\poly(n))$}. We leave the question of whether a different choice of quantum-secure PRFs with a more efficient circuit implementation could improve the lower bound open for future work.

\vspace{-0.5em}
\section{Quantum-secure encryption vs. quantum learning}
\label{sec:quantum-safelearning}
  In this section, we prove a theorem relating the
  security of quantum public-key cryptosystems and the hardness of quantum
  learning the class of decryption functions in the cryptosystem.

Kearns and Valiant~\cite{KearnsV94} showed the  simple, yet powerful
connection between learning and public-key cryptography: consider a
\emph{secure} public-key cryptosystem and define a concept class $\Cc$ as the
set of all decryption functions (one for every public and private key). Assume that there was an \emph{efficient} PAC learner for $\Cc$. Given a set of
encryptions (which should be viewed as labeled examples), supposing an
efficient learner for $\Cc$ was able to learn an unknown decryption function
with non-negligible advantage, then this learner would break the cryptosystem.
This contradicts the assumption that the cryptosystem was secure and, in turn,
shows the infeasibility of learning $\Cc$. Kearns and Valiant used this
connection to give hardness results for learning polynomial-sized formulas,
deterministic finite automata, and constant-depth threshold circuits based on
the assumption that Blum integers are hard to factor or that it is hard to
invert the RSA function. Our main contribution in this section is to quantize the theorem by Kearns and Valiant~\cite{KearnsV94} and draw a relation between quantum learning and security of quantum cryptosystems.

	\begin{theorem}
		\label{thm:cryptotolearn}
		Consider a public-key cryptosystem which encrypts bits by $n$-bit strings.
    Suppose that the (randomized) encryption function is given by
    $e_{\Kpub,r}:\01\rightarrow \01^n$ (where $K=(\Kpub,\Kpriv)$ consists of the
    public and private keys and $r$ is a random string) and the decryption function
    is given by {$d_{\Kpriv}:\01^n\rightarrow \01$}. Let
    $\Cc=\{d_{\Kpriv}:\01^n\rightarrow \01\}_{\Kpriv}$ be a concept class. Let 
		\begin{align*}
		\eps =\Pr_{K,r}[d_{\Kpriv}(e_{\Kpub,r}(0))\neq 0 \text{ or }
		d_{\Kpriv}(e_{\Kpub,r}(1))\neq 1],
		\end{align*}
    be the {negligible} probability of decryption error (where the probability is taken over
    uniformly random $K=(\Kpub,\Kpriv)$ and $r$). If there exists a weak quantum-PAC
 learner for $\Cc$ in time $t(n)\geq n$ that satisfies
    $t(n)\biaslearning(n)=1/n^{\omega(1)}$, then there exists a $t(n)$-time quantum algorithm $\Adv$ that satisfies
		\begin{align}
		\label{eq:quantumdistinguisher}
		\Big\vert \Pr_{K,r^*,b^*}[\Adv(\Kpub,e_{\Kpub,r^*}(b^*))=b^*] -
		\Pr_{K,r^*, b^*}[\Adv(\Kpub,e_{\Kpub,r^*}(b^*))=\overline{b^*}]\Big\vert\geq \frac{1}{\poly(n)},
		\end{align}
		where the probability is taken over uniformly random $K,r^*, b^*$ over their respective domains.
	\end{theorem}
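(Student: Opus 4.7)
I would prove this by quantizing the simulation argument of Kearns and Valiant, using the quantum-example construction sketched for Result~\ref{result:encryption}. Fix a weak quantum-PAC learner $\A$ for $\Cc$ running in time $t(n)$ with bias $\biaslearning(n)$, and let $L\leq t(n)$ be its sample complexity. On input $(\Kpub, y^*)$ where $y^*=e_{\Kpub,r^*}(b^*)$, the adversary $\Adv$ first samples $L^3$ independent uniform pairs $(r_i,b_i)\in \mathcal{R}\times\01$ and computes $y_i = e_{\Kpub,r_i}(b_i)$ classically. Using these, $\Adv$ can coherently prepare $L$ copies of
\begin{align*}
|\psi\rangle = \frac{1}{\sqrt{L^3}}\sum_{i=1}^{L^3} |y_i\rangle|b_i\rangle,
\end{align*}
which is exactly a $\QEX(d_{\Kpriv},D)$ state for the empirical distribution $D$ uniform on $\{y_i\}$ (the labels match $d_{\Kpriv}(y_i)=b_i$ except with the negligible decryption-error probability $\eps$). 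The adversary then runs $\A$ on $|\psi\rangle^{\otimes L}$, obtains a hypothesis $h$, and outputs $h(y^*)$ as its guess for $b^*$.

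For the analysis, I would compare this actual run against a hypothetical ``ideal'' run on the state
\begin{align*}
|\psi'\rangle = \frac{1}{\sqrt{L^3+1}}\Bigl(|y^*\rangle|b^*\rangle + \sum_{i=1}^{L^3}|y_i\rangle|b_i\rangle\Bigr),
\end{align*}
which is a legitimate quantum example for the distribution $D'$ uniform on $\{y^*,y_1,\dots,y_{L^3}\}$. A direct computation gives $|\langle\psi|\psi'\rangle|^2 = L^3/(L^3+1)$, hence
\begin{align*}
\bigl\| |\psi\rangle\langle\psi|^{\otimes L} - |\psi'\rangle\langle\psi'|^{\otimes L}\bigr\|_{\mathrm{tr}} \leq \sqrt{1-|\langle\psi|\psi'\rangle|^{2L}} = O(1/L).
\end{align*}
By the data-processing inequality, the total variation distance between the distributions on the output hypothesis $h$ in the real run and the ideal run is at most $O(1/L)$, and therefore $|\Pr[h(y^*)=b^*] - \Pr[h'(y^*)=b^*]| = O(1/L)$, where $h'$ is the hypothesis produced in the ideal run.

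For the ideal run, the PAC guarantee applied to the distribution $D'$ yields a hypothesis $h'$ with $\Pr_{x\sim D'}[h'(x)=d_{\Kpriv}(x)]\geq 1/2+\biaslearning(n)$, and since the labels in $|\psi'\rangle$ agree with $d_{\Kpriv}$ up to decryption error $\eps$, we obtain $\Pr_{x\sim D'}[h'(x)=\text{label}(x)]\geq 1/2+\biaslearning(n)-\eps$. Now the key symmetry observation: the tuples $(y^*,b^*),(y_1,b_1),\dots,(y_{L^3},b_{L^3})$ are i.i.d.\ by construction, and the state $|\psi'\rangle$ together with $\A$'s behaviour is symmetric under permutations of these $L^3+1$ positions. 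Averaging over the randomness of $\Adv$, the probability that $h'(y^*)=b^*$ equals the average over $i\in\{*,1,\dots,L^3\}$ of $\Pr[h'(y_i)=b_i]$, which is at least $1/2+\biaslearning(n)-\eps$. Combining with the $O(1/L)$ bound of the previous paragraph gives $\Pr[h(y^*)=b^*]\geq 1/2+\biaslearning(n)-\eps-O(1/L)$, and under the parameter regime of the theorem this remains $1/2+1/\poly(n)$, establishing \eqref{eq:quantumdistinguisher}.

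The main obstacle is executing this symmetry argument cleanly while accounting for all sources of error, since the naive PAC guarantee only asserts average-case accuracy over $D'$, not per-point accuracy on the isolated challenge $y^*$. The symmetry between $(r^*,b^*)$ and the $(r_i,b_i)$ is what allows transferring average-case to per-point accuracy, but it must be verified that $\Adv$ really does treat the challenge indistinguishably from its own samples — in particular, that $\A$ receives no information pinpointing which of the $L^3+1$ positions was the challenge. A secondary technicality is the tight coordination of the three small quantities: the trace-distance slack $O(1/L)$, the learner's failure probability $\delta$, and the decryption error $\eps$; all must be dominated by $\biaslearning(n)/2$, which is precisely where the assumption on $t(n)$ and $\biaslearning(n)$ in the theorem statement gets used.
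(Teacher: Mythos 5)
Your proposal is correct and follows essentially the same route as the paper's proof: a ``real'' adversary that runs the learner on a uniform superposition over freshly sampled encryptions, compared against an ``ideal'' run whose training state additionally contains the challenge, with the trace-distance bound between the $L$-fold tensor powers controlling the difference and an exchangeability/averaging argument converting the distributional PAC guarantee into a per-point guarantee at the challenge. The only cosmetic differences are in the choice of training-set size ($L^3$ versus the paper's $L=2n^c t(n)$) and in how the symmetry step is phrased (you average over the $L^3+1$ positions directly, while the paper averages over random subsets $S$ conditioned on containing $e^*$); both yield the same bound.
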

	
	\begin{proof}
		For notational simplicity, for a fixed $\Kpub$, let $e_{r,b} :=
		\mathsf{Enc}_{\Kpub, r}(b)$ and  $\mathcal{E} = \{e_{r,b}:r,b\}$ be the set of all
		encryptions for the public key $\Kpub$. Our goal here is to devise a quantum algorithm
    $\Adv$ that satisfies the following: 
		on input $(\Kpub,e^*)$, where $e^* = e_{r^*,b^*}$ is uniformly randomly chosen from
		$\mathcal{E}$, $\Adv$ outputs $\hat{b}$ such that $\hat{b} = b^*$ 
		with  $1/\poly(n)$ advantage over a random guess. $\Adv$ would clearly serve
    as our quantum algorithm that satisfies \Cref{eq:quantumdistinguisher}, thereby proving the theorem.

		Our approach is to devise an $\Adv$ so that the remainder of our reasoning resembles the (classical) proof of this theorem
		by Klivans and Sherstov~\cite{KlivansS09} (who in turn re-derived the proof of Kearns and Valiant~\cite{KearnsV94} in their paper). However the
		quantization of their proof is not straightforward.  
		Classically, it is possible to 	generate samples from the uniform distribution supported on $\mathcal{E}$ as follows: pick $r$
		and $b$ uniformly at random from their respective domains and then output
    the uniform classical examples $(e_{r,b},b)$. Quantumly, it is not clear how
    a polynomial-time adversary~$\Adv$ could create a quantum~example 	
		\begin{align}
		\label{eq:qexampleadv}
		\frac{1}{\sqrt{|\mathcal{E}|}} \sum_{r,b}\ket{e_{r,b}}\ket{b}.
		\end{align}
		Notice that if an adversary could prepare the state in
		\Cref{eq:qexampleadv}, then it could directly pass $t(n)$ quantum examples to
		the quantum-PAC learner (assumed in the theorem statement) and a similar
		analysis as in~\cite{KlivansS09} gives a quantum algorithm (or
    distinguisher) that satisfies \Cref{eq:quantumdistinguisher}. However, one issue while preparing the state in \Cref{eq:qexampleadv} is the following: the straightforward way to
		prepare the state  \Cref{eq:qexampleadv} would be create the uniform
		superposition $\frac{1}{\sqrt{|\mathcal{E}|}}\ket{r}\ket{b}\ket{0}$ and then in
		superposition prepare $\frac{1}{\sqrt{|\mathcal{E}|}}\ket{r}\ket{b}\ket{e_{r,b}}$.
		If there was a way to erase the first register, then we would get the state
		in \Cref{eq:qexampleadv}. However, erasing this register in a black-box way is equivalent
		to solving the index-erasure
    problem~\cite{ambainis:indexerasure,rosmansis:index}, which needs exponential~time.

		In order to circumvent the issue mentioned above, we now show how to tweak
		the proof and construct an adversary $\Adv_{\real}$ that correctly decrypt a
    challenge encryption (i.e., $e_{r^*,b^*}$) with
		non-negligible probability. Before constructing $\Adv_{\real}$, we first
		present an adversary $\Adv_{\ideal}$ that is able to create specific samples
		(see \Cref{eq:quantumsampleforlearn1})
		that $\Adv_{\real}$ could not create and analyze its success probability. 
		Then we show that the success probability of $\Adv_{\real}$ and $\Adv_{\ideal}$
    in predicting the challenge encryption (i.e., $e_{r^*,b^*}$) are very close.

		Let $S_{\ideal}=\{(r_i,b_i)_i\}$ be a fixed set of size $L=2n^ct(n)$ that
    contains $(r^*,b^*)$ and let us assume $\Adv_{\ideal}$ has access to $S_{\ideal}$.
		Then $\Adv_{\ideal}$ can create the quantum example
		\begin{align}
		\label{eq:quantumsampleforlearn1}
		\ket{\psi_{\ideal}}=\frac{1}{\sqrt{L}}\sum_{(r, b) \in
			S_{\ideal}}\ket{e_{r,b},b}
		\end{align}
		in polynomial time. 
		Notice that since $\Adv_{\real}$ does not know the decryption of $e_{r^*,b^*}$, so it
		would most likely not be able to obtain such an $S_{\ideal}$ and consequently would not
		be able to create 
		$\ket{\psi_{\ideal}}$ efficiently. We now continue our analysis for 	$\Adv_{\ideal}$  assuming it has access to quantum examples $\ket{\psi_{\ideal}}$. 
		$\Adv_{\ideal}$ first passes $t(n)$ copies of $\ket{\psi_{\ideal}}$ to the quantum-PAC learning
		algorithm for $\Cc$ (which is assumed by the theorem statement). 
		Let $\Hi(\De)$ be the distribution
		of the hypothesis output by the quantum-PAC learning algorithm when the samples
		come from the distribution $\De$. For a set $S = \{(r_i,b_i)\}_i$, let
		$\De_S$ be the uniform distribution on the training set $\{(e_{r,b},
		b):(r,b) \in S\}$.
		Suppose the PAC learning algorithm outputs a hypothesis $h\in \supp(\Hi(\De_S))$, then $\A_{\ideal}$ outputs $h(e_{r^*,b^*})$ as its guess for $b^*$. We now analyze the probability that output of $\Adv_{\ideal}$ is in fact the
		correct decryption of~$e^*$:
		\begin{align}
				\begin{aligned}
				\label{eq:hypothesisunderD2}
					\Pr_{e^*\in U_{E}}[\Adv_{\ideal}(e^*) = b^*] 
		&=\Pr_{e^* \in U_{E}}\Exp_{S \subseteq	E}\left[ \Adv_{\ideal}(e^*) = b^*
		\Big| e^* \in S \wedge |S| = L\right] \\
		&=\Pr_{e^* \in U_{E}}\Exp_{S \subseteq	E}\left[ \Pr_{h \sim \Hi(\De_{S})}[h(e^*) = b^*]
		\Big| e^* \in S \wedge |S| = L\right] \\
		&=\Pr_{e^*\in U_{E}}\sum_{\substack{S \subseteq
				E}}
		\Pr_{\textbf{S} \subseteq E}[\textbf{S} = S | e^* \in S \wedge |S| = L]
		\Pr_{h \sim \Hi(\De_{S})}[h(e^*) = b^*] \\
		&=\sum_{S \subseteq E}\sum_{e^* \in S}
		\Pr_{\textbf{S} \subseteq E}[\textbf{S} = S |
		|S| =L] \Pr_{W \in U_{S}}[W = e^*]\Pr_{h \sim \Hi(\De_{S})}[h(e^*) = b^*] \\
		&=\sum_{S \subseteq E}
		\Pr_{\textbf{S} \subseteq E}[\textbf{S} = S |
		|S| =L]\Pr_{h \sim \Hi(\De_{S})} \Big[\Exp_{e^* \in S} [h(e^*) = b^*]\Big] \\
		&\geq 
		\sum_{\substack{S \subseteq E}}
		\Pr_{\textbf{S} \subseteq E}[\textbf{S} = S |  |S| = L]
		\left(\frac{1}{2} + \frac{1}{n^c}\right)= \frac{1}{2} + \frac{1}{n^c}.
		\end{aligned}
		\end{align}
		In the first equality, the expectation is taken over uniformly random $S\subseteq E$ conditioned on $|S|=L$ and $e^*\in S$, second equality describes the operations of $\Adv_{\mathsf{ideal}}$,
		the fourth equality holds because 
		$$
		\Pr_{e^*\in U_{E}}\sum_{\substack{S \subseteq E}} \Pr_{\textbf{S} \subseteq E}[\textbf{S} = S | e^* \in S \wedge |S| = L]=\frac{1}{|E|}\binom{E-1}{L-1}^{-1} =
		\binom{E}{L}^{-1}\frac{1}{L}=\sum_{S \subseteq E}\sum_{e^* \in S}
		\Pr_{\textbf{S} \subseteq E}[\textbf{S} = S | |S| = L] \Pr_{W \in U_{S}}[W = e^*]
		$$ 
		and the inequality comes from the guarantees of the quantum learning
    algorithm.\footnote{Note that we have assumed for simplicity that (i)    $d_{\Kpriv}(e_{\Kpub,r}(b)) = b$ with probability $1$ and (ii) the quantum
    learning algorithm outputs a hypothesis with probability $1$. If we take into
    account the decryption probability error $\eps(n)$ and that the learning
    algorithm succeeded with probability $1-\delta$, then the lower bound in
    \Cref{eq:hypothesisunderD2} would be $\frac{1}{2} + (1-\delta)n^{-c}-
    \eps(n)$ and the remaining part of the analysis remains the same.} Hence
    \Cref{eq:hypothesisunderD2} shows that the ideal quantum algorithm
    $\Adv_{\ideal}$ satisfies the following: on input a uniformly random
    $e^*=e_{r^*,b^*}$ from the set $E$, $\Adv_{\ideal}$ outputs $b^*$ with probability at least $1/2+~n^{-c}$.

		We now consider $\Adv_{\real}$. Let
		\begin{align}
		\label{eq:quantumsampleforlearn2}
		\ket{\psi_{\real}}=\frac{1}{\sqrt{L-1}}\sum_{(r, b) \in
			S_{\real}}\ket{e_{r,b},b},
		\end{align}
		where $S_{\real} = S_{\ideal} \setminus \{e^*\}$. The idea is that for $L=2n^ct(n)$, the distance between the two states $\ket{\psi_{\ideal}}$ and $\ket{\psi_{\real}}$ is small. Also notice that
		the state in \Cref{eq:quantumsampleforlearn2} can be constructed efficiently by $\Adv_{\real}$:
		pick $L-1$ many $(r_i, b_i)$-pairs uniformly at random from
		their respective domains and set $S_{\real}=\{(r_i,b_i)_i\}$. From these classical values, the learner can create $t(n)$ many copies
		of the state $\ket{\psi_{\real}}$. We now show that a hypothesis sampled from $\Hi(\De_{S_{\real}})$ behaves ``almost similar" to a hypothesis sampled from $\Hi(\De_{S_{\ideal}})$.	For every $y\in \01^n$ and $b\in \01$, observe that
		\begin{align}
		\label{eq:realproblowerbound}
		\begin{aligned}
		\Pr_{h \sim \Hi(\De_{S_{\real}})}[h(y)=b] &= \Pr_{h \sim \Hi(\De_{S_{\real}})}[h(y)=b] -
		\Pr_{h \sim \Hi(\De_{S_{\ideal}})}[h(y)=b] +
		\Pr_{h \sim \Hi(\De_{S_{\ideal}})}[h(y)=b] \\
		&		\geq
		\Pr_{h \sim \Hi(\De_{S_{\ideal}})}[h(y)=b] -
		O\left(\frac{t(n)}{L}\right).
		\end{aligned}
		\end{align}

    The inequality holds since the statistical distance of the output of an
    algorithm $A$ in inputs $\ket{\phi}$ and $\ket{\phi'}$ is upper-bounded by
    $\norm{\ket{\phi} - \ket{\phi'}}^2$, and we have that
		\begin{align}
		\norm{\ket{\psi_{\ideal}}^{\otimes
				t(n)} - \ket{\psi_{\real}}^{\otimes t(n)}}^2 = t(n)\norm{\ket{\psi_{\ideal}} -
			\ket{\psi_{\real}}}^2 = O\left(\frac{t(n)}{L}\right)
		\end{align}
		by the definition of $\ket{\psi_{\ideal}}$ and $\ket{\psi_{\real}}$ in \Cref{eq:quantumsampleforlearn1} and~\ref{eq:quantumsampleforlearn2} respectively. We now analyze the probability that $\Adv_{\real}$ outputs the right decryption $b^*$, when given a uniformly random sample $e^*\in E$:	
		\begin{align}
		\label{eq:lowerboundsuccprobofreal}
		\begin{aligned}
	\Pr_{e^*\in U_{E}}[\Adv_{\real}(e^*) = b^*]	
		&=\Pr_{e^* \in U_{E}}\Exp_{S_{\real} \subseteq	E}\left[ \Adv_{\real}(e^*) = b^*
		\Big|  |S_{\real}| = L-1\right] \\
		&=\Pr_{e^* \in U_{E}}\Exp_{S_{\real} \subseteq	E}\left[ \Pr_{h \sim \Hi(\De_{S_{\real}})}[h(e^*) = b^*]
		\Big| |S_{\real}| = L=1\right] \\
		&\geq \Pr_{e^* \in U_{E}}\Exp_{S_{\real} \subseteq	E}\left[ \Pr_{h \sim
			\Hi(\De_{S_{\ideal}})}[h(e^*) = b^*]
		\Big| |S_{\real}| = L-1\right]- O\left(\frac{t(n)}{L}\right)\\
		&\geq \frac{1}{2}+\frac{1}{n^c}- O\left(\frac{t(n)}{L}\right)\geq \frac{1}{2}- O\left(\frac{1}{n^c}\right).
		\end{aligned}
		\end{align}
		
		In the equations above, recall that $S_{\ideal} = S_{\real} \cup \{(e^*,b^*)\}$, the first inequality used \Cref{eq:realproblowerbound}, the second inequality used the same analysis as in \Cref{eq:hypothesisunderD2} and the last inequality used $L=2n^ct(n)$. This
		shows that $\Adv_{\real}$, on input a uniformly random $e^*=\mathsf{Enc}_{\Kpub,
			r^*}(b^*)$ from the set $\mathcal{E}$, outputs~$b^*$ with probability at least
		$1/2+O(n^{-c})$.

		It remains to show that $\Adv$ satisfies \Cref{eq:quantumdistinguisher}, which is equivalent to
		\begin{align}\label{eq:quantumdistinguisher2}
		\vert 2\Pr_{K,r^*,
			b^*}[\Adv(\Kpub,e_{\Kpub,r^*}(b^*))=b^*]-1\vert.
		\end{align}
		Letting $\Adv$ be $\Adv_{\real}$ in the calculation above, we have 
		$$
		\vert 2\Pr_{K,r^*,
			b^*}[\Adv(\Kpub,e_{\Kpub,r^*}(b^*))=b^*]-1\vert\geq O\left(\frac{1}{n^c}\right)=\frac{1}{\poly(n)},
		$$
		where the inequality used \Cref{eq:lowerboundsuccprobofreal}. This concludes the proof of the theorem.
	\end{proof}
	\vspace{-1.5em}

\subsection{Hardness of learning \torp{$\TCz_2$}{TC0[2]} }
We show here a conditional hardness of quantum learning depth-$2$ threshold circuits.

\begin{restatable}{theorem}{learningtcztwo}
	\label{thm:hardnessoftc0two}
	Let $n \in \natural$. Assuming that the $n$-dimensional
	$\LWE$ problem is hard for a $\poly(n)$-time quantum computer, there is
	no $\poly(n)$-time weak quantum-PAC learner for the concept
	class of $\TCz_2$ circuits on $\widetilde{O}(n)$-bit inputs.
\end{restatable}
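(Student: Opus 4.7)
The plan is to apply Theorem~\ref{thm:cryptotolearn} to Regev's public-key cryptosystem $\LWEPKE$, exploiting the fact already noted in the excerpt (via Lemma~\ref{lem:SVPindeg2ptf}) that its decryption functions are light degree-$2$ PTFs. First I would instantiate $\LWEPKE_{n,q,m}$ with $q = \poly(n)$ and $m = (1+\eps)(n+1)\log q = \widetilde{O}(n)$ for a small constant $\eps > 0$. By Theorem~\ref{thm:regevsvphardnesscrypto}, this choice gives a negligible decryption error $2^{-\omega(n^2/m)}$ and reduces the semantic security of the scheme to a polynomial-time quantum distinguisher for $\LWE_{n,q,\chi,m}$. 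Crucially, a ciphertext $(a,b) \in \ints_q^n \times \ints_q$ is encoded by $(n+1)\log q = \widetilde{O}(n)$ bits, which matches the input length claimed in the statement.

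Next I would verify that the entire concept class $\Cc = \{d_{\Kpriv}\}_{\Kpriv}$ lies inside $\TCz_2$ on $\widetilde{O}(n)$-bit inputs. By Lemma~\ref{lem:SVPindeg2ptf} each $d_{\Kpriv}$ equals $\sgn(p(x))$ for some degree-$2$ integer polynomial $p$ whose coefficients have total absolute value $\poly(n)$. Each monomial $x_i x_j$ is computed by a single $\mathsf{AND}$ gate at depth $1$, and the weighted threshold $\sgn(\sum a_{ij} x_i x_j - \theta)$ is realised by an $\MAJ$ gate at depth $2$ after replicating each monomial wire in proportion to $|a_{ij}|$ (and handling signs by negating inputs), exactly as in the argument of Claim~\ref{claim:halfspaceintc0}. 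Polynomially bounded weights ensure the replication blow-up is polynomial, so the circuit is in $\TCz_2$.

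To finish, I would argue by contradiction: suppose there exists a $\poly(n)$-time weak quantum-PAC learner $\A$ for $\TCz_2$ circuits on $\widetilde{O}(n)$-bit inputs, with bias $\beta(n) = 1/\poly(n)$. Since $\Cc \subseteq \TCz_2$ on inputs of this length, $\A$ also weakly quantum-PAC learns $\Cc$ within the same time bound. Applying Theorem~\ref{thm:cryptotolearn} with this learner (the negligibility of the decryption error and the polynomial scaling of $t(n)$ and $\beta(n)$ ensure all hypotheses of that theorem are met) produces a $\poly(n)$-time quantum adversary $\Adv$ that outputs the underlying plaintext bit with non-negligible advantage over a random guess. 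By Theorem~\ref{thm:regevsvphardnesscrypto}, such an $\Adv$ yields a $\poly(n)$-time quantum distinguisher for $\LWE_{n,q,\chi,m}$, contradicting our assumption.

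The main obstacle I anticipate is bookkeeping around the two notions of ``size'': the security parameter $n$ (the $\LWE$ dimension) versus the ciphertext length $\widetilde{O}(n)$ that serves as the input length of the $\TCz_2$ circuits. One must verify that the bias and running-time regime demanded by Theorem~\ref{thm:cryptotolearn} when applied at input length $\widetilde{O}(n)$ still corresponds to a $\poly(n)$-time distinguisher for $\LWE$ at dimension $n$, and that the parameter choices in Theorem~\ref{thm:regevsvphardnesscrypto} (in particular $m \geq (1+\eps)(n+1)\log q$ with $q = \poly(n)$) are simultaneously compatible with negligible decryption error and with lightness of the degree-$2$ PTF. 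Everything else is a routine composition of results already collected in the excerpt.
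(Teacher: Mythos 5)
Your overall skeleton (instantiate Regev's $\LWEPKE$, invoke \Cref{thm:cryptotolearn} on the class of decryption functions, and convert the resulting adversary into an $\LWE$ distinguisher via \Cref{thm:regevsvphardnesscrypto}) is the same as the paper's, and your bookkeeping of the security parameter $n$ versus the ciphertext length $\widetilde{O}(n)$ is fine. The gap is in the step where you place the decryption functions inside $\TCz_2$. You assert that each $d_{\Kpriv}$ equals $\sgn(p(x))$ for a \emph{single} light degree-$2$ polynomial $p$, and then build a depth-$2$ circuit with $\AND$ gates for the monomials and one $\MAJ$ gate on top. But \Cref{lem:SVPindeg2ptf} (Klivans--Sherstov, Lemma~4.3) does not give a single degree-$2$ PTF: the decryption condition is ``$b - a^T\mathbf{s} \bmod q$ lies near a multiple of $q$,'' and since the unreduced linear form ranges over polynomially many multiples of $q$, this is membership in a union of $\poly(n)$ disjoint intervals. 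A single degree-$2$ polynomial cannot change sign that many times; what one actually gets is an \emph{intersection} of polynomially many light degree-$2$ PTFs (one per complementary gap). Your direct construction applied to that object gives $\AND$s, then $\MAJ$s, then a top gate --- depth $3$, not depth $2$.

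The paper avoids this by never claiming the decryption functions themselves are depth-$2$ circuits. Instead it takes the hard concept class to be intersections of $n^{\varepsilon}$ light \emph{halfspaces}, which genuinely sit in $\TCz_2$ by \Cref{claim:halfspaceintc0} ($\MAJ$ gates at level one, a top $\MAJ$ simulating the $\AND$ at level two --- no $\AND$ layer is needed because halfspaces are degree $1$). It then uses \Cref{lem:halfspacesimpliesPTF} (the Klivans--Sherstov variable-substitution reduction, replacing each monomial $x_i x_j$ by a fresh variable) to show that a weak learner for intersections of halfspaces yields a weak learner for intersections of light degree-$2$ PTFs, which covers the decryption functions and lets \Cref{thm:cryptotolearn} fire. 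Your proposal omits this reduction entirely, and without it the argument only rules out learners for $\TCz_3$, which is weaker than the stated theorem.
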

\vspace{-0.8em}

The main point of difference between the proof of
\Cref{thm:hardnessoftc0two} and the (classical) result of Klivans and
Sherstov~\cite{KlivansS09} is in the connection between \emph{quantum}
learning and public-key \emph{quantum} cryptosystems which we already
discussed in the previous section. The remaining part of our proof follows
their structure very closely. For brevity, we state a simple lemma from their paper without a proof. 

\begin{lemma}\cite[Lemma~4.1]{KlivansS09}
	\label{lem:halfspacesimpliesPTF}
	Fix $\varepsilon>0$ to be a constant.	Assume that the intersection of $n^\varepsilon$ light half-spaces is weakly quantum-PAC-learnable. Then for every constant $c>0$, the intersection of $n^c$ light degree-$2$ PTFs are weakly quantum-PAC learnable.\footnote{In~\cite[Lemma~4.1]{KlivansS09}, they state the classical version of this lemma. The exact same proof goes through when we replace classical learners by quantum learners.}
\end{lemma}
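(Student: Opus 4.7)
The plan is to carry out the standard linearization reduction from intersections of degree-$2$ polynomial threshold functions to intersections of halfspaces, and to verify that it goes through quantumly by converting quantum examples coherently under the linearization map. Concretely, for $x\in\{0,1\}^n$ define
$\phi(x) := \bigl(x_1,\ldots,x_n,\ (x_ix_j)_{1\le i<j\le n}\bigr)\in\{0,1\}^m$
with $m=n+\binom{n}{2}=O(n^2)$. Any degree-$2$ polynomial $p(x)=\sum_{i\le j}a_{ij}x_ix_j+\sum_i b_ix_i+a_0$ with integer coefficients becomes a degree-$1$ affine form $\tilde p(y)=\sum_{i} b_i y_i + \sum_{i<j} a_{ij} y_{\{i,j\}}+a_0$ with $\tilde p(\phi(x))=p(x)$, and the sum of absolute values of coefficients is preserved; hence a \emph{light} degree-$2$ PTF on $n$ bits becomes a \emph{light} halfspace on $m$ bits, and an intersection of $n^c$ light degree-$2$ PTFs becomes an intersection of $n^c$ light halfspaces on $m$ bits. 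To absorb an arbitrary constant $c>0$ into the assumed bound $n^{\varepsilon}$, pad the dimension up to $N:=\max\{m,\lceil n^{c/\varepsilon}\rceil\}$ by appending zero coordinates; the number of halfspaces, $n^c$, is then at most $N^{\varepsilon}$ and $N=\poly(n)$, matching exactly the hypothesis of the lemma.

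The heart of the argument is to show that the reduction preserves the quantum-example oracle. Given a quantum example $\ket{\psi_x}=\sum_{x\in\{0,1\}^n}\sqrt{D(x)}\,\ket{x}\ket{c(x)}$ for the target intersection $c$ of degree-$2$ PTFs, I would (i) compute $\phi$ coherently into a fresh register to obtain $\sum_x\sqrt{D(x)}\,\ket{x}\ket{c(x)}\ket{\phi(x)}$; (ii) since $\phi$ embeds $x$ as its first $n$ coordinates, read those coordinates out of the $\phi(x)$-register and XOR them back into the $x$-register, uncomputing it to $\ket{0^n}$; (iii) discard the zero register, and pad $\phi(x)$ with the chosen zero coordinates, obtaining
\[
\ket{\psi_y}=\sum_{y\in\{0,1\}^N}\sqrt{D'(y)}\,\ket{y}\ket{c'(y)},\qquad D'(\phi(x)):=D(x),\ \ c'(\phi(x)):=c(x),
\]
with $D'$ supported on $\phi(\{0,1\}^n)$ and $c'$ defined arbitrarily off the support. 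This is exactly a quantum example for a concept in the intersection-of-$N^{\varepsilon}$-light-halfspaces class under the distribution $D'$, and the conversion is $\poly(n)$-time and injective, so it does \emph{not} require solving anything like index-erasure.

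Now feed polynomially many copies of $\ket{\psi_y}$ to the assumed weak quantum-PAC learner for intersections of $N^{\varepsilon}$ light halfspaces. Since quantum-PAC learning is distribution-independent, the learner must succeed against the induced $D'$, outputting (in $\poly(N)=\poly(n)$ time, with probability at least $2/3$) a hypothesis $h'$ satisfying $\Pr_{y\sim D'}[h'(y)\ne c'(y)]\le 1/2-1/\poly(N)$. Return $h(x):=h'(\phi(x))$ as the final hypothesis; this is $\poly(n)$-time computable since $\phi$ is. Because the map $x\mapsto \phi(x)$ is injective and the pushforward of $D$ under $\phi$ is exactly $D'$, we get $\Pr_{x\sim D}[h(x)\ne c(x)]=\Pr_{y\sim D'}[h'(y)\ne c'(y)]\le 1/2-1/\poly(n)$, which is exactly weak quantum-PAC learning for intersections of $n^c$ light degree-$2$ PTFs.

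The only subtle step is the coherent transformation from $\ket{\psi_x}$ to $\ket{\psi_y}$; everything else is the classical Klivans--Sherstov reduction (and a cosmetic padding argument). The obstacle is made harmless by the deliberate choice to include $x$ itself inside $\phi(x)$, which makes $\phi$ injective with a trivially and efficiently invertible ``forgetful'' inverse on its image, so uncomputation is clean and no index-erasure-style problem arises.
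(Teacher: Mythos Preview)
Your proposal is correct and follows exactly the approach the paper intends: the paper does not give its own proof but simply cites \cite[Lemma~4.1]{KlivansS09} and remarks in a footnote that ``the exact same proof goes through when we replace classical learners by quantum learners.'' You have spelled out that classical reduction (linearization $\phi(x)=(x_i,\,x_ix_j)$ plus padding so that $n^c\le N^{\varepsilon}$) and supplied the one detail the paper leaves implicit, namely that the quantum example can be converted coherently because $\phi$ is injective with $x$ recoverable from $\phi(x)$, so uncomputing the $x$-register is clean and no index-erasure issue arises.
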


We are now ready to prove our main theorem.

\begin{proofof}{\Cref{thm:hardnessoftc0two}}
	In order to prove the hardness of $\TCz_2$, we first consider a subclass of $\TCz_2$, intersection of polynomially-many half-spaces, and prove the conditional quantum hardness of learning this subclass.

	Fix $\varepsilon>0$ to be a constant. Let $\Cc$ be the concept class of
	$n^\varepsilon$ light half-spaces and $\Cc'$ be the concept class of
	degree-$2$ light PTFs. Let us assume that $\Cc$ is quantum-PAC learnable.
	Then by \Cref{lem:halfspacesimpliesPTF}, the assumed learnability of
	$\Cc$ implies the quantum-PAC learnability of $\Cc'$. By
	\Cref{lem:SVPindeg2ptf}, the decryption function of the $\LWE$-cryptosystem
	is in $\Cc'$. Using \Cref{thm:cryptotolearn}, we now relate the
	quantum-PAC learnability of $\Cc'$ to the $\LWE$-cryptosystem as follows:
	suppose that there exists a quantum-PAC learning algorithm for $\Cc'$, then \Cref{thm:cryptotolearn} implies the existence of a distinguisher that can differentiate the encryptions of $0$ and $1$ in the $\LWE$-cryptosystem. As a consequence, by
	\Cref{thm:regevsvphardnesscrypto} this would result in a quantum
	polynomial-time distinguisher for~\LWE{}.
	Using Claim~\ref{claim:halfspaceintc0}, we have that $\Cc\subseteq \TCz_2$. Putting this together with our observation in the previous paragraph about quantum PAC learnability of $\Cc$, we get the theorem statement.
\end{proofof}%

\newcommand{\etalchar}[1]{$^{#1}$}

\appendix

\section{Time-efficient learning of \torp{$\NCz$}{NC0}}
\label{sec:nczeasy}
We prove the following theorem, which is straightforward from known results in quantum learning theory. Since we didn't find an explicit reference to the proof of such an explicit theorem in literature, we make it formal here.

\begin{theorem}
	\label{thm:learningdepthdcircuits}
	Let $c>0$ be a constant. Let $\Cc$ be the concept class of all Boolean functions on $n$ bits which are computable by polynomial number of gates with at most $2$ inputs and depth at most $d$. Then $\Cc$ can be learned under the uniform distribution with error at most $\varepsilon$, using at most $\widetilde{O}(2^d/\varepsilon)$ quantum examples, $O(2^{2^d})$ classical examples	and time $\widetilde{O}(n2^d/\varepsilon+2^{2^d}\log (1/\varepsilon))$.
\end{theorem}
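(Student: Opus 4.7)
The plan is to leverage the structural fact that every function in $\Cc$ is a junta on few variables, and then apply a Fourier-sampling based junta learner in the spirit of At\i c\i{} and Servedio~\cite{atici&servedio:testing}. The key observation is that any $c \in \Cc$ is computed by a depth-$d$ circuit of fan-in at most $2$, so by unrolling the circuit the output depends on at most $k := 2^d$ input variables. Thus every $c \in \Cc$ is a $k$-junta; I will write $V^*(c)$ for its set of relevant variables.

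The algorithm will proceed in two phases. In the first phase, I will perform Fourier sampling on the uniform quantum example $\frac{1}{\sqrt{2^n}}\sum_x \ket{x,c(x)}$: a single sample, obtained via a standard phase-kickback plus Hadamard procedure in $O(n)$ time, returns a subset $S$ distributed as $\widehat{c}(S)^2$. Since $c$ is a $k$-junta, every such $S$ lies in $V^*(c)$ and $\Pr[i \in S] = \Inf_i(c)$. I will collect $T = \wt{O}(2^d/\eps)$ samples and let $V$ be the union of the observed subsets; a union bound then shows that with high probability $V$ contains every variable $i$ with $\Inf_i(c) \geq \eps/(2\cdot 2^d)$.

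In the second phase, I will determine the truth table of $c$ on the coordinates in $V$ using classical examples: draw $O(2^{2^d})$ uniform labelled examples and, for each pattern $y \in \{0,1\}^{|V|}$, set $h(x)$ to be the majority label among observed examples $x'$ with $x'|_V = x|_V$ (defaulting arbitrarily on unseen patterns). Since omitting a relevant variable $i$ costs at most $\Inf_i(c)$ in error, phase one ensures
\begin{equation*}
\Pr_{x \sim \U}[h(x) \neq c(x)] \;\leq\; \sum_{i \in V^*(c)\setminus V} \Inf_i(c) \;\leq\; 2^d \cdot \frac{\eps}{2 \cdot 2^d} = \frac{\eps}{2},
\end{equation*}
which I will boost to $\eps$ by a final Chernoff-style confidence amplification. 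The total runtime splits as $\wt{O}(n\cdot 2^d/\eps)$ for Fourier sampling plus $\wt{O}(2^{2^d}\log(1/\eps))$ for tabulating $h$ from the classical data, matching the claimed bound.

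The main subtlety I anticipate is in phase one: a $k$-junta can have relevant variables of arbitrarily tiny influence, so Fourier sampling cannot recover \emph{all} of $V^*(c)$ within the claimed budget. The resolution is that only variables with influence $\geq \eps/2^{d+1}$ must be recovered, since the at-most $2^d$ variables of smaller influence together contribute error at most $\eps/2$ even if entirely missed; and exactly these ``needed'' variables are hit with probability $\geq \eps/2^{d+1}$ per Fourier sample, so $T = \wt{O}(2^d/\eps)$ samples suffice by coupon collector. Cleanly decoupling ``recovered sufficiently influential variables'' from ``good final hypothesis'' is where the care is required.
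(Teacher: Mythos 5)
Your proposal is correct and follows essentially the same route as the paper: the paper's entire proof is the observation that a depth-$d$, fan-in-$2$ circuit depends on at most $2^d$ inputs, so every concept is a $2^d$-junta, followed by a black-box invocation of the At\i c\i--Servedio junta learner (Theorem~\ref{thm:aticiservedio}) with $k=2^d$. The two phases you describe (Fourier sampling to locate the influential variables, then classical examples to tabulate the restriction) are precisely the internals of that cited algorithm, so you have simply re-derived the black box rather than citing it.
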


The proof directly follows from {At\i c\i} and Servedio's theorem on learning $k$-juntas. %

	\begin{theorem}[{\cite{atici&servedio:testing}}]
		\label{thm:aticiservedio}
		There exists a quantum algorithm for learning $k$-juntas under the uniform distribution that uses ${O}((k\log k)/\varepsilon)$ uniform quantum examples, $O(2^{k})$ uniform classical examples, and $\widetilde{O}(nk/\varepsilon+2^{k}\log (1/\varepsilon))$ time.
	\end{theorem}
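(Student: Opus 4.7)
The plan is to rely on quantum Fourier sampling to locate the relevant variables of the unknown $k$-junta $f:\01^n\to\01$, and then reconstruct its truth table on those variables by classical majority voting. Throughout I use the $\pmset{}$-valued encoding $f'(x)=(-1)^{f(x)}$.

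From a single uniform quantum example $\frac{1}{\sqrt{2^n}}\sum_x\ket{x,f(x)}$ one can efficiently prepare $\sum_S\widehat{f'}(S)\ket{S}$: apply a controlled phase that maps $\ket{f(x)}$ to $(-1)^{f(x)}\ket{f(x)}$, then $H^{\otimes n}$ on the $x$-register; a computational-basis measurement yields $S\subseteq[n]$ with probability $\widehat{f'}(S)^2$. The structural fact that drives the argument is that if the relevant set of $f$ is $J\subseteq[n]$ with $|J|=k$, then $\widehat{f'}(S)=0$ for every $S\not\subseteq J$, so each Fourier sample is automatically a subset of $J$; moreover, the marginal probability that variable $i$ appears in a single sample equals $\Inf_i(f')$.

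The algorithm has two stages. \emph{Identification}: set $\tau=\Theta(\varepsilon/k)$ and call a relevant variable $\tau$-heavy if $\Inf_i(f')\geq\tau$; draw $T=O((k\log k)/\varepsilon)$ Fourier samples $S_1,\dots,S_T$ and let $\widehat{J}=\bigcup_t S_t$. A coupon-collector bound (each $\tau$-heavy variable is missed in $T$ samples with probability at most $(1-\tau)^T\leq e^{-T\tau}$, then union bound over at most $k$ relevant variables) shows that with probability $\geq 2/3$ every $\tau$-heavy variable lies in $\widehat{J}$, while $\widehat{J}\subseteq J$ automatically forces $|\widehat{J}|\leq k$. \emph{Reconstruction}: draw $O(2^k\log(1/\varepsilon))$ classical uniform examples and, for each pattern $y\in\01^{|\widehat{J}|}$, define $h(y)$ by a majority vote of $f(x)$ over the observed examples with $x|_{\widehat{J}}=y$; output the hypothesis $x\mapsto h(x|_{\widehat{J}})$. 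A Chernoff-union bound shows that $h$ agrees with the optimal Boolean $\widehat{J}$-junta at every pattern with probability $\geq 2/3$.

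For the error analysis, a standard rounding of the $L_2$-best $\widehat{J}$-junta approximator shows that the optimal Boolean $g$ depending only on $\widehat{J}$ satisfies $\Pr_x[f\neq g]\leq\sum_{S\not\subseteq\widehat{J}}\widehat{f'}(S)^2\leq\sum_{i\in J\setminus\widehat{J}}\Inf_i(f')\leq k\tau=O(\varepsilon)$, since any missed relevant variable has influence below $\tau$ and there are at most $k$ of them. The time bound follows since each Fourier sample takes $\widetilde{O}(n)$ operations and the classical stage processes $\widetilde{O}(2^k\log(1/\varepsilon))$ examples. The most delicate point will be calibrating $\tau$ so that the coupon-collector bound and the cumulative-influence error bound simultaneously yield $O((k\log k)/\varepsilon)$ samples and error $\varepsilon$; a secondary subtlety is the $L_2$-to-Boolean conversion, which is handled by rounding the $L_2$-best approximator and using $\|f'-g'\|_2^2=4\Pr[f\neq g]$ for $\pmset{}$-valued $f',g'$.
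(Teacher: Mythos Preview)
The paper does not prove this theorem; it merely quotes it from At\i c\i--Servedio and uses it as a black box to derive Theorem~\ref{thm:learningdepthdcircuits}. Your sketch is essentially a reconstruction of the At\i c\i--Servedio algorithm itself, so you are supplying the cited proof rather than comparing against anything the paper does.

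Your identification stage and the influence-based error bound $\Pr[f\neq g]\le\sum_{S\not\subseteq\widehat{J}}\widehat{f'}(S)^2\le\sum_{i\in J\setminus\widehat{J}}\Inf_i(f')\le k\tau$ are correct. One step in the reconstruction stage is stated too strongly, however: the claim that ``a Chernoff--union bound shows $h$ agrees with the optimal Boolean $\widehat{J}$-junta at \emph{every} pattern'' fails as written, because at a pattern $y$ where the conditional bias $|1-2p_y|$ is tiny, no Chernoff bound with only $O(\log(1/\varepsilon))$ samples per pattern forces $h(y)=g(y)$. The standard repair is to note that such patterns contribute negligibly to the excess error
\[
\Pr[h\neq f]-\Pr[g\neq f]=\sum_{y}2^{-|\widehat{J}|}\,\indic{h(y)\neq g(y)}\,|1-2p_y|,
\]
so one splits into patterns with bias below a threshold (small contribution regardless of $h(y)$) and patterns with bias above it (where Chernoff does apply). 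This is routine but should replace the ``agrees at every pattern'' claim. A minor bookkeeping point: your classical sample count $O(2^k\log(1/\varepsilon))$ overshoots the stated $O(2^k)$, though it is consistent with the stated time bound $\widetilde{O}(2^k\log(1/\varepsilon))$.
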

\vspace{-0.8em}

Now we proceed with proving our theorem. 

\begin{proofof}{\Cref{thm:learningdepthdcircuits}}%
Suppose that $f:\01^n\rightarrow \01$ is computed by a circuit with depth at most $d$. Since the gates of the circuit have fan-in at most $2$, clearly the output of the circuit for $f$ (on an arbitrary input $x$) depends only on $2^d$ bits of $x$. Hence every $f\in \Cc$ is a $2^d$-junta. Using \Cref{thm:aticiservedio}, it follows that the concept class of $2^d$-juntas can be learned using $\widetilde{O}(2^d/\varepsilon)$ quantum examples, $O(2^{2^d})$ classical examples and time $\widetilde{O}(n2^d/\varepsilon+2^{2^d}\log (1/\varepsilon))$.
\end{proofof}

\begin{corollary}
	Let $c>0$ be a constant.	Let $\Cc$ be the concept class of $n$-bit functions which are computable by polynomial number of gates with fan-in at most $2$, depth at most $\log (c\log n)$. Then $\Cc$ can be learned using at most $\widetilde{O}(n/\varepsilon)$ quantum examples, $O(n^c)$ classical examples and time $\widetilde{O}(n^c/\varepsilon)$.
\end{corollary}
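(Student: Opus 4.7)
The claim is an immediate specialization of Theorem A.1 to the specific depth $d = \log(c \log n)$, so the plan is essentially a calculation: identify the junta parameter $k$ corresponding to this depth and substitute into the three complexity bounds of Theorem A.1.

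First, I would reuse the central observation from the proof of Theorem A.1: any Boolean function $f:\{0,1\}^n \to \{0,1\}$ computed by a circuit of fan-in-$2$ gates of depth at most $d$ is a $k$-junta with $k \leq 2^d$, since a straightforward induction on depth shows that the output of a subcircuit of depth $d$ depends on at most $2^d$ input bits. Specializing to $d = \log(c \log n)$ gives $2^d = c \log n$, so every $f \in \Cc$ is an $O(\log n)$-junta. This is the only conceptual step, and it is essentially already contained in the proof of Theorem A.1.

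Next I would plug $2^d = c \log n$ into each bound of Theorem A.1 and simplify. For quantum examples, $\widetilde{O}(2^d/\varepsilon) = \widetilde{O}(c \log n/\varepsilon) = \widetilde{O}(n/\varepsilon)$, which gives the first claimed bound (trivially, since $\log n \leq n$). For classical examples, $O(2^{2^d}) = O(2^{c \log n}) = O(n^c)$. For the running time, $\widetilde{O}(n \cdot 2^d/\varepsilon + 2^{2^d}\log(1/\varepsilon)) = \widetilde{O}(n \log n/\varepsilon + n^c \log(1/\varepsilon))$, and since $c$ is a positive constant the dominating term (absorbing the $\log(1/\varepsilon)$ factor into the $\widetilde{O}$) yields $\widetilde{O}(n^c/\varepsilon)$.

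Since Theorem A.1 is already proved (by invoking the At{\i}c{\i}--Servedio junta-learning algorithm of Theorem A.2 with $k = 2^d$), there is no genuine obstacle here; the entire content of the corollary is the arithmetic substitution $d \mapsto \log(c \log n)$ and observing that $2^d = c \log n$ is logarithmic in $n$, making the junta size small enough for the junta learner to run in the claimed $\widetilde{O}(n^c/\varepsilon)$ time.
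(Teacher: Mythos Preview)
Your proposal is correct and matches the paper's proof exactly: the paper also states that the corollary follows immediately from Theorem~\ref{thm:learningdepthdcircuits} by plugging in $d=\log(c\log n)$, and your arithmetic verifying the three complexity bounds is the entire content of that substitution.
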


\begin{proof}
	The proof immediately follows from \Cref{thm:learningdepthdcircuits} by plugging in $d=\log(c\log n)$.
\end{proof}
\vspace{-0.8em}

Observe that, since $\NCz$ is the class of circuits on $n$-bits with depth $O(1)$, $\NCz$ is contained in the concept class $\Cc$ considered in the corollary above.

\end{document}